\documentclass{llncs}
\usepackage{a4wide}

\usepackage{xcolor,pgf,tikz,pgflibraryarrows,pgffor,pgflibrarysnakes}
\usetikzlibrary{automata,shapes,snakes,backgrounds,decorations.pathreplacing}
\usepackage{bbold}
\usepackage{lscape}

%\usepackage{mathrsfs}
%\usetikzlibrary{arrows}

\usepackage{amsmath}

\usepackage{amssymb,stmaryrd}
\usepackage{algorithm}
\usepackage[noend]{algorithmic}
\usepackage{comment}
\usepackage{times}
\usepackage{url}
\usepackage{wrapfig}
\usepackage{placeins}
\usepackage{enumerate}
\usepackage{algorithm}
\usepackage{algorithmic}
\setlength{\overfullrule}{10pt} % Bad boxes
\usepackage{multirow}

\pagestyle{plain}

\title{On the complexity of heterogeneous multidimensional \\quantitative games\thanks{This work has been partly supported by ERC Starting Grant (279499: inVEST)  and European project Cassting (FP7-ICT-601148).}}

\author{V\'eronique Bruy\`ere$^1$ \and Quentin Hautem$^1$\thanks{Author supported by FRIA fellowship} \and Jean-Fran\c{c}ois Raskin$^2$}	 
\institute{$^1$Universit\'e de Mons$\quad$ $^2$Universit\'e Libre de Bruxelles}

%Couleurs

%Ensembles de nombres

\newcommand{\N}{\mathbb{N}}
\newcommand{\Z}{\mathbb{Z}}
\newcommand{\Q}{\mathbb{Q}}

%Operations ensemblistes
\newcommand{\ssetminus}{\! \setminus \!}

%Restriction

%Jeux
  %joueur
\newcommand{\G}{G}   %jeu
\newcommand{\Gdev}{(V_1,V_2, E, w)}   %jeu développé
\newcommand{\n}{n}   %dimension du jeu
\newcommand{\Plays}{\mathsf{Plays}}
\newcommand{\rhofactor}[2]{\rho_{[{#1},{#2}]}}

\newcommand{\Hist}{\mathsf{Hist}}
\newcommand{\Out}{\mathsf{Out}}
\newcommand{\Obj}{\Omega}  %objectif
\newcommand{\WinG}[3]{{\mathsf{Win}}_{#1}^{#2}({#3})}  %Winning set
\newcommand{\Win}[2]{{\mathsf{Win}}_{#1}^{#2}}
\newcommand{\OutDev}{\mathsf{Out}(v_0,\sigma_1,\sigma_2)}

%Objectifs
\newcommand{\Reach}{\mathsf{Reach}}
\newcommand{\Safe}{\mathsf{Safe}}
\newcommand{\Buchi}{\mathsf{Buchi}}
\newcommand{\CoBuchi}{\mathsf{CoBuchi}}
\newcommand{\GenReach}{\mathsf{GenReach}}
\newcommand{\GenBuchi}{\mathsf{GenBuchi}}
\newcommand{\WISL}{\{\DFWMP$, $\Inf$, $\Sup$, $\LimInf$, $\LimSup\}}
\newcommand{\WISLmath}{\{\DFWMP,\Inf,\Sup,\LimInf,\LimSup\}}

\newcommand{\ISL}{\{\Inf$, $\Sup$, $\LimInf$, $\LimSup\}}

\newcommand{\Reg}{\{\Reach$, $\Safe$, $\Buchi$, $\CoBuchi\}}

\newcommand{\GDReach}[2]{\mathsf{GDEnd}^{#1}({#2})}
\newcommand{\ICWReach}[2]{\mathsf{ICWEnd}^{#1}({#2})}

\newcommand{\Objf}{\mathsf{F^{\lambda}}}
\newcommand{\Objg}{\mathsf{End^{\lambda}}}
\newcommand{\prob}{synthesis problem}

%Algos
\newcommand{\AlgoGDReach}{\mathsf{GDEnd}}
\newcommand{\AlgoICWReach}{\mathsf{ICWEnd}}

\newcommand{\SolveObjg}{\mathsf{End}}
\newcommand{\SolveObjf}{\mathsf{ObjF}}

%Mesures
\newcommand{\Sup}{\mathsf{Sup}}
\newcommand{\Inf}{\mathsf{Inf}}
\newcommand{\LimSup}{\mathsf{LimSup}}
\newcommand{\LimInf}{\mathsf{LimInf}}
\newcommand{\DFWMP}{\mathsf{WMP}}

\newcommand{\TP}{\mathsf{TP}}
\newcommand{\MP}{\mathsf{MP}}

%attracteurs
\newcommand{\Attr}[3]{{\mathsf{Attr}}_{#1}({#2},{#3})}
%\newcommand{\Attrk}[4]{{\mathsf{Attr}}_{#1}^{#2}({#3},{#4})}

%\newcommand{\GDAttrk}[4]{{\mathsf{GDAttr}}_{#1}^{#2}({#3},{#4})}

%window
\newcommand{\lwindow}[1]{$\lambda$-window at position~${#1}$}
\newcommand{\window}[1]{window at position~${#1}$}
\newcommand{\closed}[2]{closed in~${#2}$}
\newcommand{\iclosed}[2]{inductively-closed in~${#2}$}
\newcommand{\fclosed}[2]{first-closed in~${#2}$}
\newcommand{\ffclosed}[1]{first-closed}
\newcommand{\good}[1]{good}
\newcommand{\bad}[1]{bad}

\newcommand{\GoodDec}[1]{${#1}$-good decomposition}

\begin{document}
  \maketitle	

\begin{abstract} 
In this paper, we study two-player zero-sum turn-based games played on a finite multidimensional weighted graph. In recent papers all dimensions use the same measure, whereas here we allow to combine different measures.  Such heterogeneous multidimensional quantitative games provide a general and natural model for the study of reactive system synthesis. We focus on classical measures like the $\Inf$, $\Sup$, $\LimInf$, and $\LimSup$ of the weights seen along the play, as well as on the window mean-payoff ($\DFWMP$) measure recently introduced in~\cite{Chatterjee0RR15}. This new measure is a natural strengthening of the mean-payoff measure. We allow objectives defined as Boolean combinations of heterogeneous constraints. While multidimensional games with Boolean combinations of mean-payoff constraints are undecidable~\cite{Velner15}, we show that the problem becomes ${\sf EXPTIME}$-complete for DNF/CNF Boolean combinations of  heterogeneous measures taken among $\WISL$ and that exponential memory strategies are sufficient for both players to win. We provide a detailed study of the complexity and the memory requirements when the Boolean combination of the measures is replaced by an intersection. ${\sf EXPTIME}$-completeness and exponential memory strategies still hold for the intersection of measures in $\WISL$, and we get ${\sf PSPACE}$-completeness when $\DFWMP$ measure is no longer considered. To avoid ${\sf EXPTIME}$- or ${\sf PSPACE}$-hardness, we impose at most one occurrence of $\DFWMP$ measure and fix the number of $\Sup$ measures, and we propose several refinements (on the number of occurrences of the other measures) for which we get polynomial algorithms and lower memory requirements. For all the considered classes of games, we also study parameterized complexity.

\end{abstract}

\section{Introduction}

Two-player zero-sum turn-based games played on graphs are an adequate mathematical model to solve the {\em reactive synthesis problem}~\cite{PnueliR89}. To model systems with resource constraints, like embedded systems, games with quantitative objectives have been studied, e.g. mean-payoff~\cite{ZwickP96} and energy games~\cite{BouyerFLMS08}. In~\cite{ChatterjeeDHR10,VelnerR11,Chatterjee0RR15,VelnerC0HRR15}, multidimensional games with conjunctions of several quantitative objectives have been investigated, such that all dimensions use the {\em same} measure. Here, we study games played on multidimensional weighted graphs with objectives defined from different measures over the dimensions. Such {\em heterogeneous} quantitative games provide a general, natural, and expressive model for the reactive synthesis problem.

We study Boolean combinations of constraints over classical measures like the $\Inf$, $\Sup$, $\LimInf$, and $\LimSup$ of the weights seen along the play, as well as the {\em window mean-payoff} ($\DFWMP$) measure introduced in~\cite{Chatterjee0RR15}. For instance one can ask that along the play the $\LimInf$ of the weights on a first component is positive while either the $\Sup$ of the weights on a second component or their $\DFWMP$ is less than $\frac{2}{3}$. While the mean-payoff ($\MP$) measure considers the long-run average of the weights along the whole play, the $\DFWMP$ measure considers weights over a {\em local window of a given size} sliding along the play. A $\DFWMP$ objective asks now to ensure that the average weight satisfies a given constraint over every bounded window. This is a strengthening of the $\MP$ objective: winning for the $\DFWMP$ objective implies winning for the $\MP$ objective. Also, any finite-memory strategy that forces an $\MP$ measure larger than threshold $\nu + \epsilon$ (for any $\epsilon >0$), also forces the $\DFWMP$ measure to be larger than $\nu$ provided that the window size is taken large enough. Aside from their naturalness, $\DFWMP$ objectives are algorithmically more tractable than classical $\MP$ objectives, see~\cite{Chatterjee0RR15,HunterPR15}. 
First, unidimensional $\DFWMP$ games can be solved in polynomial time when working with polynomial windows~\cite{Chatterjee0RR15} while only pseudo-polynomial time algorithms are known for mean-payoff games~\cite{ZwickP96,BrimCDGR11}.
Second, multidimensional games with Boolean combinations of $\MP$ objectives are undecidable~\cite{Velner15}, whereas we show here that games with Boolean combinations of $\DFWMP$ objectives and other classical objectives are decidable.

\paragraph{{\bf Main contributions~}}

More precisely (see also Table~\ref{table:conclusion}), we show in this paper that the problem is $\sf EXPTIME$-complete for CNF and DNF Boolean combinations of heterogeneous measures taken among  $\WISL$.
We provide a detailed study of the complexity when the Boolean combination of the measures is replaced by an intersection, as it is often natural in practice to consider conjunction of constraints. $\sf EXPTIME$-completeness of the problem still holds for the intersection of measures in $\WISL$, and we get $\sf PSPACE$-completeness when $\DFWMP$ measure is not considered.
To avoid $\sf EXPTIME$-hardness, we consider fragments where there is at most one occurrence of a  $\DFWMP$ measure. In case of intersections of one $\DFWMP$ objective with any number of objectives of one kind among $\ISL$ (this number must be fixed in case of objectives $\Sup$), we get  $\sf P$-completeness when dealing with polynomial windows, a reasonable hypothesis in practical applications. In case of no occurrence of $\DFWMP$ measure, we propose several refinements (on the number of occurrences of the other measures) for which we get $\sf P$-completeness.
Some of our results are obtained by reductions to known qualitative games but most of them are obtained by new algorithms that require genuine ideas to handle in an optimal way one $\DFWMP$ objective together with qualitative objectives such as safety, reachability, B\"uchi and coB\"uchi objectives.
Finally, in all our results, we provide a careful analysis of the memory requirements of winning strategies for both players, and we study parameterized complexity.

%Some of our results are obtained by reduction to known qualitative games but most of them are obtained by new algorithms that require genuine ideas to handle in an optimal way one $\DFWMP$ objective together with qualitative objectives such as safety, reachability, B\"uchi and coB\"uchi objectives.

\begin{table}
\scriptsize
\begin{center}
\begin{tabular}{|c|c|c|c|}
\hline
Objectives & Complexity class  & Player~$1$ memory & Player~$2$ memory \\
\hline
\hline
\rule{0cm}{0.35cm} (CNF/DNF) Boolean combination of $\underline\MP$, $\overline\MP$  \cite{Velner15}  & Undecidable                       &  infinite                  & infinite  \\
\hline
\rule{0cm}{0.25cm} (CNF/DNF) Boolean combinaison of   &  \multirow{4}{*}{$\sf EXPTIME$-complete}  & \multicolumn{2}{c|}{} \\
\rule{0cm}{0.25cm} $\DFWMP$, $\Inf$, $\Sup$, $\LimInf$, $\LimSup$ (*) &  & \multicolumn{2}{c|}{\multirow{3}{*}{exponential}} \\
\cline{0-0}
\rule{0cm}{0.3cm} Intersection of $\DFWMP$, $\Inf$, $\Sup$, $\LimInf$, $\LimSup$ (*) &    &   \multicolumn{2}{c|}{} \\
\cline{0-0}
Intersection of $\DFWMP$ \cite{Chatterjee0RR15} & & \multicolumn{2}{c|}{} \\
\cline{0-1}
\rule{0cm}{0.3cm} Intersection of $\Inf$, $\Sup$, $\LimInf$, $\LimSup$ (*) & $\sf PSPACE$-complete         &  \multicolumn{2}{c|}{}  \\
\cline{2-4}
and refinements (*) & \multicolumn{3}{c|}{See Table~\ref{table:ISL} } \\
\hline
\rule{0cm}{0.2cm} Intersection of $\underline\MP$ \cite{VelnerC0HRR15} & $\sf coNP$-complete      &  \multirow{2}{*}{infinite} & \multirow{3}{*}{memoryless}  \\
\cline{0-1}
\rule{0cm}{0.35cm} Intersection of $\overline\MP$ \cite{VelnerC0HRR15}  & \multirow{2}{*}{$\sf NP\cap coNP$}   &  &  \\
\cline{0-0} \cline{3-3}
Unidimensional $\sf MP$ \cite{ZwickP96,BrimCDGR11}  &    & memoryless &  \\
\hline 
\rule{0cm}{0.23cm} Unidimensional $\DFWMP$ \cite{Chatterjee0RR15} & $\sf P$-complete & \multicolumn{2}{c|}{\multirow{2}{*}{pseudo-polynomial}} \\
\rule{0cm}{0.22cm} $\DFWMP \cap \Obj$ with $\Obj \in \ISL$ (*) & (Polynomial windows) & \multicolumn{2}{c|}{} \\
%\hline
%\red{Arbitrary} window size : \hspace{3.2cm} & \multirow{3}{*}{$\sf P$ (in window size)} & \multicolumn{2}{c|}{\multirow{3}{*}{exponential}} \\
% Onedimensional $\DFWMP$ \cite{Chatterjee0RR15} &  & \multicolumn{2}{c|}{} \\
%\cline{0-0}
%\rule{0cm}{0.25cm} $\DFWMP \cap \Obj$ with $\Obj \in \ISL$ (*) &  & \multicolumn{2}{c|}{} \\
\hline
\rule{0cm}{0.3cm} Unidimensional $\Inf$, $\Sup$, $\LimInf$, $\LimSup$ \cite{2001automata} & $\sf P$-complete & \multicolumn{2}{c|}{memoryless} \\
\hline
\end{tabular}
\end{center}
\caption{Overview - Our results are marked with {\scriptsize (*)}}
\label{table:conclusion}
\end{table}

\paragraph{{\bf Related work~}}
Multidimensional mean-payoff games have been studied in~\cite{VelnerC0HRR15}. Conjunction of lim inf mean-payoff ($\underline\MP$) objectives are $\sf coNP$-complete, conjunctions of lim sup $\MP$ ($\overline\MP$) objectives are in $\sf NP$ and in $\sf coNP$, and infinite-memory strategies are necessary for player~$1$ whereas player~$2$ can play memoryless. The general case of Boolean combinations of $\underline\MP$ and $\overline\MP$ have been shown undecidable in~\cite{Velner15}. Multidimensional energy games were studied in~\cite{ChatterjeeDHR10,ChatterjeeRR14} for unfixed initial credit and in~\cite{JurdzinskiLS15} for fixed initial credit. For unfixed initial credit, they are $\sf coNP$-complete and exponential memory strategies are necessary and sufficient to play optimally. For fixed initial credit, they are $\sf 2EXPTIME$-complete and doubly exponential memory is necessary and sufficient to play optimally. Energy games and mean-payoff games with imperfect information, that generalize multidimensional energy and mean-payoff games, have been studied in~\cite{DegorreDGRT10} and shown undecidable.

The $\DFWMP$ measure was first introduced in~\cite{Chatterjee0RR15}. Unidimensional $\DFWMP$ games can be solved in polynomial time, and multidimensional $\DFWMP$ games are $\sf EXPTIME$-complete. In~\cite{Chatterjee0RR15}, the $\DFWMP$ measure is considered on all the dimensions, with no conjunction with other measures like $\Inf$, $\Sup$, $\LimInf$, and $\LimSup$, and the case of Boolean combinations of $\DFWMP$ objectives is not investigated.

Games with objectives expressed in fragments of $\sf LTL$ have been studied in~\cite{AlurTM03}. Our result that games with intersection of objectives in $\ISL$ are in $\sf PSPACE$ can be obtained by reduction to some of these fragments. But we here propose a simple proof adapted to our context, that allows to identify several polynomial fragments. Our other results cannot be obtained in this way and require genuine techniques and new algorithmic ideas.

\paragraph{{\bf Structure of the paper~}}
In Section~\ref{sec:preliminaries}, we fix definitions and notations, and recall known results. 
In Section~\ref{sec:general}, we study the complexity of multidimensional games with an intersection of objectives taken among $\WISL$.
In Section~\ref{sec:OneWindow}, we identify a polynomial fragment when only one $\DFWMP$ objective is allowed in the intersection.
In Section~\ref{sec:regular}, we study intersections that do not use $\DFWMP$ objectives, and we identify some additional polynomial fragments.
In Section~\ref{sec:beyond}, we establish the complexity of general Boolean combinations (instead of intersections) of objectives in $\WISL$.
In Section~\ref{sec:para}, we study the parameterized complexity of our results of Sections~\ref{sec:general}, \ref{sec:regular} and \ref{sec:beyond}. Finally, in Section~\ref{sec:conclu}, we give a conclusion.

\section{Preliminaries} \label{sec:preliminaries}

We consider turn-based games played by two players on a finite multidimensional weighted directed graph.
% such that each vertex of the game is controlled by one and only one player (\emph{turn-based}). We consider that players have antagonistic objectives, that is, we consider \emph{zero-sum} games. Indeed, for an infinite play, player~$1$ wins this play if its measure is equal or greater to a given threshold $\nu$, and losing otherwise (i.e. winning for player~$2$). The measure of a play is defined by a function applied on the sequence of multi-dimensional weights generated by this play. Formally these games, that we name \emph{multi-weighted two-player games} (or simply games), are played on arenas defined as follows:

\begin{definition}
A \emph{multi-weighted two-player game structure} (or simply game structure) is a tuple $\G = \Gdev$ where
\begin{itemize}
\item $(V,E)$ is a finite directed graph, with $V$ the set of vertices and $E \subseteq V \times V$ the set of edges such that for each $v \in V$, there exists $(v,v') \in E$ for some $v' \in V$ (no deadlock),
\item $(V_1,V_2)$ forms a partition of $V$ such that $V_p$ is the set of vertices controlled by player $p \in \{1,2\}$,
\item $w : E \rightarrow \Z^{\n}$ is the $\n$-dimensional weight function that associates a vector of $\n$ weights to each edge, for some $\n \geq 1$.
\end{itemize}
\end{definition}

We also say that $\G$ is an $n$-weighted game structure. The opponent of player $p \in \{1,2\}$ is denoted by $\overline{p}$. A \emph{play} of $\G$ is an infinite sequence $\rho = \rho_0 \rho_1 \ldots \in V^{\omega}$ such that $(\rho_k,\rho_{k+1}) \in E$ for all $k \in \N$. We denote by $\Plays(\G)$ the set of plays in $\G$. \emph{Histories} of $\G$ are finite sequences $\rho = \rho_0 \ldots \rho_i \in V^+$ defined in the same way. The set of all histories is denoted by $\Hist(\G)$. Given a play $\rho = \rho_0 \rho_1 \ldots$, the history $\rho_k \ldots \rho_{k+i}$ is also denoted by $\rhofactor{k}{k+i}$. We denote by $w_m$ the projection of the weight function $w$ on the $m^{th}$ dimension, and by $W$ the maximum weight in absolute value on all dimensions. 
%When an \emph{initial} vertex $v_0 \in V$ is fixed, we call $(\G,v_0)$ an \emph{initialized} multi-weighted two-player game. \Hist(\G,v_0) \Plays(\G,v_0)

\paragraph{\bf Strategies} A \emph{strategy} $\sigma$ for player~$p \in \{1,2\}$ is a function $\sigma : V^*V_p \rightarrow V$ assigning to each history $hv \in V^*V_p$ a vertex $v' = \sigma(hv)$ such that $(v,v') \in E$. It is \emph{memoryless} if $\sigma(hv) = \sigma(h'v)$ for all histories $hv,h'v$ ending with the same vertex $v$, that is, $\sigma$ is a function $\sigma : V_p \rightarrow V$. It is \emph{finite-memory} if it can be encoded by a deterministic \emph{Moore machine} $(M, m_0, \alpha_u, \alpha_n)$ where $M$ is a finite set of states (the memory of the strategy), $m_0 \in M$ is the initial memory state, $\alpha_u : M \times V \rightarrow M$ is an update function, and $\alpha_n : M \times V_p \rightarrow V$ is the next-action function. The Moore machine defines a strategy $\sigma$ such that $\sigma(hv) = \alpha_n(\widehat{\alpha}_u(m_0,h),v)$ for all histories $hv \in V^*V_p$, where $\widehat{\alpha}_u$ extends $\alpha_u$ to sequences of vertices as expected. Note that such a strategy is memoryless if $|M| = 1$.

Given a strategy $\sigma$ of player $p \in \{1,2\}$, we say that a play $\rho$ of $\G$ is \emph{consistent} with $\sigma$ if $\rho_{k+1} = \sigma(\rho_0 \ldots \rho_k)$ for all $k \in \N$ such that $\rho_k \in V_p$. A history consistent with a strategy is defined similarly. Given an initial vertex $v_0$, and a strategy $\sigma_p$ of each player~$p$, we have a unique play that is consistent with both strategies. This play is called the \emph{outcome} of $(\sigma_1,\sigma_2)$ from $v_0$ and is denoted by $\Out(v_0,\sigma_1,\sigma_2)$. 
%Given a history $h$ and a strategy $\sigma$ of player~$p$, we denote by $\restrict{\sigma}{h}$ the strategy such that $\restrict{\sigma}{h}(h'v') = \sigma(hh'v')$ for all $h'v' \in V^*V_p$.

\paragraph{\bf Closed sets and attractors} 
Let $\G = \Gdev$ be a game structure, and let $p$ be a player in $\{1,2\}$. A set $U \subseteq V$ is \emph{$p$-closed} if for all $v \in U \cap V_p$ and all $(v,v') \in E$, we have $v' \in U$ (player $p$ cannot leave $U$), and for all $v \in U \cap V_{\overline{p}}$, there exists $(v,v') \in E$ such that $v' \in U$ (player $\overline{p}$ can ensure to stay in $U$). We thus get a subgame structure induced by $U$ that is denoted by $\G[U]$. 

Given $U \subseteq V$, the \emph{$p$-attractor} $\Attr{p}{U}{\G}$ is the set of vertices from which player $p$ has a strategy to reach $U$ against any strategy of player $\overline{p}$. %It is thus the winning set $\WinG{p}{\Reach(U)}{G}$. 
It is contructed by induction as follows: 
%$\Attr{p}{U}{\G} =  \cup_{k \geq 0} \Attrk{p}{k}{U}{\G}$ such that
$\Attr{p}{U}{\G} =  \cup_{k \geq 0} X_k$ such that
\begin{eqnarray*} \label{eq:attr}
%\Attrk{p}{0}{U}{\G} &=& U \\
X_0 &=& U, \\
%\Attrk{p}{k+1}{U}{\G} &=& \Attrk{p}{k}{U}{\G} \\
%&& \cup \{v \in V_p \mid \exists (v,v') \in E, v' \in \Attrk{p}{k}{U}{\G} \} \\
%&& \cup \{v \in V_{\overline{p}} \mid \forall (v,v') \in E, v' \in \Attrk{p}{k}{U}{\G} \}.
X_{k+1} &=& X_k \cup \{v \in V_p \mid \exists (v,v') \in E, v' \in X_k \} \cup \{v \in V_{\overline{p}} \mid \forall (v,v') \in E, v' \in X_k \}.
\end{eqnarray*}
Notice that $U \subseteq \Attr{p}{U}{\G}$. The next properties are classical.

\begin{theorem} \label{thm:attr} 
Let $\G = \Gdev$ be a game structure, let $p$ be a player in $\{1,2\}$, and let $U \subseteq V$ be a set of vertices. Then
\begin{itemize}
\item The attractor $\Attr{p}{U}{\G}$ can be computed in $O(|V| + |E|)$ time {\rm \cite{Beeri80,Immerman81}}. 
\item The set $V \ssetminus \Attr{p}{U}{\G}$ is $p$-closed {\rm \cite{Zielonka98}}.
\item If $U$ is $\overline{p}$-closed, then $\Attr{p}{U}{\G}$ is $\overline{p}$-closed {\rm \cite{Zielonka98}}.
\end{itemize}
\end {theorem}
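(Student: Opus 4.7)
The plan is to dispatch the three bullets of Theorem~\ref{thm:attr} separately, all by arguments that follow directly from the inductive definition $\Attr{p}{U}{\G}=\bigcup_{k\ge 0} X_k$ given just above the statement. None requires the game-theoretic machinery in its full generality; everything is really a graph-theoretic fact about the fixpoint operator.

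\textbf{Linear-time computation.} I would implement the fixpoint using a work-list algorithm with counters, in the spirit of Immerman's linear-time Horn-SAT procedure. Initialize the work-list with $X_0 = U$ and, for every vertex $v\in V_{\overline p}$, a counter $c(v)$ equal to the out-degree of $v$. Pop a vertex $u$ from the work-list; for every in-neighbor $v$ of $u$ not yet marked, either mark $v$ immediately if $v\in V_p$, or decrement $c(v)$ and mark $v$ if the counter hits zero when $v\in V_{\overline p}$, pushing any newly marked vertex onto the work-list. Each edge is inspected exactly once (when its target is first popped), so the whole procedure runs in $O(|V|+|E|)$ time. Correctness is by a straightforward induction on the order in which vertices are popped, matching the indexed sets $X_k$ of the definition.

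\textbf{Closure of the complement.} Let $Z = V \setminus \Attr{p}{U}{\G}$. I claim $Z$ is $p$-closed by contraposition against the inductive construction. If $v\in Z\cap V_p$ had an edge $(v,v')$ with $v'\in \Attr{p}{U}{\G}$, then $v'\in X_k$ for some $k$, and the clause ``$v\in V_p$ with some successor in $X_k$'' in the definition of $X_{k+1}$ would force $v\in X_{k+1}\subseteq \Attr{p}{U}{\G}$, contradicting $v\in Z$. Similarly, if $v\in Z\cap V_{\overline p}$ had every edge $(v,v')$ landing in $\Attr{p}{U}{\G}$, then picking $k$ large enough so that all finitely many successors belong to $X_k$, the clause for $V_{\overline p}$-vertices would again place $v$ in $X_{k+1}$, a contradiction. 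Hence $Z$ is $p$-closed.

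\textbf{Closure of the attractor when $U$ is $\overline p$-closed.} I would prove by induction on $k$ that each $X_k$ is $\overline p$-closed; taking the union then gives the result. The base case $X_0=U$ is the hypothesis. For the step, consider $v\in X_{k+1}$. If $v\in V_{\overline p}$, we must show every $(v,v')\in E$ satisfies $v'\in X_{k+1}$: either $v\in X_k$, in which case $v'\in X_k\subseteq X_{k+1}$ by induction hypothesis, or $v\in X_{k+1}\setminus X_k$, in which case the defining clause for $V_{\overline p}$-vertices already requires every successor to lie in $X_k$. If $v\in V_p$, we must exhibit some $(v,v')\in E$ with $v'\in X_{k+1}$: if $v\in X_k\cap U$, use $\overline p$-closure of $U$; if $v\in X_k\setminus U$, use the induction hypothesis; and if $v\in X_{k+1}\setminus X_k$, the defining clause directly provides a successor in $X_k$.

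The three items are all essentially bookkeeping over the fixpoint, so I do not expect a conceptual obstacle; the only point that needs care is maintaining the per-vertex counters in the linear-time algorithm so that the total work is genuinely $O(|V|+|E|)$ and no vertex is re-processed.
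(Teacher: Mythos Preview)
Your proof is correct. The paper does not actually prove Theorem~\ref{thm:attr}; it is stated as a classical preliminary result with citations to \cite{Beeri80,Immerman81,Zielonka98}, so there is no in-paper argument to compare against. Your treatment of all three bullets is the standard one and matches what those references contain: the counter-based backward algorithm for the linear-time bound, and direct contraposition against the inductive clauses for the two closure properties.

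One minor remark on the third bullet: your case split for $v\in X_{k+1}\cap V_p$ into $X_k\cap U$ and $X_k\setminus U$ is unnecessary, since the induction hypothesis that $X_k$ is $\overline p$-closed already covers every $v\in X_k\cap V_p$ uniformly. This is a redundancy, not an error.
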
 

\paragraph{\bf Objectives and winning sets} Let $\G = \Gdev$ be a game structure. An \emph{objective for player $p$} is a  set of plays $\Obj \subseteq \Plays(\G)$. An objective can be \emph{qualitative} (it only depends on the graph $(V,E)$), or \emph{quantitative} (it also depends on the weight function $w$). A play $\rho$ is \emph{winning} for player $p$ if $\rho \in \Obj$, and losing otherwise (i.e. winning for player $\overline{p}$). We thus consider zero-sum games such that the objective of player $\overline{p}$ is $\overline{\Obj} = \Plays(\G) \ssetminus \Obj$, that is, opposite to the objective $\Obj$ of player $p$. In the following, we always take the point of view of player~$1$ by supposing that $\Obj$ is his objective, and we denote by $(\G,\Obj)$ the corresponding \emph{game}. Given an initial state $v_0$, a strategy $\sigma_p$ for player $p$ is \emph{winning} from $v_0$ if $\Out(v_0,\sigma_p,\sigma_{\overline{p}}) \in \Obj$ for all strategies $\sigma_{\overline{p}}$ of player $\overline{p}$. Vertex $v_0$ is also called \emph{winning} for player $p$ and the \emph{winning set} $\Win{p}{\Obj}$ is the set of all his winning vertices. Similarly the winning vertices of player $\overline{p}$ are those from which player $\overline{p}$ can ensure to satisfy his objective $\overline{\Obj}$ against all strategies of player $p$, and $\Win{\overline{p}}{\overline{\Obj}}$ is his winning set. When $\Win{p}{\Obj} \cup \Win{\overline{p}}{\overline{\Obj}} = V$, we say that the game is \emph{determined}. It is known that every turn-based game with Borel objectives is determined~\cite{Martin75}. 
%When the objective is clear from the context, we omit the reference to $\Obj$ and use notation $\sf Win$. 
When we need to refer to a specific game structure $\G$, we use notation $\WinG{p}{\Obj}{\G}$.

\paragraph{\bf Qualitative objectives} Let $\G = (V_1,V_2,E)$ be an \emph{unweighted} game structure. Given a set $U \subseteq V$, classical qualitative objectives are the following ones. 
\begin{itemize}
\item A \emph{reachability} objective $\Reach(U)$ asks to visit a vertex of $U$ at least once. 
\item A \emph{safety} objective $\Safe(U)$ asks to visit no vertex of $V \setminus U$, that is, to avoid $V \setminus U$. A safety objective is thus the opposite of a reachability objective, that is $\Safe(U) = \overline{\Reach(V \setminus U)}$.
\item A \emph{B\"uchi} objective $\Buchi(U)$ asks to visit a vertex of $U$ infinitely often. 
\item A \emph{co-B\"uchi} objective $\CoBuchi(U)$ asks to visit no vertex of $V \setminus U$ infinitely often.  A co-B\"uchi objective is thus the opposite of a B\"uchi objective, that is $\CoBuchi(U) = \overline{\Buchi(V \setminus U)}$.
\end{itemize}

These objectives can be mixed by taking their intersection. Let $U_1, \ldots, U_i$ be a family of subsets of $V$: 
\begin{itemize}
\item A \emph{generalized reachability} objective $\GenReach(U_1, \ldots, U_i) = \cap_{k=1}^{i} \Reach(U_k)$ asks to visit a vertex of $U_k$ at least once, for each $k \in \{1, \ldots, i\}$.
\item A \emph{generalized B\"uchi} objective $\GenBuchi(U_1, \ldots, U_i) = \cap_{k=1}^{i} \Buchi(U_k)$ asks to visit a vertex of $U_k$ infinitely often, for each $k \in \{1, \ldots, i\}$.
\item A \emph{generalized B\"uchi $\cap$ co-B\"uchi} objective $\GenBuchi(U_1, \ldots, U_{i-1}) \cap \CoBuchi(U_i)$ asks to visit a vertex of $U_k$ infinitely often, for each $k \in \{1, \ldots, i-1\}$, and no vertex of $V \setminus U_i$ infinitely often.
\end{itemize}
Notice that the intersection of safety objectives $\cap_{k=1}^{i} \Safe(U_k)$ is again a safety objective that is equal to $\Safe(\cap_{k=1}^{i}U_k)$. Similarly the intersection of co-B\"uchi objectives $\cap_{k=1}^{i} \CoBuchi(U_k)$ is the co-B\"uchi objective $\CoBuchi(\cap_{k=1}^{i}U_k)$. Notice also that in an intersection of objectives, a safety objective $\Safe(U)$ can be replaced by the co-B\"uchi objective $\CoBuchi(U)$ by making each vertex of $V \setminus U$ absorbing, that is, with a unique outgoing edge being a self loop.

A game with an objective $\Obj$ is just called \emph{$\Obj$ game}. Hence a game with a reachability objective is called \emph{reachability game}, aso. In the sequel, we sometimes say by abuse of notation that $\Obj \in \Reg$ without mentioning a subset $U$ of vertices.

As all the previous qualitative objectives $\Obj$ are $\omega$-regular and thus Borel objectives, the corresponding games $(\G,\Obj)$ are determined. 

\paragraph{\bf Quantitative objectives} For a \emph{1-weighted} game structure $\G = \Gdev$ (with dimension $\n = 1$), let us now introduce quantitative objectives defined by some classical \emph{measure functions} $f : \Plays(\G) \rightarrow \Q$. Such a function $f$ associates a rational number to each play $\rho = \rho_1\rho_2 \ldots$ according to the weights $w(\rho_k,\rho_{k+1})$, $k \geq 0$, and can be one among the next functions. Let $\rho \in \Plays(\G)$:
\begin{itemize}
\item $\Inf(\rho) = \inf_{k \geq 0} ( w(\rho_k,\rho_{k+1}) )$: the $\Inf$ measure defines the minimum weight seen along the play.
\item $\Sup(\rho) = \sup_{k \geq 0} ( w(\rho_k,\rho_{k+1}) )$: the $\Sup$ measure defines the maximum weight seen along the play.
\item $\LimInf(\rho) = \liminf\limits_{k \to \infty} ( w(\rho_k,\rho_{k+1}) )$: the $\LimInf$ measure defines the mininum weight seen infinitely often along the play. 
\item $\LimSup(\rho) = \limsup\limits_{k \to \infty} ( w(\rho_k,\rho_{k+1}) )$:  the $\LimSup$ measure defines the maximum weight seen infinitely often along the play.  %$liminf(\nu) = \{ \rho \mid liminf(\rho) \geq \nu \}$,
\end{itemize}
Given such a measure function $f \in \{\Inf,\Sup,\LimInf,\LimSup\}$, a bound $\nu \in \Q$, and a relation $\sim \,\, \in$ $\{>,$ $\geq$, $<, \leq\}$, we define the objective $\Obj = f(\sim \nu)$ such that 
\begin{eqnarray} \label{eq:objclassical}
f(\sim \nu) &=&  \{ \rho \in \Plays(\G) \mid f(\rho) \sim \nu \}.
\end{eqnarray}

We are also interested in the next two measure functions defined on histories instead of plays. Let $\rho = \rho_0 \ldots \rho_i \in \Hist(\G)$:
\begin{itemize}
\item $\TP(\rho) = \Sigma_{k=0}^{i-1} w(\rho_k,\rho_{k+1})$: the \emph{total-payoff} measure $\TP$ defines the sum of the weights seen along the history.
\item $\MP(\rho) = \frac{1}{i}\TP(\rho)$: the \emph{mean-payoff} measure $\MP$ defines the mean of the weights seen along the history.
\end{itemize}

The second measure can be extended to plays $\rho$ as either $\underline{\MP}(\rho) = \liminf_{k\geq 0} \MP(\rhofactor{0}{k})$ or $\overline{\MP}(\rho) = \limsup_{k\geq 0} \MP(\rhofactor{0}{k})$. The $\MP$ measure on histories allows to define the \emph{window mean-payoff objective}, a new $\omega$-regular objective introduced in~\cite{Chatterjee0RR15}: given a bound $\nu \in \Q$, a relation $\sim \,\, \in \{>, \geq, <, \leq\}$, and a window size $\lambda \in \N \ssetminus\{0\}$, the objective $\DFWMP(\lambda, \sim \nu)$\footnote{This objective is called ``direct fixed window mean-payoff'' in \cite{Chatterjee0RR15} among several other variants.} is equal to 
\begin{eqnarray} \label{eq:objwindow}
\DFWMP(\lambda, \sim \nu) = \{ \rho \in \Plays(\G) \mid \forall k \geq 0, \exists l \in \{1, \ldots, \lambda\}, \MP(\rhofactor{k}{k+l}) \sim \nu \}. 
\end{eqnarray}
The window mean-payoff objective asks that the average weight becomes $\sim \nu$ inside a local bounded window for all positions of this window sliding along the play, instead of the classical mean-payoff objective asking that the long run-average $\underline{\MP}(\rho)$ (resp. $\overline{\MP}(\rho)$) is $\sim \nu$. This objective is a strengthening of the mean-payoff objective.
 
Given a multi-weighted game structure $\G = \Gdev$ (with dimension $\n \geq 1$), we can mix objectives of (\ref{eq:objclassical}) and (\ref{eq:objwindow}) by fixing one such objective $\Obj_m$ for each dimension $m \in \{1, \ldots, \n\}$, and then taking the intersection $\cap_{m=1}^{\n} \Obj_m$. More precisely, given a vector $(\sim_1 \nu_1, \ldots, \sim_n \nu_{\n})$, each objective $\Obj_m$ uses a measure function based on the weight function $w_m$; $\Obj_m$ is either of the form $f(\sim_m \nu_m)$ with $f \in \{\Inf,\Sup,\LimInf,\LimSup\}$, or of the form $\DFWMP(\lambda, \sim_m \nu_m)$ for some window size~$\lambda$ (this size can change with $m$).

As done for qualitative objectives, we use the shortcut \emph{$\Obj$ game} for a game with quantitative objective $\Obj$. For instance an \emph{$\Inf(\sim \nu)$ game} is a 1-weighted game with objective $\Inf(\sim \nu)$, a \emph{$\LimSup(\sim_1 \nu_1) \cap \DFWMP(\lambda,\sim_2 \nu_2) \cap \Inf(\sim_3 \nu_3)$ game} is a 3-weighted game with the intersection of a $\LimSup(\sim_1 \nu_1)$ objective on the first dimension, a $\DFWMP(\lambda,\sim_2 \nu_2)$ objective on the second dimension, and an $\Inf(\sim_3 \nu_3)$ objective on the third dimension.

In the sequel, we sometimes abusively say that $\Obj = \cap_{m=1}^{\n} \Obj_m$ with $\Obj_m \in \WISL$ without mentioning the used relations, bounds and window sizes. It is implicitly supposed that $\Obj_m$ deals with the $m^{th}$ component of the weight function.

In this paper, we want to study the next problem. We will go beyond this problem in Section~\ref{sec:beyond} by considering Boolean combinations (instead of conjunctions) of objectives.  
\begin{problem} \label{prob:thres}
Let $\G = \Gdev$ be a multi-weighted game structure with dimension $\n \geq 1$, and $\Obj =  \cap_{m=1}^{\n} \Obj_m$ be a quantitative objective such that each $\Obj_m \in \WISL$. Can we compute the winning sets $\Win{1}{\Obj}$ and $\Win{2}{\overline{\Obj}}$? If yes what is the complexity of computing these sets and how (memoryless, finite-memory, general) are the winning strategies of both players? Given such a game $(\G,\Obj)$ and an initial vertex $v_0$, the \emph{\prob} asks to decide whether player~$1$ has a winning strategy for $\Obj$ from $v_0$ and to build such a strategy when it exists.
\end{problem}

\begin{remark} \label{rem:threshold0}
%In this problem, we can suppose that the threshold vector $\nu$\hspace{0.5ex}$=$\hspace{0.5ex}$(\nu_1, \ldots, \nu_{\n})$ is equal to $(0, \ldots, 0)$. Indeed, suppose that $\nu_m = \frac{a}{b}$ with $a \in \Z$ and $b \in \N \ssetminus \{0\}$, then replace the $m^{th}$ component $w_m$ of the weight function $w$ by $b \cdot w_m - a$. Moreover notice that if $\nu_m = 0$ and $\Obj_m = \DFWMP$, then $ \MP(\rhofactor{k}{k+l})$ can be replaced by $\TP(\rhofactor{k}{k+l})$ in (\ref{eq:objwindow}).
$(i)$ In this problem, we can assume that the bounds used in the vector $(\sim_1 \nu_1,\ldots, \sim_\n\nu_{\n})$ are such that $(\nu_1,\ldots, \nu_{\n}) = (0, \ldots, 0)$. Indeed, suppose that $\nu_m = \frac{a}{b}$ with $a \in \Z$ and $b \in \N \ssetminus \{0\}$, then replace the $m^{th}$ component $w_m$ of the weight function $w$ by $b \cdot w_m - a$. $(ii)$ Moreover notice that if $\nu_m = 0$ and $\Obj_m = \DFWMP$, then $ \MP(\rhofactor{k}{k+l})$ can be replaced by $\TP(\rhofactor{k}{k+l})$ in (\ref{eq:objwindow}). 
$(iii)$ Finally, the vector $(\sim_1 0,\ldots, \sim_\n 0)$ can be supposed to be equal to $(\geq 0,\ldots, \geq 0)$. Indeed strict inequality $> 0$ (resp. $<0$) can be replaced by inequality $\geq 1$ (resp. $\leq -1$), and inequality $\leq 0$ can be replaced by $\geq 0$ by replacing the weight function by its negation and the measure $\Inf$ (resp. $\Sup$, $\LimInf$, $\LimSup$, $\TP$) by $\Sup$ (resp. $\Inf$, $\LimSup$, $\LimInf$, $\TP$).

\noindent From now on, we only work with vectors $(\geq 0,\ldots, \geq 0)$ and we no longer mention symbol $\geq$. Hence, as an example, $\Inf(\geq 0)$ and $\DFWMP(2,\geq 0)$ are replaced by $\Inf(0)$ and $\DFWMP(2,0)$; and when the context is clear, we only mention $\Inf$ and $\DFWMP$.
\end{remark}

%\begin{remark}
%Several variants of the threshold problem can be easily reduced to this problem. One variant is to consider strict inequality $> 0$ (instead of $\geq 0$), which can be replaced by inequality $\geq 1$. Another variant consists in working with upper bound $\leq 0$ instead of lower bound $\geq 0$. In this case, replace the weight function by its opposite, the measure $\Inf$ (resp. $\Sup$, $\LimInf$, $\LimSup$, $\TP$) by $\Sup$ (resp. $\Inf$, $\LimSup$, $\LimInf$, $\TP$) and the inequality $\leq 0$ by $\geq 0$.
%\end{remark}

All the objectives considered in Problem~\ref{prob:thres} are $\omega$-regular, the corresponding games $(G,\Obj)$ are thus determined. In the proofs of this paper, we will often use the property that $\Win{p}{\Obj} \cup \Win{\overline{p}}{\overline{\Obj}} = V$.

Let us illustrate a mixing of quantitative objectives on the following example.

\begin{example} \label{ex:game}
Consider the 3-weighted game structure depicted on Figure~\ref{Ex:Firstexample}. In all examples in this paper, we assume that circle (resp. square) vertices are controlled by player~$1$ (resp. player~$2$). Let $\Obj = \DFWMP(3,0) \cap \Sup(0) \cap \LimSup(0)$ be the objective of player~$1$. Let us recall that by definition of $\Obj$, we look at the $\DFWMP$ (resp. $\Sup$, $\LimSup$) objective on the first (resp. second, third) dimension. Let us show that $v_0$ is a winning vertex for player~$1$. Let $\sigma_1$ be the following strategy of player~$1$ from $v_0$: go to $v_1$, take the self loop once, go back to $v_0$ and then always go to $v_2$. More precisely, we have $\sigma_1(v_0) = v_1$, $\sigma_1(v_0v_1) = v_1$, $\sigma_1(v_0v_1v_1) = v_0$, and $\sigma_1(v_0v_1v_1hv_0) = v_2$ for all $hv_0 \in V^*V_1$.  Let $\sigma_2$ be any strategy of player~$2$. We are going to prove that $\rho = \Out(v_0,\sigma_1,\sigma_2) \in \Obj$. Notice that  $\rho \in  v_0v_1v_1v_0\{v_2,v_0\}^{\omega}$. As player~$1$ forces $\rho$ to begin with $v_0v_1v_1$, he ensures that $\Sup(\rho) \geq 0$ on the second component. Moreover as $\rho$ visits infinitely often edge $(v_2,v_2)$ or $(v_2,v_0)$, player~$1$ also ensures to have  $\LimSup(\rho) \geq 0$ on the third component. 
Finally, we have to check that $\rho \in \DFWMP(3,0)$ with respect to the first component, that is (by Remark~\ref{rem:threshold0}), for all $k$, there exists $l \in \{1,2,3\}$ such that $\TP(\rhofactor{k}{k+l}) \geq 0$. For $k = 0$ (resp. $k=1$, $k=2$, $k = 3$) and $l=3$ (resp. $l = 2$, $l=1$, $l=1$), we have $\TP(\rhofactor{k}{k+l}) \geq 0$. Now, from position $k = 4$, the sum of weights is non-negative in at most $2$ steps. Indeed, either player~$2$ takes the self loop $(v_2,v_2)$ or he goes to $v_0$ where player~$1$ goes back to $v_2$. Therefore, for each $k \geq 4$, either $\rho_k = v_0$ and $\TP(\rhofactor{k}{k+l}) \geq 0$ with $l=1$, or $\rho_k = v_2$ and $\TP(\rhofactor{k}{k+l}) \geq 0$ with $l=1$ if $\rho_{k+1} = v_2$, and with $l=2$ otherwise. It follows that $v_0 \in \Win{1}{\Obj}$. The strategy $\sigma_1$ needs memory: indeed, player~$1$ needs to remember if he has already visited the edge $(v_1,v_1)$ as this is the only edge visiting a weight greater than or equal to $0$ for the $\Sup$ objective. Finally, one can show that $\Win{1}{\Obj} = \{v_0,v_1\}$ and $\Win{2}{\overline{\Obj}} = \{v_2\}$.
\end{example}
% To this end, we use Lemma~\ref{lem:iclosed} and show that $\rho$ has a \GoodDec{3} $(k_i)_{i \geq 0}$ (the sequence that we will propose is the maximal \GoodDec{3}  of $\rho$). We let $k_0 = 0$, $k_1 = 3$ as the $3$-\window{0} is \fclosed{3}{3}, $k_2 = 4$ as the $3$-\window{3} is \fclosed{3}{4}.
%Now, from position $4$ the sum of weights is non-negative in at most $2$ steps. Indeed, either player~$2$ takes the self loop $(v_2,v_2)$ or he goes to $v_0$ where player~$1$ goes back to $v_2$. Then, for $j \geq 3$, we take $k_j = k_{j-1} + 1$ or $k_j = k_{j-1} + 2$, depending on  the choice of player~$2$. This shows that $\rho \in \DFWMP(3,0)$ by Lemma~\ref{lem:iclosed}. Therefore $v_0 \in \Win{1}{\Obj}$. The strategy $\sigma_1$ needs memory: indeed, player~$1$ needs to remember if he has already visited the edge $(v_1,v_1)$ as this is the only edge visiting a weight greater than or equal to $0$ for the $\Sup$ objective. Finally, one can show that $\Win{1}{\Obj} = \{v_0,v_1\}$ and $\Win{2}{\overline{\Obj}} = \{v_2\}$.
\begin{figure}[h]
\centering
  \begin{tikzpicture}[scale=4]
    \everymath{\scriptstyle}
    \draw (0,0) node [circle, draw] (A) {$v_0$};
    \draw (0.75,0) node [circle, draw] (B) {$v_1$};
    \draw (-0.75,0) node [rectangle, inner sep=5pt, draw] (C) {$v_2$};
    
    \draw[->,>=latex] (A) to[bend left] node[above,midway] {$(-1,-1,-1)$} (B);
    \draw[->,>=latex] (B) to[bend left] node[below,midway] {$(2,-1,-1)$} (A);
    
    \draw[->,>=latex] (A) to[bend left] node[below,midway] {$(1,-1,0)$} (C);
    \draw[->,>=latex] (C) to[bend left] node[above,midway] {$(-1,-1,0)$} (A);
    
    \draw[->,>=latex] (C) .. controls +(45:0.4cm) and +(135:0.4cm) .. (C) node[above,midway] {$(0,-1,0)$};
    \draw[->,>=latex] (B) .. controls +(45:0.4cm) and +(135:0.4cm) .. (B) node[above,midway] {$(-1,0,-1)$};
	\path (0,0.2) edge [->,>=latex] (A);    
    
    \end{tikzpicture}
\caption{Example of a multi-weighted two-player game}
\label{Ex:Firstexample}
\end{figure}

\begin{remark} \label{rem:reductions}
In Problem~\ref{prob:thres}, the vector $(\sim_1 \nu_1,\ldots, \sim_\n\nu_{\n})$ can be assumed equal to $(\geq 0,\ldots, \geq 0)$ by Remark~\ref{rem:threshold0}. It follows that an $\Inf$ (resp. $\Sup$, $\LimInf$, $\LimSup$) objective is nothing else than a safety (resp. reachability, co-B\"uchi, B\"uchi) objective, and conversely. These two game reductions, together with a third reduction, will be described and proved in Section~\ref{sec:general}. They all will be used throughout this paper.
%More precisely, every game $(\G,\Obj = \cap_{m=1}^{\n} \Obj_m)$ with each $\Obj_m \in \WISL$ can be polynomially reduced to a game $(\G',\Obj' = \cap_{m=1}^{\n} \Obj'_m)$ with each $\Obj'_m \in  \{ \DFWMP,\Safe, \Reach, \CoBuchi, \Buchi \}$, and with $|V|+|E|$ vertices and $2 \cdot |E|$ edges. This reduction is obtained by splitting each edge into two consecutive edges and decorating accordingly the new intermediate vertex to transfer the objectives. Note that the $\DFWMP$ objectives in $G$ are now $\DFWMP$ objectives in $G'$ with a doubled window size. There also exists a polynomial reduction in the other direction, but without $\DFWMP$ objectives, and keeping the same game structure $G' = G$. These two game reductions will be used throughout this paper} \blue{and will be detailed in Section~\ref{sec:general}.}
\end{remark}

\paragraph{\bf Some well-known properties} Let us recall some well-known properties about the objectives mentioned above. The complexity of the algorithms is expressed in terms of the size $|V|$ and $|E|$ of the game structure $\G$, the maximum weight $W$ (in absolute value) and the dimension $n$ of the weight function when $\G$ is weighted, the number $i$ of objectives in an intersection of objectives\footnote{Notice that $i = n$ for the objectives considered in Problem~\ref{prob:thres}.}, and the window size $\lambda$. We begin with the qualitative objectives in an unweighted game structure (see also Table~\ref{table:reg}).

\begin{theorem} \label{thm:reg}
\begin{itemize}
\item For reachability or safety games,
% the winning sets $\Win{1}{\Obj}$ and $\Win{2}{\overline{\Obj}}$ can be computed in $O(|V| + |E|)$ and 
deciding the winner is $\mathsf P$-complete (with an algorithm in $O(|V|+|E|)$ time) and both players have memoryless winning strategies {\rm\cite{Beeri80,2001automata,Immerman81}}.
\item For B\"uchi or co-B\"uchi games, 
%the winning sets can be computed in $(O(|V|^2))$ \cite{ChatterjeeH14} \fbox{algos anterieurs ?} and 
deciding the winner is $\mathsf P$-complete (with an algorithm in $O(|V|^2)$ time) and both players have memoryless winning strategies  {\rm\cite{ChatterjeeH14,EmersonJ91,Immerman81}}. 
\item For generalized reachability games, deciding the winner is $\mathsf{PSPACE}$-complete (with an algorithm in $O(2^i \cdot (|V|+|E|))$ time) and exponential memory strategies are necessary and sufficient for both players  {\rm\cite{FijalkowH13}}.
\item For generalized B\"uchi games, deciding the winner is $\mathsf P$-complete (with an algorithm in $O(i \cdot |V| \cdot |E|)$ time) and polynomial memory (resp. memoryless) strategies are necessary and sufficient for player~$1$ (resp. player~$2$)  {\rm\cite{DziembowskiJW97}}.
\item For B\"uchi $\cap$ co-B\"uchi games, deciding the winner is $\mathsf P$-complete (with an algorithm in $O(|V| \cdot |E|)$ time)  and memoryless strategies are necessary and sufficient for both players {\rm \cite{AlfaroF07}}.
\item For generalized B\"uchi $\cap$ co-B\"uchi games, deciding the winner is $\mathsf P$-complete (with an algorithm in tile $O(i^2 \cdot |V| \cdot |E|)$) and polynomial memory (resp. memoryless) strategies are necessary and sufficient for player~$1$ (resp. player~$2$).\footnote{This result is obtained thanks to a classical reduction of generalized B\"uchi $\cap$ co-B\"uchi games to B\"uchi $\cap$ co-B\"uchi games (see also Lemma~\ref{lem:GenBtoB}).}
\end{itemize}
\end{theorem}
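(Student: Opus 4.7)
The statement is a compilation of classical results, so the plan is to recall the standard algorithmic arguments for each item, and in each case explain the upper bound, the matching lower bound (i.e.\ $\mathsf{P}$-hardness or $\mathsf{PSPACE}$-hardness), and the memory requirement. Since most items come with citations, my proof would mostly consist in sketching the algorithms so that the complexity bounds are transparent and the strategies they produce make the memory claims evident; only the last item (generalized B\"uchi $\cap$ co-B\"uchi games) needs an explicit reduction argument, which is the point of the footnote.

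For reachability games, the plan is to compute $\Attr{1}{U}{\G}$ in linear time by Theorem~\ref{thm:attr}; the attractor construction directly yields a memoryless winning strategy for player~$1$ (pick any successor decreasing the attractor rank) and the complement $V\setminus \Attr{1}{U}{\G}$ is $1$-closed (Theorem~\ref{thm:attr}) so player~$2$ trivially has a memoryless strategy to stay there. Safety is obtained by duality, $\Safe(U)=\overline{\Reach(V\ssetminus U)}$. For B\"uchi and co-B\"uchi, I would invoke the classical nested fixed point: the B\"uchi winning set is the greatest fixed point of $X\mapsto \Attr{1}{U\cap\mathrm{CPre}_1(X)}{\G}$, whose na\"ive evaluation runs in $O(|V|\cdot(|V|+|E|))=O(|V|^2)$ time and returns memoryless strategies for both players; the improved bound is from Chatterjee–Henzinger~\cite{ChatterjeeH14}. $\mathsf{P}$-hardness of all four items follows from the standard reduction from monotone circuit value (or $\mathsf{AND}/\mathsf{OR}$ graph reachability) as in~\cite{Immerman81}.

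For generalized reachability, I would describe the product game $\G\times 2^{\{1,\ldots,i\}}$ that tracks which target sets $U_k$ have been visited so far; player~$1$ wins iff he wins a reachability game on this product with target $\{(v,\{1,\ldots,i\})\}$, giving an $O(2^i\cdot(|V|+|E|))$ time algorithm. Exponential memory sufficiency follows by playing memorylessly in this product; the $\mathsf{PSPACE}$-hardness and the matching exponential lower bound on memory come from~\cite{FijalkowH13}. For generalized B\"uchi, I would use the standard product $\G\times\{1,\ldots,i\}$ whose counter cycles through the target sets: solving an ordinary B\"uchi game on the product takes $O(i\cdot|V|\cdot|E|)$ time and gives player~$1$ a strategy with memory $i$ (the counter), whereas player~$2$'s memoryless witness in the product projects down to a memoryless strategy in $\G$ because his vertices are duplicated uniformly across the counter.

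For B\"uchi $\cap$ co-B\"uchi games, I would first restrict the arena to the $1$-closed set $V\ssetminus\Attr{2}{V\ssetminus U_2}{\G}$ inside which the co-B\"uchi constraint is automatically met, then solve the remaining B\"uchi game on that subgame, giving an $O(|V|\cdot|E|)$ algorithm and memoryless strategies for both players by de Alfaro–Faella~\cite{AlfaroF07}. Finally, for generalized B\"uchi $\cap$ co-B\"uchi, the idea (as announced in the footnote and detailed later in Lemma~\ref{lem:GenBtoB}) is to fold the $i-1$ B\"uchi constraints into a single B\"uchi constraint by a product with a counter of size $i$, while leaving the co-B\"uchi constraint untouched: this yields a B\"uchi $\cap$ co-B\"uchi game on a graph of size $O(i\cdot|V|)$ with $O(i\cdot|E|)$ edges, which is solved in $O(i^2\cdot|V|\cdot|E|)$ time; player~$1$ needs polynomial ($O(i)$) memory for the counter and player~$2$ plays memorylessly, again because his vertices are uniformly duplicated. $\mathsf{P}$-hardness is inherited from plain reachability. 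I expect the only mildly non-routine step to be verifying that the counter-product preserves memoryless strategies for player~$2$ in the last two items; everything else is a direct appeal to the cited literature.
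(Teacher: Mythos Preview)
The paper does not actually prove Theorem~\ref{thm:reg}: it is stated as a compilation of known results with citations, and only the last item receives a one-line justification (the footnote pointing to the counter-product reduction of Lemma~\ref{lem:GenBtoB}). Your sketch therefore goes well beyond what the paper offers, and for reachability/safety, B\"uchi/co-B\"uchi, generalized reachability, generalized B\"uchi, and the final item it matches the standard arguments the citations refer to.

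There is, however, a genuine gap in your treatment of B\"uchi $\cap$ co-B\"uchi games. You propose to ``first restrict the arena to $V\ssetminus\Attr{2}{V\ssetminus U_2}{\G}$ inside which the co-B\"uchi constraint is automatically met, then solve the remaining B\"uchi game on that subgame''. First, by Theorem~\ref{thm:attr} this set is $2$-closed, not $1$-closed. More importantly, $V\ssetminus\Attr{2}{V\ssetminus U_2}{\G}$ is the winning region for $\Safe(U_2)$, not for $\CoBuchi(U_2)$: the co-B\"uchi objective tolerates finitely many visits to $V\ssetminus U_2$. A vertex from which every play is forced through $V\ssetminus U_2$ once before reaching a region where player~$1$ wins $\Buchi(U_1)\cap\Safe(U_2)$ is winning for $\Buchi(U_1)\cap\CoBuchi(U_2)$, yet it lies in $\Attr{2}{V\ssetminus U_2}{\G}$ and is discarded by your restriction. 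The objective $\Buchi(U_1)\cap\CoBuchi(U_2)$ is a one-pair Streett condition (equivalently a parity condition with three priorities), and the correct $O(|V|\cdot|E|)$ algorithm is the Zielonka/McNaughton recursion for three priorities (or any of the equivalent formulations in~\cite{AlfaroF07}), which alternates attractor computations rather than performing a single restriction. Since your last item reduces to this case via the counter product, the same correction is needed there; once the B\"uchi $\cap$ co-B\"uchi step is fixed, the $O(i^2\cdot|V|\cdot|E|)$ bound and the memory claims for generalized B\"uchi $\cap$ co-B\"uchi follow exactly as you and the footnote indicate.
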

In case of reachability games with objective $\Reach(U)$ for player~$1$, notice that his winning set is equal to $\Attr{1}{U}{\G}$ (see Theorem~\ref{thm:attr}). %The proof of the last item is given in the appendix. \fbox{en dire plus ? ref \cite{ChatterjeeHL14}}

\begin{table}
\begin{center}
%\begin{tabular}{|l|l|l|l|l|l|l|}
\begin{tabular}{|c|c|c|c|c|c|c|c|}
\hline
Objective           	 & Complexity class &  Algorithmic complexity                   & Player~$1$ memory & Player~$2$ memory \\
\hline
$\Reach/\Safe$         & $\mathsf P$-complete       & $O(|V| + |E|)$        & memoryless         & memoryless \\
$\Buchi/\CoBuchi$    & $\mathsf P$-complete        &  $O(|V|^2)$     & memoryless         & memoryless \\
$\GenReach$           & $\mathsf{PSPACE}$-complete& $O(2^i \cdot (|V|+|E|))$ & exponential memory      & exponential memory \\
$\GenBuchi$            & $\mathsf P$-complete            & $O(i \cdot |V| \cdot |E|)$  & polynomial memory      & memoryless \\
$\Buchi \cap \CoBuchi$ & $\mathsf P$-complete    & $O(|V| \cdot |E|)$ &  memoryless       & memoryless \\
$\GenBuchi \cap \CoBuchi$ & $\mathsf P$-complete & $O(i^2 \cdot |V| \cdot |E|)$ &  polynomial memory       & memoryless \\
\hline
\end{tabular}
\end{center}
\caption{Overview of some known results for qualitative objectives ($i$ is the number of objectives in the intersection of reachability/B\"uchi objectives)}
\label{table:reg}
\end{table}

The next corollary will be useful in the proofs.

\begin{corollary}{\rm \cite{2001automata}} \label{cor:reg}
\begin{itemize}
\item For reachability or generalized reachability games, $\Win{2}{\overline{\Obj}}$ is 1-closed. If the set(s) to reach is (are) 2-closed, then $\Win{1}{\Obj}$ is 2-closed.
\item For safety games, $\Win{1}{\Obj}$ is 2-closed. If the set to avoid is 1-closed, then $\Win{2}{\overline{\Obj}}$ is 1-closed.
\item For B\"uchi, co-B\"uchi, generalized B\"uchi games, and generalized B\"uchi $\cap$ co-B\"uchi games, $\Win{1}{\Obj}$ is 2-closed and $\Win{2}{\overline{\Obj}}$ is 1-closed.
\end{itemize}
\end{corollary}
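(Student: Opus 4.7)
The plan is to exploit Theorem~\ref{thm:attr} together with the determinacy of the games, splitting the proof into two blocks: the reachability/safety bullets (via attractor identities) and the B\"uchi-style bullets (via a general prefix-independence lemma).

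For reachability and safety with a single target set, I would use the identities $\Win{1}{\Reach(U)} = \Attr{1}{U}{\G}$ and $\Win{1}{\Safe(U)} = V \ssetminus \Attr{2}{V \ssetminus U}{\G}$ combined with determinacy. The second bullet of Theorem~\ref{thm:attr} gives that $V \ssetminus \Attr{1}{U}{\G} = \Win{2}{\overline{\Obj}}$ is 1-closed for reachability, and the third bullet yields that $\Attr{1}{U}{\G} = \Win{1}{\Obj}$ is 2-closed whenever $U$ is 2-closed; the safety statements follow by swapping the two players. For generalized reachability, the unconditional 1-closedness of $\Win{2}{\overline{\Obj}}$ can be obtained directly: if $v \in V_1 \cap \Win{2}{\overline{\Obj}}$ had a successor $v' \in \Win{1}{\Obj}$, prepending the move $v \to v'$ to any winning strategy of player~$1$ from $v'$ would still win from $v$, since adding a prefix to a play only adds potential visits to each $U_k$; the $V_2$ case is symmetric. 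The conditional 2-closedness of $\Win{1}{\Obj}$ would be handled by the standard product construction on $\G \times 2^{\{1,\ldots,i\}}$ tracking which $U_k$'s have been visited, reducing to a single-reachability game whose target is absorbing and then applying Theorem~\ref{thm:attr}.

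For the B\"uchi-style objectives in the third bullet, I would prove a general auxiliary lemma: in a determined deadlock-free game with a prefix-independent objective $\Obj$ (meaning $v \cdot \pi \in \Obj \iff \pi \in \Obj$ whenever $(v,\pi_0) \in E$), the winning set $\Win{p}{\Obj}$ is $\overline{p}$-closed. The proof is a case analysis on $v \in \Win{p}{\Obj}$. If $v \in V_p$, fix a winning strategy $\sigma_p$ and set $v' = \sigma_p(v)$; the shifted residual $\sigma'_p(h) = \sigma_p(v \cdot h)$ is winning from $v'$ by prefix-independence, so $v' \in \Win{p}{\Obj}$. If $v \in V_{\overline{p}}$ and some $(v,v') \in E$ satisfied $v' \in \Win{\overline{p}}{\overline{\Obj}}$ (well-defined by determinacy), then prepending the move $v \to v'$ to a winning strategy of player~$\overline{p}$ from $v'$ would yield a strategy from $v$ whose outcomes lie in $\overline{\Obj}$ by prefix-independence, contradicting $v \in \Win{p}{\Obj}$. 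Since ``infinitely often'' conditions are manifestly unaffected by any finite prefix, the B\"uchi, co-B\"uchi, generalized B\"uchi, and generalized B\"uchi $\cap$ co-B\"uchi objectives are all prefix-independent, so the lemma delivers the third bullet in both directions.

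The main delicate step is the bookkeeping around the residual and prepended strategies: one must check that $\sigma'_p$ and the extended $\sigma_{\overline{p}}$ are well-defined on every history reachable from $v'$ and that the prepended first move is an edge of $\G$. These verifications are routine given the no-deadlock assumption, and the remainder of the argument reduces to direct applications of Theorem~\ref{thm:attr} and determinacy.
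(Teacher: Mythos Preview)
The paper does not supply its own proof of Corollary~\ref{cor:reg}; the result is simply cited from \cite{2001automata} and used as a black box. So there is nothing in the paper to compare your argument against directly.

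Your treatment of single-target reachability and safety via the attractor identities and Theorem~\ref{thm:attr} is correct, as is your prefix-independence lemma for the B\"uchi-type objectives in the third bullet. The direct ``prepending only adds visits'' argument for the unconditional $1$-closedness of $\Win{2}{\overline{\Obj}}$ in generalized reachability is also fine.

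There is, however, a genuine gap in your product-construction argument for the conditional claim about generalized reachability with $2$-closed targets. Two-closedness of the attractor in $\G' = \G \times 2^{\{1,\ldots,i\}}$ does \emph{not} pull back to $2$-closedness of $\Win{1}{\Obj}$ in $\G$: the successor of $(v,S_v)$ in $\G'$ is $(v', S_v \cup S_{v'})$, not $(v', S_{v'})$, so the $2$-closedness in $\G'$ tells you nothing about whether $v'$ itself belongs to $\Win{1}{\Obj}$. In fact the conditional claim appears to fail for generalized reachability as stated. Take $V_1=\{v\}$, $V_2=\{b,c\}$, edges $(v,b),(v,c),(b,b),(c,c)$, and targets $U_1=\{v,b\}$, $U_2=\{v,c\}$. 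Both $U_1$ and $U_2$ are $2$-closed, yet $\Win{1}{\GenReach(U_1,U_2)}=\{v\}$ is not $2$-closed since neither successor of $v$ lies in it. Your attractor argument handles the single-reachability case directly, and that is the only case where the conditional assertion is actually sound; the paper never invokes the conditional part for the generalized case.
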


We conclude this section with the known results about the window mean-payoff objective.

\begin{theorem} {\rm\cite{Chatterjee0RR15}}\footnote{When $n=1$, the time complexity and the memory requirements have been here correctly stated.} \label{thm:WMP}
Let $(G,\Obj)$ be an n-weighted game such that $\Obj = \cap_{m=1}^n \Obj_m$ with $\Obj_m = \DFWMP$ for all $m$. Then the \prob\ is ${\sf EXPTIME}$-complete (with an algorithm in  $O(\lambda^{4n} \cdot |V|^2  \cdot W^{2n})$ time), exponential memory strategies are sufficient and necessary for both players. This problem is already ${\sf EXPTIME}$-hard when $n = 2$. 

When $n=1$, the \prob\ is decidable in $O(\lambda \cdot |V| \cdot (|V| + |E|) \cdot \lceil \log_2(\lambda \cdot W) \rceil)$ time, both players require finite-memory strategies, and memory in $O(\lambda^2 \cdot W)$ (resp. in $O(\lambda^2 \cdot W \cdot |V|)$) is sufficient for player~$1$ (resp. player~$2$). Moreover, if $\lambda$ is polynomial in the size of the game, then the \prob\ is $\sf P$-complete.
\end{theorem}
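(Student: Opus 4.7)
The plan is to reduce each $\DFWMP$ game to a safety game played on an expanded game graph in which the relevant window information is stored in the state. The central observation is that a play satisfies $\DFWMP(\lambda,0)$ on dimension $m$ iff it admits a decomposition into consecutive \emph{closed} windows of length at most $\lambda$, where a window opened at position $k$ closes at position $k+l$ as soon as $\TP(\rhofactor{k}{k+l}) \geq 0$. Consequently, to decide whether the objective can still be met from the current position, one only needs to remember, for each dimension $m$, a pair $(c_m,s_m)$ where $c_m \in \{1,\ldots,\lambda\}$ is the age of the current open window and $s_m \in \{-\lambda W, \ldots, -1\}$ is its running partial sum (it is kept strictly negative, otherwise the window would have just closed).

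First I would build an expanded game structure $\G'$ whose vertices are tuples $(v,(c_1,s_1),\ldots,(c_n,s_n))$, giving $|V'| = O(|V| \cdot \lambda^{2n} \cdot W^n)$. An edge $(v,v') \in E$ with weight $(w_1,\ldots,w_n)$ lifts to an edge of $\G'$ that, for each dimension $m$, either resets $(c_m,s_m)$ to $(1, w_m)$ when $s_m + w_m \geq 0$ (a window closes and a fresh one starts) or updates it to $(c_m+1, s_m+w_m)$ otherwise. A tuple is declared \emph{bad} as soon as some $c_m$ reaches $\lambda$ while its sum is still negative. The original $\DFWMP$ objective then corresponds exactly to the safety objective on $\G'$ of avoiding bad vertices forever. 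Solving this safety game by the standard attractor computation runs in time linear in $|V'|+|E'|$, which after accounting for edges yields $O(|V|^2 \cdot \lambda^{4n} \cdot W^{2n})$ as claimed, and the memoryless winning strategies on $\G'$ pull back to finite-memory strategies on $\G$ of size $O(\lambda^{2n} \cdot W^n)$, hence exponential.

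For the matching ${\sf EXPTIME}$ and exponential-memory lower bounds already for $n=2$, I would reduce from countdown games (or membership in alternating ${\sf PSPACE}$ Turing machines): one dimension simulates a binary counter whose value is encoded in the partial sum and whose modulus is controlled by $\lambda$, while the second dimension forces termination within exponentially many steps. The resulting family of games requires winning strategies to discriminate between exponentially many window-states, yielding both hardness results in one shot. In the single-dimensional case the expanded game has only $O(|V| \cdot \lambda^2 \cdot W)$ vertices; using a compact, logarithmic representation of partial sums, the safety computation runs in $O(\lambda \cdot |V| \cdot (|V|+|E|) \cdot \lceil \log_2(\lambda W)\rceil)$ time, and the memory bounds $O(\lambda^2 W)$ and $O(\lambda^2 W |V|)$ follow directly from the size of the window-state, with player~$2$ additionally needing to track the current vertex. ${\sf P}$-completeness when $\lambda$ is polynomial is then immediate: the expanded game becomes polynomial-size and ${\sf P}$-hardness is inherited from classical safety games.

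The most delicate step is justifying rigorously that $(c_m,s_m)$ is a sufficient statistic, i.e., that whether any extension of a given history lies in $\DFWMP$ depends only on the current vertex together with the window-state, not on the longer past. This requires establishing an inductive-closure property of windows: once the current window closes at some position $k+l$, the suffix $\rho_{k+l}\rho_{k+l+1}\cdots$ may be analysed independently of everything before $k+l$, because all earlier windows are already witnessed closed. Once this property is formalised (by induction on the length of the prefix, exploiting the $\GoodDec{\lambda}$ terminology of the paper), the reduction to safety is automatic and all complexity and memory statements follow from standard safety-game analysis together with a careful edge-counting in $\G'$.
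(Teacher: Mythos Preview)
The theorem is cited from~\cite{Chatterjee0RR15} and the present paper gives no proof of it; it only uses the result. Your reduction to a safety game on an expanded arena is essentially the construction the paper carries out in Proposition~\ref{prop:WISLtoBC} (there phrased as a co-B\"uchi reduction with an absorbing bad vertex $\beta$, which is equivalent), so for the multidimensional upper bound and the exponential-memory sufficiency your approach is fine and aligned with the paper. Minor bookkeeping: resetting to $(1,w_m)$ after a closure is inconsistent with your declared range $s_m\in\{-\lambda W,\ldots,-1\}$; the clean choice, as in the paper, is to reset to $(0,0)$.

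There is, however, a genuine gap in your treatment of the $n=1$ case. The safety reduction produces a game with $\Theta(|V|\cdot\lambda^2\cdot W)$ vertices, hence solving it takes time at least linear in $W$, which is \emph{exponential} in the bit-length of the weights. No ``compact, logarithmic representation of partial sums'' can repair this: the number of distinct window-states is what it is, and an attractor computation must touch them. The claimed bound $O(\lambda\cdot|V|\cdot(|V|+|E|)\cdot\lceil\log_2(\lambda W)\rceil)$ is polynomial in $\log W$, and so is the ${\sf P}$-completeness for polynomial $\lambda$ (with no assumption on $W$). To obtain this, one needs a genuinely different algorithm: a value-iteration over window lengths $l=1,\ldots,\lambda$ that maintains for each vertex the best total payoff player~$1$ can secure within $l$ steps (this is the corrected {\sf GoodWin}/$\AlgoICWReach$ routine in the paper, with $U=V$), wrapped in an outer fixed point that iteratively removes vertices from which player~$2$ can force a bad window. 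Each of the $\lambda$ inner iterations performs $O(|V|+|E|)$ arithmetic operations on integers of magnitude $O(\lambda W)$, giving the $\log(\lambda W)$ factor instead of $W$. Your safety reduction therefore does not establish the $n=1$ complexity bound nor the ${\sf P}$-completeness claim. The extra $|V|$ factor in player~$2$'s memory also does not come from ``tracking the current vertex'' (that is already part of the game position); it arises from the layered structure of the outer fixed point.
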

\section{Intersection of objectives in $\WISL$} \label{sec:general}

In this section, we solve Problem~\ref{prob:thres}: the \prob\ is $\mathsf{EXPTIME}$-complete, and exponential memory strategies are both sufficient and necessary for both players (Theorem~\ref{thm:general}). Before proving this theorem, we first need to introduce some terminology about the window mean-payoff objective. We also need to establish several reductions between different classes of games. These preliminaries will be useful in the proof of various results of this paper, included Theorem~\ref{thm:general}.

\begin{theorem} \label{thm:general}
Let $(\G, \Obj)$ be an $\n$-weighted game such that $\Obj = \cap_{m = 1}^n \Obj_m$ with $\Obj_m \in \WISL$ for all $m$. Then, the \prob\ is $\mathsf{EXPTIME}$-complete (with an algorithm in $O(|V| \cdot |E|  \cdot (\lambda^2 \cdot W)^{2n})$ time), and exponential memory strategies are both necessary and sufficient for both players.
\end{theorem}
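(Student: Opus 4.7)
My plan is to solve the problem in $\mathsf{EXPTIME}$ via a reduction to a qualitative game on an exponentially larger product graph, for which the algorithms of Theorem~\ref{thm:reg} apply; the matching lower bounds would then be inherited from the pure $\DFWMP$ case of Theorem~\ref{thm:WMP}.

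\textbf{Step 1: normalization.} By Remark~\ref{rem:threshold0} all thresholds can be taken equal to $0$ with a $\geq$ comparison. By Remark~\ref{rem:reductions} the $\Inf$, $\Sup$, $\LimInf$, $\LimSup$ constraints then reduce to safety, reachability, co-B\"uchi, and B\"uchi objectives respectively, on explicit subsets read off from the signs of the weights. Writing $K \subseteq \{1,\ldots,n\}$ for the set of dimensions carrying a $\DFWMP$ constraint, I would rewrite the objective as $\Obj = \bigcap_{m \in K} \DFWMP(\lambda_m,0) \cap \Obj_{\mathrm{reg}}$, where $\Obj_{\mathrm{reg}}$ is a conjunction of qualitative objectives.

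\textbf{Step 2: product construction.} I would absorb the $\DFWMP$ conjunction into a single safety condition by extending the state with a pair $(t_m,s_m)$ for every $m \in K$, where $t_m \in \{0,\ldots,\lambda_m-1\}$ counts the age of the currently open window and $s_m \in \{-\lambda_m W,\ldots,\lambda_m W\}$ is its running sum. As soon as $s_m \geq 0$ the pair is reset to $(0,0)$; soundness of this reset is the inductive property of windows of~\cite{Chatterjee0RR15}, by which when the oldest open window closes all later ones close with it. If $t_m$ would reach $\lambda_m$ with $s_m < 0$, the state is sent to an absorbing sink marked by a new safety objective. This multiplies the sizes of $V$ and $E$ by $\prod_{m \in K} O(\lambda_m^2 W) = O((\lambda^2 W)^n)$. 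On top of that, the $\Sup$-reachability objectives in $\Obj_{\mathrm{reg}}$ are handled by a standard bit vector recording which targets have been visited; merging the remaining $\LimInf/\LimSup$ objectives yields a generalized B\"uchi $\cap$ co-B\"uchi specification $\Obj''$, and the collected safety parts are enforced by sinks. The resulting game $(\G'',\Obj'')$ has $|V''|=O(|V|\cdot (\lambda^2 W)^n)$ and $|E''|=O(|E|\cdot (\lambda^2 W)^n)$, and a play of $\G$ satisfies $\Obj$ if and only if its unique lift in $\G''$ satisfies $\Obj''$.

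\textbf{Step 3: solving, and lower bounds.} Applying the generalized B\"uchi $\cap$ co-B\"uchi algorithm of Theorem~\ref{thm:reg} to $(\G'',\Obj'')$ solves it in polynomial time in its size, delivering the overall running time $O(|V|\cdot |E|\cdot (\lambda^2 W)^{2n})$ announced in the statement. A memoryless winning strategy on $(\G'',\Obj'')$ unfolds into a strategy on $(\G,\Obj)$ whose memory is exactly the product expansion, hence exponential; this gives sufficiency of exponential memory for both players. For the matching lower bounds, $\mathsf{EXPTIME}$-hardness and the necessity of exponential memory for both players are inherited at once from the restricted case $n=2$ with $\Obj_1 = \Obj_2 = \DFWMP$ already covered by Theorem~\ref{thm:WMP}. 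The main obstacle is Step~2: I will have to formalize the inductive-window argument so that the ``one open window at a time'' abstraction is sound for every $m \in K$ simultaneously, and simultaneously keep the constants in the blow-up tight enough so that the combined expansion together with the cost of solving the qualitative game indeed fits inside the $(\lambda^2 W)^{2n}$ factor of the theorem.
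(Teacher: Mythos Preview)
Your proposal is correct and follows essentially the same route as the paper: the paper's proof invokes Proposition~\ref{prop:WISLtoBC}, which is precisely your Step~2 product construction (bits for $\Inf/\Sup/\LimInf/\LimSup$, window counters plus a sink for each $\DFWMP$ dimension), reducing to a generalized B\"uchi $\cap$ co-B\"uchi game of size $O((\lambda^2 W)^n)$ solved via Theorem~\ref{thm:reg}, with hardness and memory lower bounds inherited from Theorem~\ref{thm:WMP}. One minor correction: on the product game player~$1$ needs polynomial (not memoryless) memory for generalized B\"uchi $\cap$ co-B\"uchi per Theorem~\ref{thm:reg}, but this still unfolds to only exponential memory on $\G$, so your conclusion stands.
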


\subsection{Properties of windows}

Let us come back to the $\DFWMP(\lambda, 0)$ objective (with $\TP$ in (\ref{eq:objwindow})) and let us introduce some useful terminology. Let $\rho = \rho_0 \rho_1 \ldots$ be a play. A \emph{\lwindow{k}} is a window of size $\lambda$ placed along $\rho$ from $k$ to $k + \lambda$. If there exists $l \in \{1, \ldots, \lambda\}$ such that $\TP(\rhofactor{k}{k+l}) \geq 0$, such a \lwindow{k} is called \emph{\good{\lambda}} or \emph{\closed{\lambda}{k+l}} (to specify index $l$), otherwise it is called \emph{\bad{\lambda}}. Moreover if $l$ is the smallest index such that $\TP(\rhofactor{k}{k+l}) \geq 0$, we say it is \emph{\fclosed{\lambda}{k+l}}. 

An interesting property is the following one: a \lwindow{k} is \emph{\iclosed{\lambda}{k+l}} if it is \closed{\lambda}{k+l} and for all $k' \in \{k+1, \ldots, k+l-1\}$, the \lwindow{k'} is also \closed{\lambda}{k+l} (see Figure~\ref{fig:iclosed}).
\begin{figure}
\centering
\begin{tikzpicture}[scale=1]

    \draw (-1.6,0) node [circle] (A) {$\ldots$};
    \draw (0,0) node [circle, draw,inner sep=6pt] (B) {$\rho_k$};
    \draw (2.4,0) node [circle, draw,,inner sep=3.8pt] (C) {$\rho_{k+1}$};
    \draw (4,0) node [circle] (D) {$\ldots$};
    \draw (5.6,0) node [circle, draw,inner sep=2.3pt] (E) {$\rho$\tiny$_{k+l-1}$};
    \draw (8,0) node [circle, draw,inner sep=3.8pt] (F) {$\rho_{k+l}$};
    \draw (10.4,0) node [circle] (G) {$\ldots$};
    
    \draw (A) -- (B);
	\draw (B) -- (C);
	\draw (C) -- (D);
	\draw (D) -- (E);
	\draw (E) -- (F);
	\draw (F) -- (G);
    
    \draw (B) to[bend left] node[below,midway] {$\geq 0$} (F);
    \draw (C) to[bend left] node[below,midway] {$\geq 0$} (F);
    \draw (E) to[bend left] node[below,midway] {$\geq 0$} (F);

\end{tikzpicture}
\caption{A \lwindow{k} that is \iclosed{\lambda}{k+l}}
\label{fig:iclosed}
\end{figure}
One easily checks the next property:

\begin{lemma} \label{lem:factorize}
If a \lwindow{k} is \fclosed{\lambda}{k+l}, then it is \iclosed{\lambda}{k+l}. Conversely, if a \lwindow{k} is \iclosed{\lambda}{k+l}, then either it is \fclosed{\lambda}{k+l}, or there exists $l' \in \{1, \ldots, l-1\}$ such that the \lwindow{k} is \fclosed{\lambda}{k+l'} and the \lwindow{k+l'} is \iclosed{\lambda}{k+l}.
\end{lemma}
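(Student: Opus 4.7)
The statement has two independent directions, and I would treat them separately. Both rely on a single arithmetic fact that I will use repeatedly: the total-payoff is additive under concatenation, i.e.\ for any $k \le j \le k+l$,
\[
\TP(\rhofactor{k}{k+l}) \;=\; \TP(\rhofactor{k}{j}) + \TP(\rhofactor{j}{k+l}).
\]

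For the forward direction (first-closed in $k+l$ implies inductively-closed in $k+l$), the plan is a proof by contradiction. Suppose, to the contrary, there exists $k' \in \{k+1,\ldots,k+l-1\}$ such that the \lwindow{k'} is not \closed{\lambda}{k+l}, that is, $\TP(\rhofactor{k'}{k+l}) < 0$. Using additivity, I would then deduce that $\TP(\rhofactor{k}{k'}) = \TP(\rhofactor{k}{k+l}) - \TP(\rhofactor{k'}{k+l}) > \TP(\rhofactor{k}{k+l}) \geq 0$. Setting $l'' := k'-k \in \{1,\ldots,l-1\}$, this would exhibit a strictly smaller closing length for the \lwindow{k}, contradicting the first-closedness at $k+l$.

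For the converse, assume the \lwindow{k} is \iclosed{\lambda}{k+l} but not \fclosed{\lambda}{k+l}. Define $l'$ as the smallest index in $\{1,\ldots,l-1\}$ with $\TP(\rhofactor{k}{k+l'}) \geq 0$; such an $l'$ exists precisely because the window is closed but not first-closed at $k+l$. By construction, the \lwindow{k} is \fclosed{\lambda}{k+l'}. It then remains to check that the \lwindow{k+l'} is \iclosed{\lambda}{k+l}. The closing condition $\TP(\rhofactor{k+l'}{k+l}) \geq 0$ is immediate: it is exactly the instance at $k' = k+l'$ of the assumed inductive-closedness of the \lwindow{k}. For every intermediate position $k'' \in \{k+l'+1,\ldots,k+l-1\}$, the required $\TP(\rhofactor{k''}{k+l}) \geq 0$ is also given by the same assumption, since $\{k+l'+1,\ldots,k+l-1\} \subseteq \{k+1,\ldots,k+l-1\}$.

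The only nontrivial bookkeeping concerns the window bound $\lambda$: one must check that $l-l' \leq \lambda$ so that the \lwindow{k+l'} really is a $\lambda$-window closed at $k+l$. This is immediate from $l \leq \lambda$ and $l' \geq 1$, giving $l - l' \leq \lambda - 1 < \lambda$. I do not anticipate any other obstacles: once the definitions of \ffclosed{} and \iclosed{}{} are unfolded, both directions reduce to simple manipulations of prefixes and suffixes using the additivity of $\TP$.
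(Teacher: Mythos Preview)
Your proof is correct. The paper does not actually give a proof of this lemma: it simply states ``One easily checks the next property'' and moves on. Your argument via additivity of $\TP$---contradicting minimality in the forward direction, and taking the least closing index $l'$ in the converse---is exactly the routine verification the authors had in mind, and the bookkeeping on the $\lambda$-bound is handled cleanly.
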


The next lemma will be useful in several proofs of this paper:

\begin{lemma} \label{lem:iclosed}
A play $\rho$ is winning for $\DFWMP(\lambda, 0)$ iff there exists a sequence $(k_i)_{i \geq 0}$ with $k_0 = 0$ such that for each $i$, we have $k_{i+1} - k_i \in \{1,\ldots, \lambda\}$, and the \lwindow{k_i} is \iclosed{\lambda}{k_{i+1}}.
%The equivalence still holds with the \window{k_i} being \fclosed{\lambda}{k_{i+1}} instead of being \iclosed{\lambda}{k_{i+1}}. {ok pour l'enonce ?}
\end{lemma}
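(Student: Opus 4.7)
The plan is to prove the two implications separately, with the key structural input being Lemma~\ref{lem:factorize} for the construction direction.

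For the ``if'' direction, I would fix an arbitrary position $k \geq 0$ and locate the unique index $i$ such that $k_i \leq k < k_{i+1}$. Since the \lwindow{k_i} is \iclosed{\lambda}{k_{i+1}}, the definition of inductively-closed guarantees that for every $k' \in \{k_i, k_i+1, \ldots, k_{i+1}-1\}$, the \lwindow{k'} is \closed{\lambda}{k_{i+1}}, meaning $\TP(\rhofactor{k'}{k_{i+1}}) \geq 0$. Applying this to $k' = k$ yields $\TP(\rhofactor{k}{k_{i+1}}) \geq 0$, and since $k_{i+1}-k \leq k_{i+1}-k_i \leq \lambda$, the offset $l = k_{i+1}-k$ lies in $\{1,\ldots,\lambda\}$, witnessing that the \lwindow{k} is good. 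As $k$ was arbitrary, $\rho \in \DFWMP(\lambda,0)$.

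For the ``only if'' direction, I would build the sequence $(k_i)_{i\geq 0}$ greedily by induction. Set $k_0 = 0$. Assuming $k_i$ is defined, since $\rho$ is winning for $\DFWMP(\lambda,0)$ the \lwindow{k_i} is good, so there is a smallest $l \in \{1,\ldots,\lambda\}$ with $\TP(\rhofactor{k_i}{k_i+l}) \geq 0$; by definition this means the \lwindow{k_i} is \fclosed{\lambda}{k_i+l}. Set $k_{i+1} := k_i + l$, which lies at distance in $\{1,\ldots,\lambda\}$ from $k_i$. Lemma~\ref{lem:factorize} then gives that first-closed implies inductively-closed, so the \lwindow{k_i} is \iclosed{\lambda}{k_{i+1}}, as required.

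There is essentially no obstacle to this argument once Lemma~\ref{lem:factorize} is in hand: the ``if'' direction is a direct unpacking of the definition of inductive closure, and the ``only if'' direction just iterates the greedy choice of the first closing offset. The only minor care needed is to check that $k$ falls within the range $\{k_i,\ldots,k_{i+1}-1\}$ (not $\{k_i,\ldots,k_{i+1}\}$) so that the inductive-closure definition applies, but this is immediate from the strict inequality $k < k_{i+1}$.
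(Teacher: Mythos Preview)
Your proposal is correct and follows essentially the same approach as the paper: the paper's forward direction builds the sequence greedily via first-closed windows (implicitly invoking Lemma~\ref{lem:factorize} to get inductive closure), and its backward direction unpacks the definition of inductive closure to cover every intermediate position, exactly as you do. Your write-up is slightly more explicit in both directions, but the underlying argument is identical.
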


\noindent
When such a sequence $(k_i)_{i \geq 0}$ exists for a play $\rho$, we say that it is a \emph{\GoodDec{\lambda}} of $\rho$. We extend this notion to histories $\rho = \rho_0\rho_1 \ldots \rho_k$ as follows. A finite sequence $(k_i)_{i = 0}^{j}$ is a \GoodDec{\lambda} of $\rho$ if $k_0 = 0$, $k_j = k$, and for each $i \in \{0, \ldots j-1\}$, we have $k_{i+1} - k_i \in \{1,\ldots, \lambda\}$, and the \lwindow{k_i} is \iclosed{\lambda}{k_{i+1}}. The \emph{size} of the decomposition is $j$. In particular the history $\rho_0$ (take $k=0$) has always a \GoodDec{\lambda} of size $j = 0$.

Notice that if a sequence $(k_i)_{i \geq 0}$ is a \GoodDec{\lambda} of a play $\rho$, then by Lemma~\ref{lem:factorize}, there exists a unique sequence $(k'_i)_{i \geq 0}$ of which $(k_i)_{i \geq 0}$ is a subsequence that is also a \GoodDec{\lambda} of $\rho$ and such that for each $i$ the \lwindow{k'_i} is \fclosed{\lambda}{k'_{i+1}} (instead of being only \iclosed{\lambda}{k'_{i+1}}). See also Figure~\ref{fig:factorize}. We have a similar property for finite good decompositions. We call such a sequence a \emph{maximal} \GoodDec{\lambda}.

\begin{figure}
\centering
\begin{tikzpicture}[scale=3]

	\draw (0,0) -- (4.3,0); 
	\draw (4.4,0) node (A) {$\ldots$};
	
	\draw (0,0) to[bend left=50] (1,0);	
	\draw (1,0) to[bend left] (2.6,0);
	\draw (2.6,0) to[bend left] (4,0);
	
	\draw[dashed] (0,0) to[bend right=60] (0.5,0);
	\draw[dashed] (0.5,0) to[bend right=60] (1,0);
	\draw[dashed] (1,0) to[bend right=80] (1.4,0);
	\draw[dashed] (1.4,0) to[bend right=45] (2.1,0);
	\draw[dashed] (2.1,0) to[bend right=70] (2.6,0);
	\draw[dashed] (2.6,0) to[bend right] (3.6,0);
	\draw[dashed] (3.6,0) to[bend right=85] (4,0);

	\draw (0,0.15) node [] (B) {$k_0$};
	\draw (1,0.15) node [] (C) {$k_1$};
	\draw (2.6,0.15) node [] (D) {$k_2$};
	\draw (4,0.15) node [] (M) {$k_3$};
	\draw (0,-0.15) node [] (E) {$k'_0$};
	\draw (0.5,-0.15) node [] (F) {$k'_1$};
	\draw (1,-0.15) node [] (G) {$k'_2$};
	\draw (1.4,-0.15) node [] (H) {$k'_3$};
	\draw (2.1,-0.15) node [] (I) {$k'_4$};
	\draw (2.6,-0.15) node [] (J) {$k'_5$};
	\draw (3.6,-0.15) node [] (K) {$k'_6$};
	\draw (4,-0.15) node [] (L) {$k'_7$};

\end{tikzpicture}
\caption{The sequence $(k'_i)_{i \geq 0}$ is a good decompostion with all its $\lambda$-windows being first-closed.}
\label{fig:factorize}
\end{figure}

\begin{proof}[of Lemma~\ref{lem:iclosed}]
Suppose that $\rho$ is winning for $\DFWMP(\lambda, 0)$. Then there exists $k_1$ such that \lwindow{k_0 = 0} is \fclosed{\lambda}{k_{1}} (and thus \iclosed{\lambda}{k_{1}}), there exists $k_2$ such that \lwindow{k_1} is \fclosed{\lambda}{k_{2}} (and thus \fclosed{\lambda}{k_{2}}), aso. This leads to a \GoodDec{\lambda} $(k_i)_{i \geq 0}$ of $\rho$, that is its maximal \GoodDec{\lambda}. 

Conversely, if there exists a \GoodDec{\lambda} $(k_i)_{i \geq 0}$ of $\rho$, then for each $k_i$, the \lwindow{k_i} is \good{\lambda} as well as all $\lambda$-windows at position $k' \in \{k_i +1, \ldots, k_{i+1}-1\}$. It follows that $\rho$ is winning for $\DFWMP(\lambda, 0)$.
\qed\end{proof}

Let us come back to Example~\ref{ex:game} to illustrate those properties of windows.

\addtocounter{example}{-1}
\begin{example}[continued]
We consider again the 3-weighted game structure depicted on Figure~\ref{Ex:Firstexample} and the objective $\Obj = \DFWMP(3,0) \cap \Sup(0) \cap \LimSup(0)$ for player~$1$. We keep the same strategy $\sigma_1$ for player~$1$ from $v_0$. Against any strategy of player~$2$, we have an outcome $\rho \in  v_0v_1v_1v_0\{v_2,v_0\}^{\omega}$. Let us explain again that $\rho \in \DFWMP(3,0)$ with respect to the first component. To this end, we use Lemma~\ref{lem:iclosed} and show that $\rho$ has a \GoodDec{3} $(k_i)_{i \geq 0}$ (the sequence that we will propose is the maximal \GoodDec{3}  of $\rho$). We let $k_0 = 0$, $k_1 = 3$ as the $3$-\window{0} is \fclosed{3}{3}, $k_2 = 4$ as the $3$-\window{3} is \fclosed{3}{4}.
Now, from position $4$ the sum of weights is non-negative in at most $2$ steps. Indeed, either player~$2$ takes the self loop $(v_2,v_2)$ or he goes to $v_0$ where player~$1$ goes back to $v_2$. Then, for $j \geq 3$, we take $k_j = k_{j-1} + 1$ or $k_j = k_{j-1} + 2$, depending on  the choice of player~$2$. This shows that $\rho \in \DFWMP(3,0)$ by Lemma~\ref{lem:iclosed}. 
\end{example}

From now on, a $\lambda$-window will be simply called a window.

\subsection{Game reductions}

Let us turn to the reductions. We begin by a polynomial reduction of the class of games $(\G,\Obj = \cap_{m=1}^{\n} \Obj_m)$ with each $\Obj_m \in \WISL$ to the class of games $(\G',\Obj' = \cap_{m=1}^{\n} \Obj'_m)$ with each $\Obj'_m \in  \{ \DFWMP,\Safe, \Reach, \CoBuchi, \Buchi \}$. The objective $\Obj'$ is thus the intersection of qualitative and quantitative objectives. 

\begin{proposition} \label{prop:WISLtoWreg}
Each $\n$-weighted game $(\G,\Obj)$ with $\G = \Gdev$, and $\Obj = \cap_{m=1}^{\n} \Obj_m$ such that for all $m$, $$\Obj_m \in \WISLmath$$ can be polynomially reduced to a game $(\G',\Obj')$ with $|V|+|E|$ vertices and $2 \cdot |E|$ edges, and $\Obj' = \cap_{m=1}^{\n} \Obj'_m$ such that for all $m$, $$\Obj'_m \in \{ \DFWMP,\Safe, \Reach, \CoBuchi, \Buchi \}$$
Moreover, if $\Obj_m$ is a $\DFWMP$ (resp. $\Inf$, $\Sup$, $\LimInf$, $\LimSup$) objective, then $\Obj'_m$ is a $\DFWMP$ (resp. $\Safe$, $\Reach$, $\CoBuchi$, $\Buchi$) objective.

\noindent A memoryless (resp. (polynomial, exponential) memory) strategy in $G'$ transfers to a memoryless (resp. (polynomial, exponential) memory) strategy in $G$.
\end{proposition}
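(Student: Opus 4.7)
The approach is a polynomial edge-subdivision. Following Remark~\ref{rem:threshold0}, I may assume that every threshold is $0$ and every relation is~$\geq$. For each edge $(u,v) \in E$, introduce a fresh vertex $x_{u,v}$ and replace $(u,v)$ by the two edges $(u,x_{u,v})$ and $(x_{u,v},v)$. The new vertex $x_{u,v}$ has a unique outgoing edge, so we can freely assign it to either player, say to player~$1$. This yields a game structure $\G'$ with $|V|+|E|$ vertices and $2|E|$ edges, and a natural bijection $\rho \mapsto \rho'$ between plays of $\G$ and plays of $\G'$ obtained by inserting $x_{\rho_k,\rho_{k+1}}$ between consecutive vertices of $\rho$.

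To translate the objectives, set $U_m := \{x_{u,v} \mid w_m(u,v) \geq 0\}$ for each dimension $m$. For the non-$\DFWMP$ measures, the objective becomes purely qualitative on $\G'$: take $\Obj'_m$ to be $\Safe(U_m \cup V)$ when $\Obj_m = \Inf$, $\Reach(U_m)$ when $\Obj_m = \Sup$, $\CoBuchi(U_m \cup V)$ when $\Obj_m = \LimInf$, and $\Buchi(U_m)$ when $\Obj_m = \LimSup$. Since any edge traversal of $\rho$ is realized by $\rho'$ exactly through the corresponding intermediate vertex, a straightforward inspection shows that each such $\Obj'_m$ holds on $\rho'$ if and only if $\Obj_m$ holds on $\rho$.

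The main subtlety concerns the $\DFWMP$ dimensions. On each such dimension $m$, equip $\G'$ with the weight $w'_m(u,x_{u,v}) = w_m(u,v)$ and $w'_m(x_{u,v},v) = 0$, and set $\Obj'_m := \DFWMP(2\lambda_m, 0)$. For a window opened at an \emph{even} position of $\rho'$ (a position corresponding to a vertex of $\G$), the prefix sums of $\rho'$ of lengths $2l-1$ and $2l$ both equal $\TP(\rho_{[k,k+l]})$ in $\G$, so the window of size $2\lambda_m$ in $\G'$ is good iff the window of size $\lambda_m$ in $\G$ is good. For a window opened at an \emph{odd} position of $\rho'$ (starting at some $x_{u,v}$), the very first edge has weight $0$, hence the window is trivially closed at index~$1$. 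Thus $\DFWMP(2\lambda_m,0)$ holds on $\rho'$ iff $\DFWMP(\lambda_m,0)$ holds on $\rho$, which yields the equivalence $\rho \in \Obj \Leftrightarrow \rho' \in \Obj'$.

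It remains to transfer strategies. Each $x_{u,v}$ has a single outgoing edge, so a strategy in $\G'$ performs no genuine choice at intermediate vertices: it projects onto a strategy in $\G$ realized by the \emph{same} Moore machine, and conversely a strategy in $\G$ lifts to $\G'$ by prescribing the forced move at each intermediate vertex. Memoryless, polynomial-memory, and exponential-memory strategies therefore carry over in both directions, completing the reduction. The main (minor) obstacle is ensuring that the $\DFWMP$ translation works at every position of the subdivided play, which is handled precisely by doubling the window size and exploiting the zero-weight second half of each subdivided edge.
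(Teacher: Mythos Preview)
Your proof is correct and follows essentially the same edge-subdivision approach as the paper. The one noteworthy difference is in the $\DFWMP$ weight assignment: the paper splits $w(e)$ into the pair $(-1,\,1+w(e))$, whereas you split it into $(w(e),\,0)$. Your choice is a bit cleaner, since a window opened at an intermediate vertex is closed immediately by the zero-weight second edge, while the paper needs a short computation to handle the odd positions; either way the doubled window size $2\lambda$ does the job and the bounds coincide. One small remark: saying the projected strategy is ``realized by the \emph{same} Moore machine'' is slightly imprecise, since the $\G'$-machine also updates on the intermediate vertices; however, the transfer of memoryless/polynomial/exponential memory is clear (the forced moves at intermediate vertices add no real branching), which is all that is claimed.
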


%\begin{lemma} \label{lem:interred}
%For 1-weigthed games, the next classes of games are polynomially equivalent:
%\begin{itemize}
%\item $\Inf(0)$ games and safety games
%\item $\Sup(0)$ games and reachability games
%\item $\LimInf(0)$ games and co-B\"uchi games
%\item $\LimSup(0)$ games and B\"uchi games
%\end{itemize}
%%Such a polynomial equivalence extends to multi-weighted games with dimension $\n \geq 2$ and objectives $\Obj = \cap_{k=1}^{\n} \Obj_m$ such that $\forall m, \Obj_m \in \WISL$ in the first class of games and $\forall m, \Obj_m \in \WReg$ in the second class of games. \fbox{formulation ???}
%For each multi-weighted game with dimension $\n \geq 2$ and objective $\Obj = \cap_{k=1}^{\n} \Obj_m$ such that $\forall m, \Obj_m \in \WISL$, there exists a polynomial reduction to a game with objective $\Obj' = \cap_{k=1}^{\n} \Obj'_m$ such that $\Obj_m $ and $\Obj'_m$ correspond as in dimension 1 or are both $\DFWMP$ objectives.
%\end{lemma}

\begin{proof}
Let $(\G,\Obj)$ with $\Obj = \cap_{m=1}^{\n} \Obj_m$ such that $\Obj_m \in \WISL$ for all $m$. By Remark~\ref{rem:threshold0}, we suppose that the threshold $\nu$ is equal $(0, \ldots, 0)$ and that $\MP(\rhofactor{k}{k+l})$ is replaced by $\TP(\rhofactor{k}{k+l})$ in (\ref{eq:objwindow}).

From $\G$, we construct a new game structure $\G' = (V'_1,V'_2,E')$ in a way to have weights depending on vertices instead of edges, except for objectives $\Obj_m = \DFWMP$: each edge $e = (v,v') \in E$ is split into two consecutive edges, and the new intermediate vertex belongs to player~$1$\footnote{It could belong to player~$2$ since there is exactly one outgoing edge.}. The resulting game structure has thus $|V| + |E|$ vertices and $2 \cdot |E|$ edges, that is, a polynomial size. To explain how to proceed with the weights, we first suppose that $\G$ is 1-weighted and $\Obj = \Obj_1$. Then we will explain how to extend the construction to $\n$-weighted game structures and $\Obj = \cap_{m=1}^{\n} \Obj_m$.

Let $\n = 1$. We proceed as follows in the case $\Obj = \Inf(0)$. For each edge $e \in E$ split into two consecutive edges, the new intermediate vertex is decorated with $0$ if $w(e) \geq 0$ and with $-1$ otherwise, and the vertices of $V$ are decorated with $0$. Then, we define the set $U' \subseteq V'$ composed of all vertices decorated with $0$. Clearly, player~$1$ has a winning strategy  from vertex $v_0$ in the initial game iff player~$1$ has a winning strategy from $v_0$ in the constructed game with the safety objective $\Obj' = \Safe(U')$. If $\Obj = \Sup(0)$ (resp. $\LimInf(0)$, $\LimSup(0)$), the construction is the same except that the vertices of $V$ are decorated with $-1$ (resp. $0$, $-1$) and the objective $\Obj'$ is now $\Reach(U')$ (resp. $\CoBuchi(U')$, $\Buchi(U')$). 

When $n = 1$, it remains to consider the case $\Obj = \DFWMP(\lambda,0)$ with $\lambda \in \N \ssetminus \{0\}$. For each edge $e \in E$ split into two consecutive edges, the weight $w(e)$ is replaced by two weights respectively equal to $-1$ on the first new edge and $1 + w(e)$ on the second one. Let us show that player~$1$ has a winning strategy  from vertex $v_0$ in the initial game iff player~$1$ has a winning strategy from $v_0$ in the constructed game with the $\DFWMP(2 \lambda, 0)$ objective. There is a one-to-one correspondance between the plays of $\G$ and the plays of $\G'$. Let $\rho \in \Plays(\G)$ and its corresponding $\rho' \in \Plays(\G')$. If a \window{k} is \closed{\lambda}{k+l} for $\rho$, then the window of size $2 \lambda$ at position $2k$ (resp. $2k+1$) is \closed{2 \lambda}{2k+2l} for $\rho'$. It follows that  if $\rho$ is winning for the $\DFWMP(\lambda, 0)$ objective, then $\rho'$ is winning for the $\DFWMP(2 \lambda, 0)$ objective. The converse is also true since if a \window{2k} is \closed{2 \lambda}{2k+l} for $\rho'$, then we can suppose that $l = 2l'$, and thus we have a \window{k} that is \closed{\lambda}{k+l'} for $\rho$.

%We first consider 1-weighted games $(\G,\Obj,1)$. On one hand, suppose that $\G = \Gdev$ and $\Obj \in \ISL$. From the game structure $\G$, we construct a new game structure $(V'_1,V'_2,E')$ in a way to have weights depending on vertices instead of edges. We proceed as follows in the case of $\Obj = \Inf(0)$:  each edge $e = (v,v') \in E$ is split into two consecutive edges. The new intermediate vertex belongs to player~$1$\footnote{It could belong to player~$2$ since there is exactly one outgoing edge.}, and it is decorated with $w(e)$. The vertices of $V$ are decorated with $+\infty$. The resulting game structure has thus $|V| + |E|$ vertices and $2 \cdot |E|$ edges. Then, we define the set $U' \subseteq V'$ composed of all vertices decorated by $d$ such that $d \geq 0$. Clearly, player~$1$ has a winning strategy  from vertex $v_0$ in the initial game iff player~$1$ has a winning strategy from $v_0$ in the constructed game with the safety objective $\Safe(U')$. If $\Obj = \Sup(0)$ (resp. $\LimInf(0)$, $\LimSup(0)$), the construction is the same except that the vertices of $V$ are decorated with $-\infty$ (resp. $+\infty$, $-\infty$) and the objective is now $\Reach(U')$ (resp. $\CoBuchi(U')$, $\Buchi(U')$). 

Let us now suppose that $\n \geq 2$. Let $\n^*$ be the number of objectives $\Obj_m \in \ISL$ and $\n - \n^*$ be the number of objectives $\Obj_m = \DFWMP$. We proceed as in the case $\n = 1$ except that the vertices of the new game structure are decorated by vectors of weights in $\{-1,0\}^{\n^*}$ and its edges are labeled by vectors of weights in $\Z^{\n - \n^*}$. The decoration and labeling are done component-wise as explained above. The game structure $\G'$ has still a size polynomial in the initial game structure $\G$.

Finally, it is clear that any strategy in $G'$ that is memoryless (resp. (polynomial, exponential) memory) transfers to a strategy in $G$ that is again memoryless (resp. (polynomial, exponential) memory).
\qed\end{proof}

We now turn to an exponential reduction of the class of games $(\G,\Obj = \cap_{m=1}^{\n} \Obj_m)$ with each $\Obj_m \in \WISL$ to the class of games $(\G',\Obj' = \cap_{m=1}^{\n} \Obj'_m)$ with each $\Obj'_m$ being an objective in  $\{\Buchi,\CoBuchi\}$.

\begin{proposition} \label{prop:WISLtoBC}
Each $\n$-weighted game $(\G,\Obj)$ with $\Obj = \cap_{m=1}^{\n} \Obj_m$ such that for all $m$, $$\Obj_m \in \WISLmath$$ can be exponentially reduced to a game $(\G',\Obj')$ with $O(|V| \cdot (\lambda^2 \cdot W)^n)$ vertices and $O(|E| \cdot (\lambda^2 \cdot W)^n)$ edges, and $\Obj' = \cap_{m=1}^{\n} \Obj'_m$ such that for all $m$, $$\Obj'_m \in \{\Buchi,\CoBuchi\}$$
Moreover, if $\Obj_m$ is a $\DFWMP$ (resp. $\Inf$, $\Sup$, $\LimInf$, $\LimSup$) objective, then $\Obj'_m$ is a $\CoBuchi$ (resp. $\CoBuchi$, $\Buchi$, $\CoBuchi$, $\Buchi$) objective.

\noindent A memoryless (resp.  finite-memory) strategy in $G'$ transfers to a finite-memory strategy in $G$.
\end{proposition}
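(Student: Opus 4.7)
The plan is to compose two reductions. First, by Proposition~\ref{prop:WISLtoWreg} I may assume that the game already has objectives in $\{\DFWMP,\Safe,\Reach,\CoBuchi,\Buchi\}$, at a polynomial cost only. The remaining task is therefore to rewrite each $\Safe$, $\Reach$ and $\DFWMP$ objective as a $\Buchi$ or $\CoBuchi$ objective. For $\Safe$ I reuse the absorption trick recalled in the preliminaries: inside a conjunction, making every vertex of $V\setminus U$ absorbing turns $\Safe(U)$ into $\CoBuchi(U)$ without changing the vertex count and without affecting the other objectives, which are evaluated only on plays that never enter $V\setminus U$. The symmetric idea handles $\Reach(U)$, which becomes $\Buchi(U)$ after making each vertex of $U$ absorbing with a self loop of weight $0$ on every dimension, so that neither the remaining $\DFWMP$ objectives nor the other $\CoBuchi/\Buchi$ objectives can be falsified on the absorbed tail.

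The only non-trivial step is the elimination of $\DFWMP$ objectives, which I perform simultaneously on all relevant dimensions by a synchronous product with window trackers. For each dimension~$m$ carrying $\Obj_m=\DFWMP(\lambda,0)$, the tracker is a deterministic automaton whose state is either a sink $\bot_m$ or a pair $(p_m,s_m)$ with $p_m\in\{0,\ldots,\lambda-1\}$ and $s_m\in\{-\lambda W,\ldots,0\}$, recording the length and the running sum of the currently open window. From state $(p_m,s_m)$ on an edge of weight $w$ on dimension~$m$: the tracker resets to $(0,0)$ if $s_m+w\geq 0$; advances to $(p_m+1,\,s_m+w)$ if $s_m+w<0$ and $p_m<\lambda-1$; falls (and then self-loops) in $\bot_m$ otherwise. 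I then replace $\Obj_m$ by the co-B\"uchi objective $\CoBuchi(V'\setminus\{\bot_m\})$. Each tracker has $O(\lambda^2 W)$ states, so the product game $\G'$ has $O(|V|\cdot(\lambda^2 W)^n)$ vertices and $O(|E|\cdot(\lambda^2 W)^n)$ edges, as announced.

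The main correctness statement, and what I expect to be the main obstacle of the full proof, is the dimension-wise equivalence: a play $\rho$ satisfies $\DFWMP(\lambda,0)$ on dimension~$m$ iff its unique lift in $\G'$ never enters $\bot_m$. The forward direction follows from Lemma~\ref{lem:iclosed}: along the maximal \GoodDec{\lambda} of $\rho$, every window closes within $\lambda$ steps, so the tracker resets before $p_m$ could reach $\lambda-1$ with a still-negative sum, and $\bot_m$ is avoided. Conversely, a play that avoids $\bot_m$ forever is partitioned by its resetting moments into blocks of length at most~$\lambda$ whose partial sums become non-negative at the end of each block; this is precisely a \GoodDec{\lambda} of~$\rho$, so by Lemma~\ref{lem:iclosed} again, $\rho\in\DFWMP(\lambda,0)$.

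Strategy transfer is then routine. A (memoryless or finite-memory) strategy $\sigma'$ in~$\G'$ is simulated in~$\G$ by a Moore machine whose memory is the product of all the tracker states and the internal memory of~$\sigma'$. Since each tracker has $O(\lambda^2 W)$ states, the resulting strategy is finite-memory, which, composed with the transfer of Proposition~\ref{prop:WISLtoWreg}, yields the last sentence of the statement.
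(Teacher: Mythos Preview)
Your treatment of $\DFWMP$ is essentially the paper's own construction, and your absorption of $V\setminus U$ to turn $\Safe(U)$ into $\CoBuchi(U)$ is sound inside a conjunction, for the reason you give: any play that enters $V\setminus U$ has already lost the conjunction, so what happens afterwards is irrelevant.

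The ``symmetric'' step for $\Reach(U)$, however, is \emph{not} symmetric and is where the argument breaks. Absorbing at $U$ with weight-$0$ self-loops freezes the play forever as soon as the reachability target is hit; but in the original game, Player~$1$ may still need the post-visit suffix to satisfy the remaining conjuncts. Concretely, take $\Obj=\Reach(U_1)\cap\Buchi(U_2)$ with $U_1\cap U_2=\emptyset$: if Player~$1$ must pass through $U_1$ to reach $U_2$, he wins in the original game but loses in your modified one, since after absorption he is stuck outside $U_2$ forever. The same phenomenon arises with a second $\Reach$ conjunct, with $\CoBuchi(U')$ when the absorbing vertex lies outside $U'$, or even with a $\DFWMP$ objective on a different dimension whose self-loop weight you did \emph{not} set to~$0$ (you only control the chosen dimension via Proposition~\ref{prop:WISLtoWreg}; actually after that reduction the weight is on a single dimension for $\DFWMP$, but the point on $\Buchi/\CoBuchi$ stands). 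So your sentence ``neither the remaining $\DFWMP$ objectives nor the other $\CoBuchi/\Buchi$ objectives can be falsified on the absorbed tail'' is simply false.

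The fix is what the paper does: rather than altering the transition structure, take a synchronous product with a one-bit monitor per $\Sup/\Reach$ dimension that latches to~$1$ once the target has been met, and replace $\Reach(U)$ by $\Buchi$ of the ``bit $=1$'' slice. This leaves all other conjuncts intact on every play. The same one-bit product handles $\Inf/\Safe$, $\LimInf$, $\LimSup$ directly, so there is no need to detour through Proposition~\ref{prop:WISLtoWreg} at all.
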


\begin{proof}
Let $(\G,\Obj)$ with $\Obj = \cap_{m=1}^{\n} \Obj_m$ such that $\Obj_m \in \WISL$ for all $m$. We proceed like in the previous proof by first considering 1-weighted games.

Let $\n = 1$ and $\Obj = \Inf(0)$. From $\G$, we construct a new unweighted game structure $\G' = (V'_1,V'_2,E')$ in a way to keep in the current vertex of $\G'$ the information whether or not we have seen a strictly negative weight along the current history in $\Hist(\G)$. More precisely, we define $V' = V \times \{-1,0\}$ such that $-1$ (resp. $0$) expresses the presence (resp. absence) of strictly negative weights, and $E'$ is composed of the edges $((v,a),(v',a'))$ such that $e = (v,v') \in E$ and $a' = -1$ if $w(e) < 0$ or $a = -1$, and $a' = 0$ otherwise. Then, we define the set $U' \subseteq V'$ composed of all vertices in $V \times \{0\}$. It follows that player~$1$ has a winning strategy from $v_0$ in $(\G,\Obj)$ iff player~$1$ has a winning strategy from $(v_0,0)$ in $(\G',\Obj')$ with $\Obj' = \CoBuchi(U')$. 

The case $\Obj = \Sup(0)$ is solved similarly with the current vertex of $\G'$ now keeping the information whether or not we have seen a positive weight along the current history in $\Hist(\G)$. We define $V' = V \times \{-1,0\}$ such that $0$ (resp. $-1$) expresses the presence (resp. absence) of positive weights; $E'$ is composed of the edges $((v,a),(v',a'))$ such that $e = (v,v') \in E$ and $a' = 0$ if $w(e) \geq 0$ or $a = 0$, and $a' = -1$ otherwise; and $U' = V \times \{0\}$. Then player~$1$ has a winning strategy from $v_0$ in $(\G,\Obj)$ iff player~$1$ has a winning strategy from $(v_0,-1)$ in $(\G',\Obj')$ with $\Obj' = \Buchi(U')$.

The cases $\Obj = \LimInf(0)$ and $\Obj = \LimSup(0)$ are solved more easily. Again $V' = V \times \{-1,0\}$, but now $((v,a),(v',a')) \in E'$ iff $e = (v,v') \in E$ and $a' = 0$ if $w(e) \geq 0$, $a' = -1$ otherwise (we just remember the sign of the current weight in $\G$). The $\LimInf(0)$ (resp. $\LimSup$) objective is replaced by the $\CoBuchi(U')$ (resp. $\Buchi(U')$) objective with $U' = V \times \{0\}$.

It remains to study the case $\Obj = \DFWMP(\lambda,0)$ when $\n = 1$. We proceed like in~\cite{Chatterjee0RR15}. By Lemma~\ref{lem:iclosed}, a play $\rho$ is winning iff $\rho$ has a \GoodDec{\lambda}. This good decomposition can be supposed maximal by Lemma~\ref{lem:factorize}. Hence $\rho$ is winning iff there exists a sequence $(k_i)_{i \geq 0}$ with $k_0 = 0$ such that for each $i$, we have $k_{i+1} - k_i \in \{1,\ldots, \lambda\}$, and the \window{k_i} is \fclosed{\lambda}{k_{i+1}}. We thus keep in the current vertex of $\G'$ the sum of the weights of the current history in $\Hist(\G)$, and as soon as it turns non-negative (in at most $\lambda$ steps), we reset the sum counter to zero. The number of performed steps is also stored in the current vertex as a second counter. More precisely, we define $V'  = V \times \{-\lambda \cdot W, \ldots, 0\} \times \{0, \ldots, \lambda-1 \} \cup \{\beta\}$, where the additional vertex $\beta$ denotes a special vertex for the detection of a \bad{\lambda} window (the current sum is negative and the number of steps exceeds $\lambda$). Let $u = (v,s,l) \in V'$ be such that $v \in V$, $s$ is the current sum, and $l$ is the current number of steps. Given $e = (v,v') \in E$, we construct the edge $(u,u') \in E'$ outgoing $u$ such that
$$u' =
\begin{cases}
(v', s + w(e), l+1) &  \mbox{ if } s + w(e) < 0 \mbox{ and } l < \lambda -1 \\
\beta &                     \mbox{ if } s + w(e) < 0 \mbox{ and } l = \lambda -1 \\
(v',0,0)  &  \mbox{ if } s + w(e) \geq 0 
\end{cases}$$
We also add the edge $(\beta,\beta)$ to $E'$, and define $U' = V' \ssetminus \{\beta\}$. Clearly, player~$1$ has a winning strategy from $v_0$ in $(\G,\Obj)$ iff player~$1$ has a winning strategy from $(v_0,0,0)$ in $(\G',\Obj')$ with $\Obj' = \CoBuchi(U')$.

Let us turn to $\n$-weighted games $\G$ with $\n \geq 2$. Let $\n^*$ be the number of objectives $\Obj_m \in \ISL$ and $\n - \n^*$ be the number of objectives $\Obj_m = \DFWMP$. We proceed as in the case $\n = 1$ by adding some information in the vertices for each component $m \in \{1, \ldots, \n\}$: 
\begin{eqnarray} \label{eq:exp}
V' &=& V \times \{-1,0\}^{\n^*} \times \Big(\{-\lambda \cdot W, \ldots, 0\} \times \{0, \ldots, \lambda - 1\} ~ \cup ~ \{\beta\}\Big)^{n - n^*}.
\end{eqnarray}
The edges of $E'$ are defined component-wise as explained above\footnote{The same vertex $\beta$ could be used for each objective $\Obj_m = \DFWMP$, instead of one such vertex by component.}. The game structure $\G'$ has a size exponential in the initial game structure $\G$, that is, with $O(|V'|) = O(|V| \cdot (\lambda^2 \cdot W)^n)$ vertices and $O(|E'|) = O(|E| \cdot (\lambda^2 \cdot W)^n)$ edges.

Finally, any strategy in $G'$ that is memoryless or finite-memory transfers to a strategy in $G$ that has to be finite-memory in both cases.
\qed\end{proof}

The last reduction is concerned with a polynomial reduction of the class of games $(\G,\Obj = \cap_{m=1}^{\n} \Obj_m)$ with each $\Obj_m \in \Reg$ to the class of games $(\G',\Obj' = \cap_{m=1}^{\n} \Obj'_m)$ with each $\Obj'_m \in  \{\Sup, \Inf, \LimSup, \LimInf \}$.

\begin{proposition}\label{prop:WRegtoWISL}
Each unweighted game $(G,\Obj)$ with $\Obj = \cap_{m=1}^n \Obj_m$ such that for all $m$,
$$ \Obj_m \in \{\Reach,\Safe,\Buchi,\CoBuchi\}$$
can be polynomially reduced to an $n$-weighted game $(G',\Obj')$ with $|V|$ vertices and $|E|$ edges, and $\Obj' = \cap_{m=1}^n \Obj'_m$, such that for all $m$,
$$ \Obj'_m \in \{\Sup, \Inf, \LimSup, \LimInf \}$$
Moreover, if $\Obj_m$ is a $\Reach$ (resp. $\Safe, \Buchi, \CoBuchi$) objective, then $\Obj'_m$ is a $\Sup$ (resp. $\Inf$, $\LimSup$, $\LimInf$) objective.

\noindent A memoryless (resp. (polynomial, exponential) memory) strategy in $G'$ transfers to a memoryless (resp. (polynomial, exponential) memory) strategy in $G$.
\end{proposition}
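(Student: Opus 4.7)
The plan is to take $\G' = \G$ as a graph (same vertex partition $V_1, V_2$ and same edge set $E$), so no blow-up occurs, and simply equip $E$ with a carefully chosen $n$-dimensional weight function $w: E \to \Z^n$. Recall from Remark~\ref{rem:threshold0} that we may assume the threshold vector is $(\geq 0,\ldots,\geq 0)$. For every dimension $m \in \{1, \ldots, n\}$ with qualitative objective $\Obj_m$ parametrized by a set $U_m \subseteq V$, I will define the $m^{\text{th}}$ component of the weight function by
\[
w_m(v,v') \;=\;
\begin{cases}
0 & \text{if } v \in U_m, \\
-1 & \text{if } v \notin U_m,
\end{cases}
\]
so that the weight on an edge records, via its source, whether a ``good'' vertex is being visited. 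Notice that the resulting game $\G'$ has exactly $|V|$ vertices and $|E|$ edges, as required.

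The correctness check proceeds by inspecting the four possible shapes of $\Obj_m$ and showing that for every play $\rho = \rho_0\rho_1\dots$ the $m^{\text{th}}$ qualitative condition holds in $\G$ iff the corresponding Inf/Sup/LimInf/LimSup condition holds in $\G'$. Since $\rho$ is infinite, every vertex $\rho_k$ appears as the source of some edge $(\rho_k, \rho_{k+1})$, hence the set of sources of edges of $\rho$ is exactly the set of vertices visited by $\rho$. Then: $\Reach(U_m)$ holds iff some source lies in $U_m$ iff some edge has weight $0$ iff $\Sup(w_m(\rho)) \geq 0$; $\Safe(U_m)$ holds iff every source lies in $U_m$ iff every edge has weight $0$ iff $\Inf(w_m(\rho)) \geq 0$; and the two infinitary cases $\Buchi(U_m)$ and $\CoBuchi(U_m)$ translate to $\LimSup \geq 0$ and $\LimInf \geq 0$ by exactly the same argument applied to the tail of $\rho$. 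Taking the intersection over $m$ yields $\rho \in \Obj \iff \rho \in \Obj'$, so the winning sets from any initial vertex coincide.

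Because $\G$ and $\G'$ share the same underlying arena and the same plays, a strategy in one game is literally the same object as a strategy in the other: a Moore machine of size $k$ realising a winning strategy in $(\G',\Obj')$ realises a winning strategy in $(\G,\Obj)$ with the same memory size, and vice versa. This immediately gives the claimed memoryless / polynomial / exponential memory transfer in both directions.

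The only subtlety, and the thing I would make sure not to miss, is the choice of \emph{source} versus \emph{target} in the definition of $w_m$. Using the target would misclassify the initial vertex $\rho_0$: if $\rho_0 \in U_m$ but no later vertex is in $U_m$, then $\Reach(U_m)$ would be satisfied in $\G$ while every edge weight would be $-1$ in $\G'$, breaking the equivalence. Keying on the source avoids this, since in an infinite play every visited vertex is the source of a subsequent edge.
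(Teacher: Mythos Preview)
Your proof is correct and follows essentially the same approach as the paper: keep the arena unchanged, add an $n$-dimensional weight function with $w_m(v,v') = 0$ if $v \in U_m$ and $-1$ otherwise (keyed on the source), and translate each qualitative objective to its quantitative counterpart with threshold $\geq 0$. Your explicit discussion of the source-versus-target subtlety is a nice addition that the paper leaves implicit.
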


\begin{proof}
Let $(\G,\Obj)$ with $\Obj = \cap_{m=1}^{\n} \Obj_m$ such that $\Obj_m \in \Reg$ for all $m$. We proceed like in the previous proofs by first considering 1-weighted games.

Let $n = 1$ and $\Obj = \Reach(U)$ (resp. $\Safe(U)$, $\Buchi(U)$, $\CoBuchi(U)$). We construct the new game $(G',\Obj')$ as follows. The game structure $G'$ is the game structure $G$ with a weight function $w$ such that, for all $e = (v,v') \in E$, $w(e) = 0$ if $v \in U$, $-1$ otherwise. We let $\Obj' = \Sup(0)$ (resp. $\Inf(0)$, $\LimSup(0)$, $\LimInf(0)$). Thus, the game $G'$ is built in a way to necessarily see a non-negative weight iff one visits a vertex of $U$ in $\G$. Clearly, player~$1$ has a winning strategy from $v_0$ in the $(G,\Obj)$ iff he has a winning strategy from $v_0$ in $(G',\Obj')$. 

Let us now suppose that $n \geq 2$. We proceed as in the case $n =1$ except that the weight function is $n$-dimensional, and is defined component-wise as explained above. Let us note that the game structure $G'$ has still the same size as $G$.

Finally, any strategy in $G'$ that is memoryless (resp. (polynomial, exponential) memory) transfers to a strategy in $G$ that is again memoryless (resp. (polynomial, exponential) memory).
\qed\end{proof}

\subsection{Proof of Theorem~\ref{thm:general}}

We here provide the proof of Theorem~\ref{thm:general} which uses the reduction of Proposition~\ref{prop:WISLtoBC}.

\begin{proof}[of Theorem~\ref{thm:general}]
First, we show that the \prob\ is in $\mathsf{EXPTIME}$. By Proposition~\ref{prop:WISLtoBC}, we can construct a game $(G',\Obj')$ with $\Obj' = \cap_{m=1}^n \Obj'_m$, such that for all $m$, $\Obj'_m \in \{\Buchi, \CoBuchi\}$, and the size of the game is exponential with $O(|V| \cdot (\lambda^2 \cdot W)^n)$ vertices and $O(|E| \cdot (\lambda^2 \cdot W)^n)$ edges. Recall that the intersection of co-B\"uchi objectives is a co-B\"uchi objective. It follows that $\G'$ is a generalized B\"uchi $\cap$ co-B\"uchi game. Then, by Theorem~\ref{thm:reg} (last item), we can compute the winning sets of both players in $\G$ in $O(n^2 \cdot |V'| \cdot |E'|) = O(|V| \cdot |E|  \cdot (\lambda^2 \cdot W)^{2n})$ time. Moreover, since polynomial memory strategies are sufficient in $\G'$, exponential memory strategies are sufficient in $\G$. Finally, the $\mathsf{EXPTIME}$-hardness and the necessity of exponentiel memory follow from Theorem~\ref{thm:WMP}. This establishes the result.
\qed\end{proof}
\section{Efficient fragment with one $\DFWMP$ objective} \label{sec:OneWindow}

%In this section, we focus on a particular class of games $(G, \Obj)$ with $\Obj = \cap_{m=1}^{\n} \Obj_m$, $\Obj_m \in \WISL$, such that the threshold problem becomes polynomial. We do not consider the case where at least two $\Obj_m$ are $\DFWMP$ objectives because the threshold problem is already ${\sf EXPTIME}$-hard in this case~\cite{Chatterjee0RR15} (see also Theorem~\ref{thm:WMP}). As the case where no objective is a $\DFWMP$ objective was covered by the results of the previous section, we suppose that exactly one objective is a $\DFWMP$ objective.
%We also do not consider the case where several $\Obj_m$ are $\Sup$ objectives because the threshold problem is already ${\sf PSPACE}$-hard in this case (see Theorem~\ref{thm:ISL}).

In the previous section, we considered  games $(G, \Obj)$ with $\Obj = \cap_{m=1}^{\n} \Obj_m$ being any intersection of objectives in $\WISL$. In this section, we focus on a particular class of games in a way to achieve a lower complexity for the \prob. We do not consider the case where at least two $\Obj_m$ are $\DFWMP$ objectives because the \prob\ is already ${\sf EXPTIME}$-hard in this case (by Theorem~\ref{thm:WMP}). We thus focus on the intersections of exactly one\footnote{the case with no $\DFWMP$ objective will be treated in the next section.} objective $\DFWMP$ and any number of objectives of one kind in $\ISL$. Notice that this number must be fixed in the case of objectives $\Sup$ to avoid ${\sf PSPACE}$-hardness in this case (see Remark~\ref{rem:severalsup} below). For the considered fragment, we show that the \prob\ is $\sf P$-complete when $\lambda$ is polynomial in the size of the game structure. The latter assumption is reasonable in practical applications where one expects a positive mean-payoff in any ``short'' window sliding along the play.

\begin{theorem} \label{thm:2obj}
Let $(G, \Obj)$ be an $n$-weighted game with objective $\Obj = \Obj_1 \cap \Gamma$ for player~$1$ such that $\Obj_1 = \DFWMP$ and $\Gamma = \cap_{m=2}^{\n} \Obj_m$ such that $\forall m\, \Obj_m = \Inf$ (resp. $\forall m\, \Obj_m = \LimInf$, $\forall m\, \Obj_m = \LimSup$, $\{\forall m\, \Obj_m = \Sup$ and $\n$ is fixed$\}$).
Then the \prob\ is decidable (in time polynomial in the size of the game, $\lambda$ and $\lceil \log(W)\rceil$). In general, both players require finite-memory strategies, and pseudo-polynomial memory is sufficient for both players. Moreover, when $\lambda$ is polynomial in the size of the game then the \prob\ is $\sf P$-complete.
\end{theorem}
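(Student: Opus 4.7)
The plan is to treat the unique $\DFWMP$ objective separately from the uniform block of $\ISL$ objectives, so that the blow-up inherent to a $\DFWMP$ is paid only once and on a single dimension, and to lift the unidimensional polynomial-time $\DFWMP$ algorithm of~\cite{Chatterjee0RR15} to the setting of an added qualitative condition. First, I would apply Proposition~\ref{prop:WISLtoWreg} to $\Gamma$ only (leaving $\DFWMP$ untouched), so that $\Gamma$ collapses to a single tractable objective $\Gamma'$: $\Gamma' = \Safe(U)$ when $\Obj_m = \Inf$ for all $m$, $\Gamma' = \CoBuchi(U)$ when $\Obj_m = \LimInf$, $\Gamma' = \GenBuchi(U_2, \ldots, U_n)$ when $\Obj_m = \LimSup$, and $\Gamma' = \GenReach(U_2, \ldots, U_n)$ when $\Obj_m = \Sup$ (with $n$ fixed). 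The resulting game has size polynomial in $|V|$ and $|E|$ and carries the objective $\DFWMP \cap \Gamma'$.

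For $\Gamma' = \Safe(U)$ the solution is immediate: compute the maximal $2$-closed subgame inside $U$, namely $V \setminus \Attr{2}{V \setminus U}{G}$, and then run the polynomial-time unidimensional $\DFWMP$ algorithm of Theorem~\ref{thm:WMP} on it. For the other three shapes of $\Gamma'$, I would first derive decidability and a pseudo-polynomial bound via the single-dimension instance of Proposition~\ref{prop:WISLtoBC}, which turns $\DFWMP$ into a $\CoBuchi$ objective at the price of an $O(\lambda^2 W)$ blow-up; the resulting game $G''$ then presents (\textit{i}) a single $\CoBuchi$ game when $\Gamma' = \CoBuchi$, (\textit{ii}) a $\GenBuchi \cap \CoBuchi$ game covered by the last item of Theorem~\ref{thm:reg} when $\Gamma' = \GenBuchi$, and (\textit{iii}) after augmenting the state with the bitmap $S \subseteq \{2, \ldots, n\}$ of already-visited targets (a constant factor $2^{n-1}$), again a single $\CoBuchi$ game when $\Gamma' = \GenReach$ with $n$ fixed. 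In each case this is polynomial in $|V''|$ and $|E''|$, hence pseudo-polynomial in $|V|$, $|E|$, $n$, $\lambda$ and $W$; the pseudo-polynomial memory bound follows by the strategy transfers of Propositions~\ref{prop:WISLtoWreg} and~\ref{prop:WISLtoBC}, while finite memory is already necessary in unidimensional $\DFWMP$ games (Theorem~\ref{thm:WMP}).

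To reach the sharper time bound polynomial in $|V|$, $|E|$, $n$, $\lambda$ and $\lceil \log W \rceil$ announced in the statement, I would avoid the explicit $\lambda^2 W$-sized product and instead interleave the attractor-style fixpoint solving the qualitative objective $\Gamma'$ with the $O(\lambda \cdot |V| \cdot (|V|+|E|) \cdot \lceil \log_2(\lambda W)\rceil)$ unidimensional $\DFWMP$ algorithm (Theorem~\ref{thm:WMP}): in the generalized reachability and B\"uchi cases this means nesting the $\DFWMP$ computation inside the loop that peels off the $2^{n-1}$ subsets of visited targets (for $\GenReach$) or the polynomially many Zielonka recursion layers (for $\GenBuchi$). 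When $\lambda$ is polynomial in the size of the game this yields a polynomial-time algorithm and hence $\sf P$-completeness, the lower bound being inherited from unidimensional $\DFWMP$ games (Theorem~\ref{thm:WMP}).

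The main obstacle I anticipate is the case $\Obj_m = \Sup$: one must both justify that fixing $n$ is essential (otherwise the embedded $\GenReach$ alone is already $\sf PSPACE$-complete by Theorem~\ref{thm:reg}, so the whole problem would inherit $\sf PSPACE$-hardness) and check that the $2^{n-1}$-factor subset construction composes cleanly with the window-based algorithm so that the complete solution remains polynomial in $\lceil \log W \rceil$. The three other subcases reduce more directly to standard polynomial algorithms either on the subgame induced by $\Gamma'$ ($\Safe$) or on the pseudo-polynomial product game.
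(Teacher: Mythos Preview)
Your overall structure matches the paper's: reduce $\Gamma$ via Proposition~\ref{prop:WISLtoWreg} to a single qualitative objective and then handle each of the resulting cases $\DFWMP \cap \{\Safe, \Reach, \GenReach, \Buchi, \GenBuchi, \CoBuchi\}$ separately. Your $\Safe$ case is exactly the paper's, and the pseudo-polynomial detour through Proposition~\ref{prop:WISLtoBC} is correct and recovers decidability together with the memory bounds. The $\sf P$-hardness argument and the observation on the necessity of fixing $n$ for $\Sup$ are also the paper's.

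The gap is in the step you flag yourself: achieving time polynomial in $\lceil \log W \rceil$ rather than in $W$. You propose to ``interleave'' the qualitative fixpoint with the unidimensional $\DFWMP$ algorithm of Theorem~\ref{thm:WMP}, but that algorithm cannot be used as a black box here. It only decides whether player~$1$ can keep \emph{all} windows good; it does not tell you whether player~$1$ can reach a prescribed target $U$ \emph{while maintaining a $\lambda$-good decomposition along the way}, which is exactly what $\DFWMP \cap \Reach(U)$ requires (and naive compositions --- solve $\Reach$ then $\DFWMP$ on the winning region, or conversely --- are unsound). The paper fills this gap by introducing two new objectives with dedicated algorithms: $\ICWReach{\lambda}{U}$ (inductively-close a single window and land in $U$, computed by a targeted variant of $\mathsf{GoodWin}$ from~\cite{Chatterjee0RR15}) and $\GDReach{\lambda}{U}$ (reach $U$ through a history admitting a $\lambda$-good decomposition, obtained by iterating $\AlgoICWReach$). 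These window-aware attractors run in $O(\lambda \cdot |V| \cdot (|V|+|E|) \cdot \lceil \log(\lambda W)\rceil)$ time and serve as the building blocks for the $\Reach$, $\GenReach$, $\Buchi$ and $\GenBuchi$ cases. The $\CoBuchi$ case is the most delicate and does \emph{not} reduce to a single attractor layer: the paper introduces two further auxiliary objectives $\Objg(U,X,Z)$ and $\Objf(U,X)$ and solves the game via a nested pair of fixpoints (a greatest fixpoint on $Z$ inside a least fixpoint on $X$). None of this machinery is implied by ``nesting the $\DFWMP$ computation inside the Zielonka recursion layers,'' and without it you do not meet the $\lceil \log W \rceil$ bound stated in the theorem.
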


To prove this theorem, we use the polynomial reduction of Proposition~\ref{prop:WISLtoWreg} to obtain a game $(G', \Obj'_1 \cap \Gamma')$ (with $\Gamma' = \cap_{m=2}^{\n} \Obj'_m$) such that $\Obj'_1 = \DFWMP$ and $\forall m, \Obj'_m \in \Reg$. Recall that the intersection of safety (resp. co-B\"uchi) objectives is a safety (resp. co-B\"uchi) objective. We thus have to study $\DFWMP \cap \Safe$ (resp. $\DFWMP \cap \Reach$, $\DFWMP \cap \GenReach$ (with $n$ fixed), $\DFWMP \cap \Buchi$, $\DFWMP \cap \GenBuchi$, $\DFWMP \cap \CoBuchi$) games. Each of these six possible cases for $\Obj'_2$ is studied in a separate section; the proof of Theorem~\ref{thm:2obj} will follow. 

Table~\ref{table:poly} gives an overview of the properties of the six studied games.
This table indicates time polynomial in the size of the game, $\lambda$ and $\lceil \log(W)\rceil$, and pseudo-polynomial memories for both players. When $\lambda$ is polynomial in size of the game, the complexity becomes polynomial. Moreover, having pseudo-polynomial memories is not really a problem since the proofs show that the strategies can be efficiently encoded by programs using two counters, as in the case of Theorem~\ref{thm:WMP}.

\begin{table}
\scriptsize
\begin{center}
\begin{tabular}{|c|c|c|c|c|c|c|c|}
\hline
Objective           	 &  Algorithmic complexity                   & Player~$1$ memory & Player~$2$ memory \\
\hline
$\DFWMP \cap \Safe$  & $O(\lambda \cdot |V| \cdot (|V| + |E|) \cdot \lceil \log(\lambda \cdot W)\rceil)$       & $O(\lambda^2 \cdot W)$ & $O(\lambda^2 \cdot W \cdot |V|)$  \\
$\DFWMP \cap \Reach$  &  $O(\lambda \cdot |V| \cdot (|V| + |E|) \cdot \lceil \log(\lambda \cdot W)\rceil)$     & $O(\lambda^2 \cdot W \cdot |V|)$    & $O(\lambda^2 \cdot W \cdot |V|)$ \\
$\DFWMP \cap \GenReach$ \tiny{($i$ fixed})  &  $O(\lambda \cdot 2^{2i} \cdot |V| \cdot (|V| + |E|) \cdot \lceil \log(\lambda \cdot W)\rceil)$     & $O(\lambda^2 \cdot 2^i \cdot W \cdot |V|)$    & $O(\lambda^2 \cdot 2^i \cdot W \cdot |V|)$ \\
$\DFWMP \cap \Buchi$   &    $O(\lambda \cdot |V|^2 \cdot (|V| + |E|) \cdot \lceil \log(\lambda \cdot W)\rceil)$  & $O(\lambda^2 \cdot W \cdot |V|)$     & $O(\lambda^2 \cdot W \cdot |V|^2)$  \\
$\DFWMP \cap \GenBuchi$      & $O(\lambda \cdot i^3 \cdot |V|^2 \cdot (|V| + |E|) \cdot \lceil \log(\lambda \cdot W)\rceil)$  & $O(\lambda^2 \cdot W \cdot i \cdot |V|)$     & $O(\lambda^2 \cdot W \cdot i^2 \cdot |V|^2)$ \\
$\DFWMP \cap \CoBuchi$    & $O(\lambda \cdot |V|^2 \cdot (|V| + |E|) \cdot \lceil \log(\lambda \cdot W)\rceil)$ &  $O(\lambda^2 \cdot W \cdot |V|^2)$      & $O(\lambda^2 \cdot W \cdot |V|)$ \\
\hline
\end{tabular}
\end{center}
\caption{Overview of the fragment ($i$ is the number of objectives in the intersection of reachability/B\"uchi objectives)}
\label{table:poly}
\end{table}

\subsection{Objective $\DFWMP(\lambda,0) \cap \Safe(U)$}
%--------------------------------------------------------------------------

We begin with the easy case $\Obj'_2 = \Safe$. To solve this game, we first compute the winning set $X_1$ for the objective $\Safe(U)$ and then the winning set in the subgame $\G[X_1]$ for the objective $\DFWMP(\lambda,0)$.

\begin{proposition} \label{prop:WMPSafe}
Let $\G = \Gdev$ be a $1$-weighted game structure, and assume that $\Obj$ is the objective $\DFWMP(\lambda,0) \cap \Safe(U)$.
Then $\WinG{1}{\Obj}{\G}$ can be computed in $O(\lambda \cdot |V| \cdot (|V| + |E|) \cdot \lceil \log_2(\lambda \cdot W) \rceil)$ time and strategies with memory in $O(\lambda^2 \cdot W)$ (resp. in $O(\lambda^2 \cdot W \cdot |V|)$) are sufficient for player~$1$ (resp. player~$2$).
\end{proposition}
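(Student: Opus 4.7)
The plan is the two-phase approach announced just before the proposition: first carve out the subgame in which player~$1$ can respect the safety constraint, then solve a pure unidimensional $\DFWMP$ game inside it. Concretely, I would first compute $X_1 := V \setminus \Attr{2}{V \setminus U}{\G}$. By Theorem~\ref{thm:attr} this takes $O(|V|+|E|)$ time, and by Corollary~\ref{cor:reg} we have $X_1 = \WinG{1}{\Safe(U)}{\G}$ and $X_1$ is $2$-closed, so the subgame $\G[X_1]$ is well-defined. Player~$1$ has a memoryless strategy on $X_1$ to stay in $X_1$, and player~$2$ has a memoryless attractor strategy $\tau^{\text{att}}$ on $V \setminus X_1$ forcing a visit to $V \setminus U$.

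Second, on $\G[X_1]$ I would invoke the unidimensional $\DFWMP$ algorithm of Theorem~\ref{thm:WMP} to compute the winning set $W \subseteq X_1$ for $\DFWMP(\lambda,0)$; this runs in $O(\lambda \cdot |V| \cdot (|V|+|E|) \cdot \lceil \log_2(\lambda \cdot W)\rceil)$ time and yields, on $W$, a winning strategy $\sigma$ for player~$1$ with memory in $O(\lambda^2 \cdot W)$ and, on $X_1 \setminus W$, a winning strategy $\tau^{\text{wmp}}$ for player~$2$ with memory in $O(\lambda^2 \cdot W \cdot |V|)$, matching the claimed memory bounds.

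The central claim to verify is $\WinG{1}{\Obj}{\G} = W$. For the inclusion $W \subseteq \WinG{1}{\Obj}{\G}$, extend $\sigma$ arbitrarily outside $X_1$: since $X_1$ is $2$-closed and $\sigma$ keeps picking edges into $X_1$, every $\sigma$-consistent outcome starting in $W$ stays inside $X_1$ (so $\Safe(U)$ holds) and satisfies $\DFWMP(\lambda,0)$ by choice of $\sigma$. For the converse, from any $v \notin W$ I would build a winning player~$2$ strategy against $\Obj$ as follows: if $v \notin X_1$, play $\tau^{\text{att}}$ to reach $V \setminus U$; if $v \in X_1 \setminus W$, follow $\tau^{\text{wmp}}$ as long as the play stays in $X_1$, and switch to $\tau^{\text{att}}$ the instant player~$1$ crosses into $\Attr{2}{V \setminus U}{\G}$. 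Either way $\Obj$ is defeated from $v$, so by determinacy of the $\omega$-regular game $(\G,\Obj)$ we conclude $v \notin \WinG{1}{\Obj}{\G}$.

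I do not anticipate a deep obstacle here, as the argument is essentially modular and the heavy lifting is delegated to Theorem~\ref{thm:WMP}. The one point that requires care is the gluing for player~$2$: because $X_1$ is $2$-closed, player~$2$ never leaves $X_1$ on her own move, so the switch from $\tau^{\text{wmp}}$ to $\tau^{\text{att}}$ is triggered exclusively by a move of player~$1$, and the two sub-strategies are consulted only on histories in their own disjoint regions. Hence player~$2$'s combined memory remains $O(\lambda^2 \cdot W \cdot |V|)$, and the overall running time is dominated by the $\DFWMP$ phase, matching the stated bound.
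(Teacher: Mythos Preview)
Your proposal is correct and follows essentially the same two-phase approach as the paper: compute the safety winning set $X_1$ (noting it is $2$-closed and contained in $U$), then solve the unidimensional $\DFWMP$ game on $\G[X_1]$ via Theorem~\ref{thm:WMP}, with the same gluing of player~$2$'s strategy across $X_1$ and its complement. The only point you leave implicit that the paper states explicitly is $X_1 \subseteq U$ (needed so that ``stays in $X_1$'' actually yields $\Safe(U)$), but this is immediate from $V \setminus U \subseteq \Attr{2}{V \setminus U}{\G}$.
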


\begin{proof}
In a first step we compute the winning set $X_1 = \WinG{1}{\Safe(U)}{\G}$ and $X_2 = V \ssetminus X_1$ in the game $(\G,\Safe(U))$, and in a second step we compute the winning set $Y_1 = \WinG{1}{\DFWMP(\lambda,0)}{\G[X_1]}$ and $Y_2 = X_1 \ssetminus Y_1$ in the game $(\G[X_1],\DFWMP(\lambda,0))$. Notice that $X_1 \subseteq U$, and $\G[X_1]$ is a game structure since $X_1$ is 2-closed by Corollary~\ref{cor:reg}. Let us prove that $Y_1 = \WinG{1}{\Obj}{\G}$ by showing that $Y_1 \subseteq \WinG{1}{\Obj}{\G}$ and $X_2 \cup Y_2 \subseteq \WinG{2}{\overline{\Obj}}{\G}$. 

Let $v \in Y_1$, and $\sigma_1$ be a winning strategy of player~$1$ for the objective $\DFWMP(\lambda,0)$  in $\G[X_1]$. Since $X_1$ is 2-closed and $X_1 \subseteq U$, $\sigma_1$ is also winning in $\G$ for this objective while staying in $U$. This shows that $v \in  \WinG{1}{\Obj}{\G}$.

Clearly $X_2 \subseteq \WinG{2}{\overline{\Obj}}{\G}$ since player~$2$ can force to reach $V \setminus U$ from any initial vertex $v \in X_2$. Let $v \in Y_2$. We define a strategy $\sigma_2$ for player~$2$ from $v$ as follows: while staying in $X_1$, play a winning strategy for the objective $\overline{\DFWMP(\lambda,0)}$ in $\G[X_1]$, and as soon as player~$1$ decides to go to $X_2$, shift to a winning strategy to reach $V \setminus U$. This strategy $\sigma_2$ clearly shows that $v \in \WinG{2}{\overline{\Obj}}{\G}$.

Let us now study the time complexity and the memory requirements. We easily get from Theorems~\ref{thm:attr} and~\ref{thm:WMP} that $Y_1$ can be computed in $O(\lambda \cdot |V| \cdot (|V| + |E|) \cdot \lceil \log_2(\lambda \cdot W) \rceil)$ time, and that both players have finite-memory strategies with a memory in $O(\lambda^2 \cdot W)$ for player~$1$ (resp. in $O(\lambda^2 \cdot W \cdot |V|)$ for player~$2$). 
\qed\end{proof}

\subsection{Objectives $\ICWReach{\lambda}{U}$ and $\GDReach{\lambda}{U}$}
%----------------------------------------------------------------------------------------------------

The case of games $(\G,\Obj_1 \cap \Obj_2)$ with $\Obj_1 = \DFWMP(\lambda,0)$ and $\Obj_2 \in \{\Reach, \GenReach, \Buchi,\GenBuchi$, $\CoBuchi\}$ (with $n$ fixed when $\Obj_2$ is a generalized reachability objective) is much more involved. To solve them, we need to develop some new tools generalizing the concept of $p$-attractor while dealing with good windows. To this end, let us introduce the next two new objectives (see Figure~\ref{fig:ICWReach}).

\begin{definition}
Let $\G = \Gdev$ be a $1$-weighted game structure, $U \subseteq V$ be a set of vertices, and $\lambda \in \N \setminus\{0\}$ be a window size. We consider the next two sets of plays:
\begin{itemize}
\item $\ICWReach{\lambda}{U} = \{ \rho \in \Plays(\G) \mid \exists l \in \{1,\ldots, \lambda\}, \rho_l \in U$, and the \window{0} is \iclosed{\lambda}{l}  $\}$, 
\item $\GDReach{\lambda}{U} = \{ \rho \in \Plays(\G) \mid \exists l \geq 0, \rho_l \in U \mbox{ and } \rhofactor{0}{l}, \mbox{ has a  \GoodDec{\lambda}} \}$.
\end{itemize}
%\begin{eqnarray*} \label{eq:GDAttr}
%\ICWReach{\lambda}{U} &=& \{ \rho \in \Plays(\G) \mid \exists l \in \{1,\ldots, \lambda\}, \rho_l \in U, \mbox{ and the \window{0}} \\ 
%&&\hspace{2.4cm} \mbox{is \iclosed{\lambda}{l}}  \}, \\
%\GDReach{\lambda}{U} &=& \{ \rho \in \Plays(\G) \mid \exists l \geq 0, \rho_l \in U \mbox{ and } \rhofactor{0}{l}, \mbox{ has a  \GoodDec{\lambda}} \}.
%%\ICWMeet{\lambda}{U} &=& \{ \rho \in \Plays(\G) \mid \exists j,l \in \{\red{0},\ldots, \lambda\}, j \leq l, \rho_j \in U \mbox{ and the \window{0} is \iclosed{\lambda}{l}}  \}
%\end{eqnarray*}
\end{definition}

\begin{figure}
\centering
\begin{tikzpicture}[scale=3]

	\draw (-0.7,0.5) node[] (UU) {$\ICWReach{\lambda}{U}:$};
	\draw (0,0.5) node[circle,draw] (ZZ) {$\rho_0$};
	\draw (1.5,0.5) node[circle,draw] (YY) {$\rho_l$};
	\draw (1.6,0.65) node[] (XX) {$\in U$};
	\draw (1.9,0.5) node (AA) {$\ldots$};
	\draw (ZZ) to[bend left=20] (YY);
	\draw (ZZ) -- (YY);
	\draw (YY) -- (AA);

	\draw (-0.7,0) node[] (U) {$\GDReach{\lambda}{U} :$};
	\draw (0,0) node[circle,draw] (Z) {$\rho_0$};
	\draw (3,0) node[circle,draw] (Y) {$\rho_l$};
	\draw (3.1,0.15) node[] (X) {$\in U$};
	\draw (Z) -- (Y); 
	\draw (3.4,0) node (A) {$\ldots$};
	\draw (Y) -- (A);
	
	\draw (Z) to[bend left=25] (1,0);	
	\draw (1,0) to[bend left] (2.1,0);
	\draw (2.1,0) to[bend left] (Y);

\end{tikzpicture}

\caption{A play that belongs to $\ICWReach{\lambda}{U}$ and a play that belongs to $\GDReach{\lambda}{U}$}
\label{fig:ICWReach}
\end{figure}

\noindent Notice that the plays of $\ICWReach{\lambda}{U}$ are the particular plays of $\GDReach{\lambda}{U}$ such that the \GoodDec{\lambda} of $\rhofactor{0}{l}$ has size $1$. Hence $\ICWReach{\lambda}{U} \subseteq \GDReach{\lambda}{U}$. We propose two algorithms, Algorithms~\ref{algo:ICWReach} and~\ref{algo:GDReach}, one for computing the winning set of player~$1$ for the objective $\ICWReach{\lambda}{U}$, and the other one for the objective $\GDReach{\lambda}{U}$.

Algorithm~\ref{algo:ICWReach} uses the next operator $\oplus$. Let $a, b \in \Z \cup \{-\infty\}$,
$$a \oplus b = 
\begin{cases}
a + b & \mbox{ if } a+b \geq 0 \\
-\infty & \mbox{otherwise}
\end{cases}$$

\noindent
With this definition, either $a \oplus b \geq 0$ or $a \oplus b = -\infty$. One can check that if $b \geq b'$, then $a \oplus b \geq a \oplus b'$. Moreover if $a \oplus b \geq 0$, then $a \oplus b = a + b \geq 0$. Algorithm~\ref{algo:ICWReach} intuitively works as follows. Given a vertex $v$ and a number $i$ of steps, the value $C_i(v)$ is computed iteratively (from $C_{i-1}(v)$) and represents the best total payoff that player~$1$ can ensure in \emph{at most $i$ steps} while closing the window from $v$ in a vertex of $U$. The value $-\infty$ indicates that the window starting in $v$ cannot be inductively-closed in $U$. Therefore, the winning set of player~$1$ is the set of vertices $v$ for which $C_{\lambda}(v) \geq 0$. This algorithm is inspired from Algorithm~$\sf GoodWin$ in~\cite{Chatterjee0RR15} that computes $\WinG{1}{\ICWReach{\lambda}{U}}{\G}$ with $U = V$.\footnote{We have detected a flaw in this algorithm that has been corrected in Algorithm~\ref{algo:ICWReach}. The algorithm in~\cite{Chatterjee0RR15} wrongly computes the set of vertices from which player~$1$ can force to close the window in exactly $l$ steps (instead of at most $l$ steps) for some $l \in \{1,\ldots,\lambda\}$.}

\begin{algorithm}
\caption{$\AlgoICWReach$}
\label{algo:ICWReach}
\begin{algorithmic}[1]
\REQUIRE $1$-weighted game structure $\G = \Gdev$, set $U \subseteq V$, window size $\lambda \in \N \setminus\{0\}$ 
\ENSURE  $\WinG{1}{\ICWReach{\lambda}{U}}{\G}$
\FORALL {$v \in V$} 
\IF {$v \in U$}
\STATE $C_0(v) \leftarrow 0$
\ELSE
\STATE $C_0(v) \leftarrow -\infty$
\ENDIF 
\ENDFOR
\FORALL {$l \in \{1, \ldots, \lambda\}$}
\FORALL {$v \in V_1$}
\STATE $C_l(v) \leftarrow \max_{(v,v') \in E} \{ w(v,v') \oplus \max \{ C_0(v') , C_{l-1}(v') \}  \}$
\ENDFOR
\FORALL {$v \in V_2$}
\STATE $C_l(v) \leftarrow \min_{(v,v') \in E} \{ w(v,v') \oplus \max \{ C_0(v') , C_{l-1}(v') \}  \}$
\ENDFOR
\ENDFOR
\RETURN $\{ v \in V \mid C_{\lambda}(v)  \geq 0\}$
\end{algorithmic}
\end{algorithm}  

\begin{lemma} \label{lem:ICWReach}
Let $\G$ be a $1$-weighted game structure, $U$ be a subset of $V$, and $\lambda \in \N \setminus\{0\}$ be a window size. Then Algorithm $\AlgoICWReach$ computes the set $\WinG{1}{\ICWReach{\lambda}{U}}{\G}$ of winning vertices of player~$1$ for the objective $\ICWReach{\lambda}{U}$ in $O(\lambda \cdot (|V| + |E|) \cdot \lceil \log_2(\lambda \cdot W) \rceil)$ time, and finite-memory strategies with memory linear in $\lambda$ are sufficient for both players.
\end{lemma}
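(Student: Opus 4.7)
The plan is to establish the correctness of $\AlgoICWReach$ by induction on $l \in \{0, 1, \ldots, \lambda\}$. The base case is the initialisation: $C_0(v) \geq 0$ iff $v \in U$, with $C_0(v) = 0$ in that case. The main inductive invariant, for $l \in \{1, \ldots, \lambda\}$, will be the following: $C_l(v) \geq 0$ if and only if player~$1$ has a strategy from $v$ such that, against every strategy of player~$2$, the outcome $\rho$ admits some $l' \in \{1, \ldots, l\}$ with $\rho_{l'} \in U$ and $\TP(\rhofactor{j}{l'}) \geq 0$ for every $j \in \{0, \ldots, l'-1\}$; moreover, in that case $C_l(v)$ equals the game-theoretic optimum of $\TP(\rhofactor{0}{l'})$ over such closings. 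Applying this invariant at $l = \lambda$ then directly yields $\{v \mid C_\lambda(v) \geq 0\} = \WinG{1}{\ICWReach{\lambda}{U}}{\G}$.

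For the inductive step at $v \in V_1$ and $l \geq 1$, the recursion considers, for each edge $(v,v')$, the best continuation from $v'$: either close immediately at $v'$ (encoded by $C_0(v')$, which requires $v' \in U$), or extend the good sub-play by at most $l-1$ further steps (encoded by $C_{l-1}(v')$). The key observation is that if the sub-play from $v'$ already achieves inductive closure at some relative position $l'-1 \in \{0, \ldots, l-1\}$ with value $s = \max\{C_0(v'), C_{l-1}(v')\} \geq 0$, then the partial-sum constraints $\TP(\rhofactor{j}{l'}) \geq 0$ for $j \in \{1, \ldots, l'-1\}$ are already enforced (they are intrinsic to the suffix starting at $v'$), and the only new constraint imposed by prepending $(v, v')$ is $\TP(\rhofactor{0}{l'}) = w(v,v') + s \geq 0$. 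This is precisely what the $\oplus$ operator enforces: it returns $w(v,v') + s$ when this sum is non-negative and $-\infty$ otherwise. The case $v \in V_2$ is symmetric with $\min$ replacing $\max$, so taking the outer extremum over outgoing edges of $v$ reflects the optimal choice of the player controlling $v$.

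Winning strategies are read directly from the table: from a vertex $v \in V_p$ with $l$ allowed steps remaining, player~$p$ plays the edge attaining the optimum in the recursion for $C_l(v)$, decrements $l$ at each transition, and resets the counter to $\lambda$ as soon as the window has closed at a vertex of $U$, which happens within $\lambda$ steps along a winning play by the invariant. Since the chosen edge depends only on the pair $(v, l)$ with $l \in \{1, \ldots, \lambda\}$, a Moore machine with $O(\lambda)$ memory states suffices for each player. For the complexity, the table has $\lambda + 1$ rows of $|V|$ entries, each of which scans its outgoing edges and performs a constant number of arithmetic operations on integers of absolute value at most $\lambda \cdot W$, yielding the announced $O(\lambda \cdot (|V| + |E|) \cdot \lceil \log_2(\lambda \cdot W) \rceil)$ running time.

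The main obstacle is the inductive invariant itself, and specifically the equivalence between the ``at most $l$ steps'' formulation built into the recursion through $\max\{C_0(v'), C_{l-1}(v')\}$ and the existentially quantified $l' \in \{1, \ldots, \lambda\}$ of the definition of $\ICWReach{\lambda}{U}$. The flaw in the original algorithm of~\cite{Chatterjee0RR15} mentioned in the footnote, which confused ``exactly $l$ steps'' with ``at most $l$ steps'', indicates precisely where care is required: the recursion must permit the window to close early at a vertex of $U$, which is why $C_0(v')$ has to appear alongside $C_{l-1}(v')$ at every inductive step, and why $\oplus$ must truncate any partial sum that would otherwise become negative.
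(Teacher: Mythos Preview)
Your approach is correct and essentially the same as the paper's: both extract the winning strategies directly from the dynamic-programming table (using the counter $l$ as memory) and argue correctness via the recursion, with the only organisational difference being that you phrase the induction forward on $l$ as an invariant for $C_l$, whereas the paper first fixes the strategies $\sigma_1^*,\sigma_2^*$ and then proves them winning by a backward induction on the position $k$ along the play. One small inaccuracy: for $\ICWReach{\lambda}{U}$ there is only a single window to close, so the counter need not be ``reset to $\lambda$'' after closure---the paper simply plays arbitrarily once $l=0$---but this does not affect correctness.
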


\begin{proof}
Let us prove that Algorithm~$\AlgoICWReach$ correctly computes the set $\WinG{1}{\ICWReach{\lambda}{U}}{\G}$. Notice that for all $v$ and $l$, either $C_l(v) \geq 0$ or $C_l(v) = -\infty$.

Let $v_0$ be an initial vertex. Consider the following finite-memory strategy $\sigma_1^*$ (resp. $\sigma_2^*$) of player~$1$ (resp. player~$2$): The memory set is $\{0, \ldots, \lambda\}$, the initial memory state is $\lambda$ at the initial vertex $v_0$. At vertex $v$ and memory state $l$, if $l \geq 1$ then player~$1$ (resp. player~$2$) chooses a vertex $v'$ that maximizes (resp. minimizes) $w(v,v') \oplus \max \{ C_0(v') , C_{l-1}(v') \}$, and if $l = 0$ then he chooses any edge. After each move, the memory state $l$ is updated to $l-1$ until always keeping value $0$. We will show that $(i)$ if $C_{\lambda}(v_0)  \geq 0$ then $\sigma_1^*$ is a winning strategy for $\ICWReach{\lambda}{U}$ and $(ii)$ if $C_{\lambda}(v_0) = -\infty$ then $\sigma_2^*$ is a winning strategy for $\overline{\ICWReach{\lambda}{U}}$. These winning strategies are finite-memory and have both a memory in $O(\lambda)$.

$(i)$ Take $\sigma_1^*$ and suppose that $C_{\lambda}(v_0)  \geq 0$. Let $\sigma_2$ be any strategy of player~$2$, and let $\rho = \rho_0\rho_1 \ldots = \Out(v_0,\sigma_1^*,\sigma_2)$. To show that $\rho \in \ICWReach{\lambda}{U}$, we are going to prove for all $k \in \{0,\ldots,\lambda-1 \}$ that

\begin{enumerate}
\item[$(*)$] if $C_{\lambda - k}(\rho_k) \geq 0$ then there exists $l \in \{k+1, \ldots, \lambda\}$ such that $\rho_l \in U$, the \window{k} is \iclosed{\lambda}{l}, and $\TP(\rhofactor{k}{l}) \geq C_{\lambda - k}(\rho_k)$.
\end{enumerate} 

\noindent In particular, when $k = 0$, we will get from $C_{\lambda}(v_0)  \geq 0$ the existence of $l \in \{1, \ldots, \lambda\}$ such that $\rho_l \in U$ and the \window{0} is \iclosed{\lambda}{l}, that is, $\rho \in \ICWReach{\lambda}{U}$.

Before proving $(*)$, we establish the following inequality~\ref{eq:C}.
Consider $C_{\lambda - k}(\rho_k) \geq 0$ with $k \in \{0,\ldots,\lambda-1 \}$, and the edge $(\rho_k,\rho_{k+1})$. We have
\begin{eqnarray} \label{eq:C}
C_{\lambda - k}(\rho_k) &\leq& w(\rho_k,\rho_{k+1}) \oplus \max \{ C_0(\rho_{k+1}) , C_{\lambda - k-1}(\rho_{k+1}) \}
\end{eqnarray}
Indeed if $\rho_k \in V_1$, by definition of $\sigma_1^*$, we have (\ref{eq:C}) with an equality (instead of an inequality). If $\rho_k \in V_2$, we have (\ref{eq:C}) by definition of $C_{\lambda - k}(\rho_k)$ and since player~$2$ does not necessarily play optimally. 

Let us now prove $(*)$. Suppose firstly that (\ref{eq:C}) is realized by
\begin{eqnarray} \label{eq:C1}
C_{\lambda - k}(\rho_k) \leq w(\rho_k,\rho_{k+1}) \oplus C_0(\rho_{k+1}).
\end{eqnarray}
As $C_{\lambda - k}(\rho_k) \geq 0$, it follows that $C_0(\rho_{k+1}) = 0$, that is, $\rho_{k+1} \in U$, and $w(\rho_k,\rho_{k+1}) \geq C_{\lambda - k}(\rho_k) \geq 0$. We thus get $(*)$ with $l = k+1$.

Suppose secondly that (\ref{eq:C}) is realized by 
\begin{eqnarray} \label{eq:C2}
C_{\lambda - k}(\rho_k) \leq w(\rho_k,\rho_{k+1}) \oplus C_{\lambda - k-1}(\rho_{k+1}).
\end{eqnarray}
In this case, we prove $(*)$ by induction on $k \in \{0,\ldots,\lambda -1\}$. We begin with $k = \lambda -1$ (basic case of the induction). Then (\ref{eq:C2}) is equal to (\ref{eq:C1}) already treated. We can thus proceed with $k < \lambda -1$. As $C_{\lambda - k}(\rho_k) \geq 0$, it follows from (\ref{eq:C2}) that $C_{\lambda - k-1}(\rho_{k+1}) \geq 0$, and 
\begin{eqnarray} \label{eq:C3}
w(\rho_k,\rho_{k+1}) + C_{\lambda - k-1}(\rho_{k+1}) \geq C_{\lambda - k}(\rho_k) \geq 0.
\end{eqnarray} 
By induction hypothesis (with $k+1$), from $(*)$ and $C_{\lambda - k-1}(\rho_{k+1}) \geq 0$, we get the existence of $l \in \{k+2, \ldots, \lambda\}$ such that $\rho_l \in U$, the \window{k+1} is \iclosed{\lambda}{l}, and $\TP(\rhofactor{k+1}{l}) \geq C_{\lambda - k - 1}(\rho_{k+1})$. To get $(*)$ for $k$, with this index $l$, it remains to proves that $\TP(\rhofactor{k}{l}) \geq C_{\lambda - k}(\rho_k) \geq 0$. This follows from $\TP(\rhofactor{k}{l}) = w(\rho_k,\rho_{k+1}) + \TP(\rhofactor{k+1}{l}) \geq w(\rho_k,\rho_{k+1}) + C_{\lambda - k-1}(\rho_{k+1})$ and (\ref{eq:C3}).

\smallskip
$(ii)$ Take $\sigma_2^*$ and suppose that $C_{\lambda}(v_0)  = - \infty$. We have to prove that $\sigma_2^*$ is a winning strategy for $\overline{\ICWReach{\lambda}{U}}$. Assume the contrary: then there exists a strategy $\sigma_1$ of player~$1$ such that the play $\rho = \Out(v_0,\sigma_1,\sigma_2^*)$ belongs to $\ICWReach{\lambda}{U}$. Therefore, there exists $l \in \{1, \ldots, \lambda \}$ such that $\rho_l \in U$ and the \window{0} is \iclosed{\lambda}{l}. To have a contradiction, we will show that $C_{\lambda}(v_0) \geq 0$. More generally, we are going to prove by induction on $k \in \{0, \ldots, l-1\}$ that
\begin{eqnarray} \label{eq:D}
C_{\lambda - k}(\rho_k) &\geq& \TP(\rhofactor{k}{l}).
\end{eqnarray}
In particular, with $k = 0$, we will get $C_{\lambda}(v_0) \geq \TP(\rhofactor{0}{l}) \geq 0$.

\noindent
Consider $C_{\lambda - k}(\rho_k)$ with $k \in \{0,\ldots,l-1 \}$, and the edge $(\rho_k,\rho_{k+1})$. We have
\begin{eqnarray} \label{eq:D1}
C_{\lambda - k}(\rho_k) &\geq& w(\rho_k,\rho_{k+1}) \oplus \max \{ C_0(\rho_{k+1}) , C_{\lambda - k-1}(\rho_{k+1}) \}.
\end{eqnarray}
Indeed if $\rho_k \in V_2$, by definition of $\sigma_2^*$, we have (\ref{eq:D1}) with an equality. If $\rho_k \in V_1$, we have (\ref{eq:D}) by definition of $C_{\lambda - k}(\rho_k)$ and since player~$1$ does not necessarily play optimally. 
Suppose that $k = l-1$ (basic case of the induction). By (\ref{eq:D1}) and since $\rho_l \in U$ and the \window{0} is \iclosed{\lambda}{l}, we get that $C_{\lambda - l + 1}(\rho_k) \geq w(\rho_{l-1},\rho_{l}) \oplus  C_0(\rho_{l}) = w(\rho_{l-1},\rho_{l})$, thus proving (\ref{eq:D}) for $k = l-1$. Suppose now that $k < l-1$. By (\ref{eq:D1}) and induction hypothesis (with $k+1$) , we have 
\begin{eqnarray} \label{eq:D2}
C_{\lambda - k}(\rho_k) \geq w(\rho_k,\rho_{k+1}) \oplus C_{\lambda - k-1}(\rho_{k+1}) \geq w(\rho_k,\rho_{k+1}) \oplus \TP(\rhofactor{k+1}{l}).
\end{eqnarray} 
As the \window{0} is \iclosed{\lambda}{l}, $\TP(\rhofactor{k}{l}) = w(\rho_k,\rho_{k+1}) + \TP(\rhofactor{k+1}{l}) \geq 0$, and then $\TP(\rhofactor{k}{l}) = w(\rho_k,\rho_{k+1}) \oplus \TP(\rhofactor{k+1}{l})$. Thus, we get with (\ref{eq:D2}) that $C_{\lambda - k}(\rho_k) \geq \TP(\rhofactor{k}{l})$ and then (\ref{eq:D}) holds for $k$.

\smallskip
It remains to discuss the complexity of the algorithm. Clearly, it takes a number of elementary arithmetic operations which is bounded by $O(\lambda \cdot (|V| + |E|))$. As each $C_l(v)$ is bounded by $\lambda \cdot W$, each elementary arithmetic operation takes time linear in the number of bits of the encoding of $\lambda \cdot W$. Hence, the time complexity of the algorithm is in  $O(\lambda \cdot (|V| + |E|) \cdot \log_2(\lambda \cdot W))$ time. 
\qed\end{proof}

We now turn to Algorithm~\ref{algo:GDReach} for computing the winning set of player~$1$ for the objective $\GDReach{\lambda}{U}$. It shares similarities with the algorithm computing the $p$-attractor $\Attr{p}{U}{G}$ while requiring to use previous Algorithm~$\ICWReach{\lambda}{U}$.

\begin{algorithm}
\caption{$\AlgoGDReach$}
\label{algo:GDReach}
\begin{algorithmic}[1]
\REQUIRE $1$-weighted game structure $\G = \Gdev$, subset $U \subseteq V$, window size $\lambda  \in \N \setminus\{0\}$ 
\ENSURE  $\WinG{1}{\GDReach{\lambda}{U}}{\G}$
\STATE $k \leftarrow 0$
\STATE $X_0 \leftarrow U$
\REPEAT
\STATE $X_{k+1} \leftarrow X_k \cup \mathsf{\AlgoICWReach}(\G,X_k,\lambda)$
\STATE $k \leftarrow k+1$
\UNTIL {$X_k = X_{k-1}$}
\RETURN $X_k$
\end{algorithmic}
\end{algorithm}

\begin{lemma} \label{lem:GDReach}
Let $\G$ be a $1$-weighted game structure, $U$ be a subset of $V$, and $\lambda \in \N \setminus\{0\}$ be a window size. Then Algorithm $\AlgoGDReach$ computes the set $\WinG{1}{\GDReach{\lambda}{U}}{\G}$ of winning vertices of player~$1$ for the objective $\GDReach{\lambda}{U}$ in $O(\lambda \cdot |V| \cdot (|V| + |E|) \cdot \lceil \log_2(\lambda \cdot W) \rceil)$ time, and finite-memory strategies with memory in $O(\lambda^2 \cdot W \cdot |V|)$ (resp. in $O(\lambda^2 \cdot W)$) are sufficient for player~$1$ (resp. player~$2$).
%linear in $\lambda \cdot |V|$ (resp. linear in $\lambda$) are sufficient for player~$1$ (resp. player~$2$).
\end{lemma}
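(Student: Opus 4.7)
The plan is to prove by induction on $k$ that $X_k$ coincides with the set of vertices from which player~$1$ can force a play whose prefix reaches $U$ together with a \GoodDec{\lambda} of size at most $k$. The base $X_0 = U$ is immediate, since an empty decomposition witnesses a play already in $U$. For the inductive step, Lemma~\ref{lem:ICWReach} identifies $\AlgoICWReach(\G, X_k, \lambda)$ with the set of vertices from which player~$1$ can force the first window to be inductively closed at some position $l \in \{1, \ldots, \lambda\}$ with $\rho_l \in X_k$; concatenating this window with the good decomposition of size $\leq k$ granted by the induction hypothesis from $\rho_l$ produces a good decomposition of size $\leq k+1$, showing $X_{k+1} \subseteq \WinG{1}{\GDReach{\lambda}{U}}{\G}$. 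Since $X_0 \subsetneq X_1 \subsetneq \cdots$ until stabilization, the fixpoint is attained in at most $|V|$ iterations, which combined with Lemma~\ref{lem:ICWReach} yields the claimed time complexity.

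Player~$1$'s winning strategy from $v \in X_\infty$ plays in phases. At the start of phase $i$, the current vertex $\rho_{t_i}$ lies in some $X_{k_i} \setminus X_{k_i - 1}$ and player~$1$ follows the strategy of Lemma~\ref{lem:ICWReach} for $\ICWReach{\lambda}{X_{k_i - 1}}$; after at most $\lambda$ steps he reaches a vertex of $X_{k_i - 1}$ via an inductively closed window, strictly decreasing the level. After at most $|V|$ phases the play reaches $U$, and Lemma~\ref{lem:iclosed} certifies that the concatenation forms a good decomposition. The finite memory needed is $O(\lambda^2 \cdot W \cdot |V|)$: the current phase-start vertex ($|V|$ options, which determines the current target level), the sub-strategy counter ($O(\lambda)$ options), and the running window sum ($O(\lambda \cdot W)$ options).

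The main obstacle is the converse inclusion: player~$2$ must win from any $v \notin X_\infty$. The fixpoint condition gives $\AlgoICWReach(\G, X_\infty, \lambda) \subseteq X_\infty$, so every such $v$ is losing for $\ICWReach{\lambda}{X_\infty}$, and Lemma~\ref{lem:ICWReach} provides a strategy $\tau$ (parametrised only by a counter $l \in \{0, \ldots, \lambda\}$ and the current vertex) winning against $\ICWReach{\lambda}{X_\infty}$ from every $v \notin X_\infty$. Player~$2$'s global strategy $\sigma_2^*$ proceeds in phases. Starting at position $t_i$ with $\rho_{t_i} \notin X_\infty$, he resets $l$ to $\lambda$ and follows $\tau$; the phase ends at the first $t_{i+1}$ at which either the window at $t_i$ is first-closed, or $t_{i+1} = t_i + \lambda$ (bad window). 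In the first case, Lemma~\ref{lem:factorize} turns first-closure into inductive closure, so the correctness of $\tau$ forces $\rho_{t_{i+1}} \notin X_\infty$, allowing the next phase to begin; in the second case the window at $t_i$ cannot belong to any good decomposition.

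To conclude, assume for contradiction that $\rho = \Out(v_0, \sigma_1, \sigma_2^*)$ admits a prefix $\rho_0 \ldots \rho_{k_J}$ with $\rho_{k_J} \in U$ and a \GoodDec{\lambda} $(k_j)_{0 \leq j \leq J}$. Passing to its maximal refinement (each window first-closed, by Lemma~\ref{lem:factorize}) aligns the phase boundaries with the decomposition: $t_0 = k_0 = 0$, and $t_j = k_j$ inductively forces $t_{j+1} = k_{j+1}$ together with $\rho_{k_{j+1}} \notin X_\infty$. One therefore obtains $\rho_{k_J} \notin X_\infty$, contradicting $\rho_{k_J} \in U \subseteq X_\infty$. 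Player~$2$'s memory is $O(\lambda^2 \cdot W)$: only the counter $l$ and the running window sum within the current phase need be tracked, which completes the plan.
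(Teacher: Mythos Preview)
Your proof is correct and follows essentially the same approach as the paper: induction on $k$ for the forward inclusion, and a phase-based player~$2$ strategy exploiting $\overline{\ICWReach{\lambda}{X_\infty}}$ together with the maximal good decomposition for the converse, with the same memory accounting. One cosmetic point: your opening claim that $X_k$ \emph{coincides} with the set of vertices from which player~$1$ can force a good decomposition of size $\leq k$ is stronger than what you actually establish (you only prove the inclusion $X_k \subseteq \WinG{1}{\GDReach{\lambda}{U}}{\G}$, handling the other direction globally via player~$2$ at the fixpoint, exactly as the paper does), but this does not affect correctness.
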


\begin{proof}
Let $X^* = \cup_{k\geq 0} X_k$ be the set computed by Algorithm $\AlgoGDReach$.

Suppose that $v_0 \in X^*$, and let us prove that $v_0 \in \WinG{1}{\GDReach{\lambda}{U}}{\G}$ by induction on $k$ such that $v_0 \in  X_k$. This is trivially true for $k = 0$ since $X_0 = U$. Let $k > 0$ be such that $v_0 \in X_k \setminus X_{k-1}$. Then $v_0 \in \WinG{1}{\ICWReach{\lambda}{X_{k-1}}}{\G}$ by Lemma~\ref{lem:ICWReach}. We propose the following strategy $\sigma_1$ of player~$1$ from $v_0$: play a winning strategy for the objective $\ICWReach{\lambda}{X_{k-1}}$, and when this objective is realized play a winning strategy for the objective $\GDReach{\lambda}{U}$ (the second strategy exists by induction hypothesis). Clearly, $\sigma_1$ is a winning strategy for $\GDReach{\lambda}{U}$ showing that $v_0 \in \WinG{1}{\GDReach{\lambda}{U}}{\G}$. Notice that this winning strategy, that consists in repeatedly playing a winning strategy for the objective $\ICWReach{\lambda}{X_{k}}$ for decreasing values of $k$ until reaching $U$, is finite-memory with a memory in $O(\lambda^2 \cdot W \cdot |V|)$. Indeed, for each $k$ (and there are at most $|V|$ such $k$) knowing that the objective $\ICWReach{\lambda}{X_{k}}$ is realized requires to memorize the number of steps (up to $\lambda$) and the sum of the weights inside the current window, and to check if $X_k$ is reached when this window becomes closed.

For the converse, we are going to prove that if $v_0 \in V \setminus X^*$ then $v_0 \in \WinG{2}{\overline{\GDReach{\lambda}{0}}}{G}$. Let $v_0 \in V \setminus X^*$, then player~$2$ has a winning strategy $\sigma_2^*$ for the objective $\overline{\ICWReach{\lambda}{X^*}}$. Notice that against any strategy $\sigma_1$ of player~$1$, for $\rho = \Out(v_0,\sigma_1,\sigma^*_2)$, if the \window{0} is \iclosed{\lambda}{\rho_l} for $l \in \{1,\ldots,\lambda\}$, then $\rho_l \in V \setminus X^*$. Consider the following strategy $\sigma_2$ of player~$2$ from $v_0$.
\begin{enumerate}
\item If the current vertex $v$ belongs to $V \setminus X^*$, play the winning strategy $\sigma^*_2$.
\item As soon as the window starting from $v$ is \fclosed{\lambda}{v'} in $l$ steps with $l \in \{1,\ldots,\lambda\}$, as $v' \in V \setminus X^*$, go back to 1.
\item As soon as the window starting from $v$ is \bad{\lambda}, play whatever.
\end{enumerate}
Let us show that $\sigma_2$ is a winning strategy of player~$2$ for the objective $\overline{\GDReach{\lambda}{U}}$. Let $\sigma_1$ be any strategy of player~$1$ and consider $\rho = \Out(v_0,\sigma_1,\sigma_2)$. Assume by contradiction that there exists $l \geq 0$ such that $\rho_l \in U$ and $\rhofactor{0}{l}$ has a \GoodDec{\lambda} $(k_i)_{i = 0}^{j}$. By Lemma~\ref{lem:factorize}, we can suppose that this good decomposition is maximal, that is, for each $i$ the \window{k_i} is \fclosed{\lambda}{k_{i+1}}. Therefore by construction of $\sigma_2$, $\rho_{k_i} \in V \setminus X^*$ for all $i$. In particular, $\rho_l \in V \setminus X^*$ in contradiction with $\rho_l \in U$ (recall that $U \subseteq X^*$). Notice that this winning strategy is finite-memory with a memory in $O(\lambda^2 \cdot W)$. Indeed, when playing $\sigma^*_2$, knowing that the current window is \ffclosed{\lambda} or \bad{\lambda} requires to memorize the number of steps (up to $\lambda$) and the sum of the weights inside this window.

%and needs a memory linear in $\lambda $ (by Lemma~\ref{lem:ICWReach}).

It remains to discuss the complexity of the algorithm. The loop of Algorithm $\AlgoGDReach$ is executed at most $|V|$ times, and each step in this loop is in $O(\lambda \cdot (|V| + |E|) \cdot \log_2(\lambda \cdot W))$ time by Lemma~\ref{lem:ICWReach}. Then, the algorithm runs in $O(\lambda \cdot |V| \cdot (|V| + |E|) \cdot \log_2(\lambda \cdot W))$ time. 
\qed\end{proof}

Let us illustrate Algorithms~\ref{algo:ICWReach}~and~\ref{algo:GDReach} on the following example.

\begin{example}
Consider the game $(G,\Obj)$ depicted on Figure~\ref{ex:windowICWEnd}, where the objective $\Obj = \GDReach{2}{U}$ with $U = \{v_1,v_3,v_4\}$. Let us execute Algorithm~\ref{algo:GDReach}: $X_0 = U$, $X_1 = \WinG{1}{\ICWReach{2}{X_0}}{G} = \{v_1,v_2,v_3,v_4\}$ and then, $X_2 = \WinG{1}{\ICWReach{2}{X_1}}{G} = V$. Thus all vertices in $G$ are winning for player~$1$ for the objective $\Obj$. A winning strategy for player~$1$ consists in looping once in $v_2$ and then going to $v_3$. Indeed for any strategy of player~$2$, the outcome is either $v_0v_1^{\omega}$ or $v_0v_2v_2v_3v_4^{\omega}$, and both outcomes admit a prefix which has a \GoodDec{2} and ends with a vertex of $U$.
\end{example}

\begin{figure}[h]
\begin{minipage}[c]{.50\linewidth}
  \centering
  \begin{tikzpicture}[scale=3]
    \everymath{\scriptstyle}
    \draw (0,0) node [rectangle, inner sep = 5pt, draw] (A) {$v_0$};
    \draw (0.5,0.25) node [circle, draw] (B) {$v_2$};
    \draw (0.5,-0.25) node [circle, draw,fill=lightgray] (C) {$v_1$};
    \draw (1,0) node [circle, draw,fill=lightgray] (D) {$v_3$};
    \draw (1.5,0) node [circle, draw,fill=lightgray] (E) {$v_4$};
    
    \draw[->,>=latex] (A) to node[above,midway,sloped] {$-1$} (B);
    \draw[->,>=latex] (A) to node[below,midway,sloped] {$0$} (C);
     \draw[->,>=latex] (B) to node[above,midway,sloped] {$-1$} (D);
    \draw[->,>=latex] (D) to node[above,midway,sloped] {$1$} (E);
    
    \draw[->,>=latex] (C) .. controls +(35:0.4cm) and +(325:0.4cm) .. (C) node[right,midway] {$0$};
    \draw[->,>=latex] (E) .. controls +(305:0.4cm) and +(235:0.4cm) .. (E) node[below,midway] {$0$};
    \draw[->,>=latex] (B) .. controls +(305:0.3cm) and +(235:0.3cm) .. (B) node[below,midway] {$1$};
	\path (0,0.25) edge [->,>=latex] (A);
	    \end{tikzpicture}
\caption{Objective $\GDReach{2}{U}$}
\label{ex:windowICWEnd}
\end{minipage}\hfill
\begin{minipage}{.50\linewidth}
    \centering
  \begin{tikzpicture}[scale=3]
    \everymath{\scriptstyle}
    \draw (0,0) node [rectangle, inner sep = 5pt, draw] (A) {$v_0$};
    \draw (0.5,0.25) node [circle, draw] (B) {$v_2$};
    \draw (0.5,-0.25) node [circle, draw,double] (C) {$v_1$};
    \draw (1,0) node [circle, draw,double] (D) {$v_3$};
    \draw (1.5,0) node [circle, draw] (E) {$v_4$};
    
    \draw[->,>=latex] (A) to node[above,midway,sloped] {$-1$} (B);
    \draw[->,>=latex] (A) to node[below,midway,sloped] {$0$} (C);
     \draw[->,>=latex] (B) to node[above,midway,sloped] {$-1$} (D);
    \draw[->,>=latex] (D) to node[above,midway,sloped] {$1$} (E);
    
    \draw[->,>=latex] (C) .. controls +(35:0.4cm) and +(325:0.4cm) .. (C) node[right,midway] {$0$};
    \draw[->,>=latex] (E) .. controls +(305:0.4cm) and +(235:0.4cm) .. (E) node[below,midway] {$0$};
    \draw[->,>=latex] (B) .. controls +(305:0.3cm) and +(235:0.3cm) .. (B) node[below,midway] {$1$};
	\path (0,0.25) edge [->,>=latex] (A);

  \end{tikzpicture}
\caption{Objective $\DFWMP(2,0) \cap \Reach(U)$}
\label{ex:windowreach}
\end{minipage}
\end{figure}

\subsection{Objective $\DFWMP(\lambda,0) \cap \Reach(U)$}
%-------------------------------------------------------------------------------------

In the previous section, we have proposed an algorithm 
%\red{(in time polynomial in the size of the game, $\lambda$ and $\lceil \log_2(W) \rceil$)} 
for computing the winning set of player~$1$ for the game $(\G,\GDReach{\lambda}{U})$. Thanks to this algorithm, we can compute the winning set of player~$1$ for the game $(\G,\DFWMP(\lambda,0) \cap \Reach(U))$ (in time polynomial in the size of the game, $\lambda$ and $\lceil \log_2(W) \rceil$).
First notice that the objectives $\GDReach{\lambda}{U}$ and $\DFWMP(\lambda,0) \cap \Reach(U)$ are close to each other: a play $\rho$ belongs to $\GDReach{\lambda}{U}$ if it has a prefix which has a \GoodDec{\lambda} and ends with a vertex in $U$, while $\rho$ belongs to $\DFWMP(\lambda,0) \cap \Reach(U)$ if it has a \GoodDec{\lambda} and one of its vertices belongs to $U$. Therefore, to solve the game $(\G,\DFWMP(\lambda,0) \cap \Reach(U))$, we first add a bit to each vertex indicating whether or not a vertex of $U$ has been seen along the current history, and we then compute the winning set for the objective $\GDReach{\lambda}{U'}$ for the set $U'$ of vertices that indicate a visit of $U$ and are winning for the $\DFWMP$ objective.
%Any vertex of this winning region is winning for player~$1$ since he can force to visit $U$ while having a history having a \GoodDec{\lambda} that ends in a vertex from which he can switch to a strategy that ensures a \GoodDec{\lambda} for the suffix of the play.

\begin{proposition} \label{prop:WMPReach}
Let $\G = \Gdev$ be a $1$-weighted game structure, and assume that $\Obj$ is the objective $\DFWMP(\lambda,0) \cap \Reach(U)$.
Then $\WinG{1}{\Obj}{\G}$ can be computed in $O(\lambda \cdot |V| \cdot (|V| + |E|) \cdot \lceil \log_2(\lambda \cdot W) \rceil)$ time, and finite-memory strategies with memory in $O(\lambda^2 \cdot W \cdot |V|)$  are sufficient for both  players.
%linear in $\lambda \cdot |V|$ (resp. linear in $\lambda$) are sufficient for player~$1$ (resp. player~$2$).
\end{proposition}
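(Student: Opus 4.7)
The plan is to reduce the game $(\G, \DFWMP(\lambda,0) \cap \Reach(U))$ to a $\GDReach$ game on a doubled graph, so that Algorithm~\ref{algo:GDReach} and Lemma~\ref{lem:GDReach} apply. First I would build the product game $\G' = (V'_1, V'_2, E', w')$ with $V' = V \times \{0,1\}$, where the bit records whether a vertex of $U$ has already been visited: $((v,b),(v',b')) \in E'$ iff $(v,v') \in E$, $b' = 1$ whenever $b = 1$ or $v' \in U$, and $b' = 0$ otherwise; ownership and weights project through the first component, so $|V'|=2|V|$ and $|E'|=2|E|$. Next I would compute $W^{\DFWMP}_1 := \WinG{1}{\DFWMP(\lambda,0)}{\G}$ by Theorem~\ref{thm:WMP}, define $U' := \{(v,1) \in V' \mid v \in W^{\DFWMP}_1\}$, and run Algorithm~\ref{algo:GDReach} on the instance $(\G', U', \lambda)$. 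I would then claim that $v_0 \in \WinG{1}{\Obj}{\G}$ iff $(v_0, b_0) \in \WinG{1}{\GDReach{\lambda}{U'}}{\G'}$, where $b_0 = 1$ when $v_0 \in U$ and $0$ otherwise.

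For the easy direction of the equivalence, a $\GDReach{\lambda}{U'}$-winning strategy in $\G'$ drives the play along a prefix with a good decomposition to some $(v,1) \in U'$; at that moment $U$ has been visited (handling $\Reach(U)$), and player~$1$ switches to a $\DFWMP$-winning strategy from $v \in W^{\DFWMP}_1$. Concatenating the prefix decomposition with the good decomposition produced by the $\DFWMP$ strategy yields a good decomposition of the entire play (Lemma~\ref{lem:iclosed}), so $\DFWMP$ also holds. For the harder direction, let $\sigma_1$ win $\Obj$ from $v_0$ and let $\rho$ be any consistent play: then $\rho_j \in U$ for some $j$, and $\rho$ has a maximal good decomposition $(k_i)_{i \geq 0}$; I pick the smallest $i^*$ with $k_{i^*} \geq j$, so the bit of $\rho'_{k_{i^*}}$ is already $1$. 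The delicate step is to show $\rho_{k_{i^*}} \in W^{\DFWMP}_1$. I would define a strategy $\tau_1$ from $\rho_{k_{i^*}}$ by $\tau_1(\rho_{k_{i^*}}\, h\, v) := \sigma_1(\rhofactor{0}{k_{i^*}-1} \cdot \rho_{k_{i^*}}\, h\, v)$; any play $\rho^{*}$ from $\rho_{k_{i^*}}$ consistent with $\tau_1$ prepended with $\rhofactor{0}{k_{i^*}-1}$ is consistent with $\sigma_1$ and hence lies in $\DFWMP$. Because maximal good decompositions are defined greedily (Lemma~\ref{lem:factorize}), the decomposition of this combined play agrees with $(k_0, \ldots, k_{i^*})$ on the fixed prefix, so the tail $\rho^{*}$ itself admits a good decomposition; thus $\tau_1$ is $\DFWMP$-winning from $\rho_{k_{i^*}}$, placing $(v_0, b_0)$ in the target winning set.

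For complexity, both the $\DFWMP$ computation (Theorem~\ref{thm:WMP}) and Algorithm~\ref{algo:GDReach} on $\G'$ (Lemma~\ref{lem:GDReach}, using $|V'|=2|V|$) fit within $O(\lambda \cdot |V| \cdot (|V|+|E|) \cdot \lceil \log_2(\lambda \cdot W)\rceil)$ time. For memory, player~$1$ composes a $\GDReach{\lambda}{U'}$ strategy in $\G'$ of memory $O(\lambda^2 \cdot W \cdot |V|)$ with a $\DFWMP$-winning strategy of memory $O(\lambda^2 \cdot W)$ after the switch, plus a single flag recording the switch, for a total of $O(\lambda^2 \cdot W \cdot |V|)$; player~$2$ symmetrically combines an $\overline{\GDReach{\lambda}{U'}}$ strategy of memory $O(\lambda^2 \cdot W)$ with an $\overline{\DFWMP}$ strategy of memory $O(\lambda^2 \cdot W \cdot |V|)$ (Theorem~\ref{thm:WMP}) after the switch, again $O(\lambda^2 \cdot W \cdot |V|)$ overall. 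The main obstacle I expect is precisely the greedy-decomposition argument certifying $\rho_{k_{i^*}} \in W^{\DFWMP}_1$, since this is what couples the reachability deadline to a vertex from which $\DFWMP$ can still be enforced and is the load-bearing step of the reduction.
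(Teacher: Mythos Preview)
Your reduction is exactly the paper's: build the bit-augmented game $\G'$, intersect the ``bit~$=1$'' layer with the $\DFWMP$ winning set to get the target $U'$, and solve $\GDReach{\lambda}{U'}$ via Lemma~\ref{lem:GDReach}. The paper computes $\WinG{1}{\DFWMP(\lambda,0)}{\G'}$ rather than $\WinG{1}{\DFWMP(\lambda,0)}{\G}$, but since weights in $\G'$ depend only on the first component these coincide, so your target set $U'$ equals the paper's~$X'$.

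The one genuine difference is how you establish the harder direction of the equivalence. The paper argues contrapositively: it builds player~$2$'s strategy in $\G$ explicitly (play $\overline{\GDReach{\lambda}{X'}}$ in $\G'$; the moment a good-decomposition prefix ends at a vertex with bit~$1$, that vertex is necessarily outside $W^{\DFWMP}_1$, so switch to an $\overline{\DFWMP}$ strategy) and shows this beats~$\Obj$. You instead lift a winning $\sigma_1$ for $\Obj$ to $\G'$ and prove every consistent play lies in $\GDReach{\lambda}{U'}$, the load-bearing step being that the residual strategy $\tau_1$ witnesses $\rho_{k_{i^*}}\in W^{\DFWMP}_1$. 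Your argument is correct: the maximal good decomposition is computed greedily from the left, so it agrees on the common prefix of $\rho$ and the glued play, and the tail therefore inherits a good decomposition. What your route costs you is that the player~$2$ memory bound no longer falls out automatically; your phrase ``symmetrically \ldots\ after the switch'' hides exactly the construction the paper spells out, namely the switch is triggered when a window first-closes while the bit is~$1$ (detectable with the $O(\lambda^2\cdot W)$ memory already used by the $\overline{\GDReach}$ strategy, plus the bit), and the vertex reached is then guaranteed to lie outside $W^{\DFWMP}_1$. Once you make that switch condition explicit, your memory accounting matches the paper's.
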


\begin{proof}
From $\G$ and $U$, we construct a new game structure $\G' = (V'_1,V'_2,E',w')$ with a bit added to the vertices of $V$ that indicates whether $U$ has been already visited (the bit equals $1$) or not (the bit equals $0$). More precisely, we have $V'_1 = V_1 \times \{0,1\}$, $V'_2 = V \times \{0,1\}$; given $(v,v') \in E$, an edge $((v,b),(v',b'))$ belongs to $E'$ such that $b'=1$ if ($b=1$ or $v' \in U$), and $b' = 0$ otherwise, and $w'((v,b),(v',b')) = w(v,v')$. We also define $U' = V \times \{1\}.$

For each $v \in V$, we denote by $(v,b_v)$ the vertex of $V'$ such that $b_v = 1$ if $v \in U$, and $b_v = 0$ otherwise. Notice that to each play $\rho$ of $\G$ from vertex $v$ corresponds a unique play $\rho'$ of $\G'$ from vertex $(v,b_v)$. The converse is also true. Moreover, if $\rhofactor{0}{l}$ is a prefix of $\rho$, with $l \geq 0$, such that $\rhofactor{0}{l}$ has a \GoodDec{\lambda} and  $\rho_k \in U$ for some $k \leq l$, then $\rho'$ belongs to $\GDReach{\lambda}{U'}$. The converse is also true. Finally, a strategy on $\G$ can be easily translated to a strategy on $\G'$ and conversely. In the sequel of the proof, we will often shift from one game to the other game.

The winning set of player~$1$ for the objective $\Obj$ is computed as follows. In $\G'$, we first compute the set $X' = U' \cap \WinG{1}{\DFWMP(\lambda,0)}{\G'}$, and then the set $Y' = \WinG{1}{\GDReach{\lambda}{X'}}{\G'}$. Let $Y = \{v \in V \mid (v,b_v) \in Y'\}$. We want to show that $Y = \WinG{1}{\Obj}{G}$. Notice that this set can be computed in $O(\lambda \cdot |V| \cdot (|V| + |E|) \cdot \lceil \log_2(\lambda \cdot W) \rceil)$ time by Theorem~\ref{thm:WMP} and Lemma~\ref{lem:GDReach}.

Let $v \in Y$, that is, $(v,b_v) \in Y'$. We define the next strategy $\sigma'_1$ on $G'$: begin with a winning strategy for the objective $\GDReach{\lambda}{X'}$, and as soon as this objective is realized, shift to a winning strategy for the objective $\DFWMP(\lambda,0)$. This strategy $\sigma'_1$ will force a play $\rho'$ to have a \GoodDec{\lambda} $(k_i)_{i \geq 0}$ such that $\rho'_{k_i} \in U'$ for some $k_i$. Therefore, the strategy $\sigma_1$ on $\G$ corresponding to $\sigma'_1$ will force a play $\rho$ to have a \GoodDec{\lambda} such that $\rho_{l} \in U$ for some $l$. Hence $v \in \WinG{1}{\Obj}{G}$. By Theorem~\ref{thm:WMP} and Lemma~\ref{lem:GDReach} (using arguments as in the proof of this lemma), one checks that the proposed winning strategy $\sigma_1$ is finite-memory and requires a memory in $O(\lambda^2 \cdot W \cdot |V|)$.
%The proposed winning strategy $\sigma_1$ is finite-memory and requires a memory in $O(\lambda \cdot |V|)$ by Theorem~\ref{thm:WMP} and Lemma~\ref{lem:GDReach}. 

%Let $v \in Y$. First, if $v \in X$ then we have that $v \in \WinG{1}{\Obj}{U}$. Indeed, since $v \in U$, the winning strategy of player~$1$ for the $\DFWMP(\lambda,0)$ is a winning strategy for $\Obj$. Now assume that $v \not\in X$, that is, $(v,0) \in Y'$. We define the next strategy $\sigma'_1$ on $G'$: begin with a winning strategy for the objective $\GDReach{\lambda}{X'}$, and as soon as this objective is realized, shift to a winning strategy for the objective $\DFWMP(\lambda,0)$. This strategy $\sigma'_1$ will force a play $\rho'$ to have a \GoodDec{\lambda} $(k_i)_{i \geq 0}$ such that $\rho'_{k_i} \in U'$ for some $k_i$. Therefore, the strategy $\sigma_1$ on $\G$ corresponding to $\sigma'_1$ will force a play $\rho$ to have a \GoodDec{\lambda} such that $\rho_{l} \in U$ for some $l$. Hence $v \in \WinG{1}{\Obj}{G}$. The proposed winning strategy $\sigma_1$ is finite-memory and requires a memory in $O(\lambda \cdot |V|)$ by Theorem~\ref{thm:WMP} and Lemma~\ref{lem:GDReach}. 

Let $v \not\in Y$, that is, $(v,b_v) \not\in Y'$. We define the next strategy $\sigma'_2$ on $\G'$: begin with a winning strategy for the objective $\overline{\GDReach{\lambda}{X'}}$, and if the objective $\GDReach{\lambda}{U'}$ is realized, immediately shift to a winning strategy for the objective $\overline{\DFWMP(\lambda,0)}$ (such a strategy exists by definition of $X'$ and $Y'$). Let $\sigma_2$ be the corresponding strategy in $\G$. Let us show that $\sigma_2$ is a winning strategy of player~$2$ for the objective $\overline{\Obj}$. Assume the contrary: there exists a strategy $\sigma_1$ of player~$1$ such that the outcome $\rho = \Out(v,\sigma_1,\sigma_2)$ is winning for $\Obj$, that is, $\rho$ has a \GoodDec{\lambda} $(k_i)_{i \geq 0}$, that we suppose maximal, with $\rho_l \in U$ for some $l$. Take the smallest index $k_i$ such that $k_i \geq l$;  then the corresponding play $\rho'$ in $\G'$ has the same \GoodDec{\lambda} and $\rho'_{k_i} \in U'$. From this index $k_i$, $\sigma'_2$ will force the play to have no \GoodDec{\lambda}, in contradiction with the \GoodDec{\lambda} $(k_i)_{i \geq 0}$. Finally, as for the strategy proposed above from $v \in Y$, one checks that $\sigma_2$ is finite-memory and requires a memory in $O(\lambda^2 \cdot W \cdot |V|)$.

%Finally, as the winning set of player~$2$ for the objective $\overline{\Obj}$ is equal to $\{v \in V \mid (v,b_v) \in \WinG{2}{\overline{\GDReach{\lambda}{X'}}}{\G'}\}$, player~$2$ can win with finite-memory strategies and memory in $O(\lambda)$ by Lemma~\ref{lem:GDReach}.
\qed\end{proof}

The following example illustrates the proof of Proposition~\ref{prop:WMPReach}.

\addtocounter{example}{-1}
\begin{example}[continued]
Consider the game $(G,\Obj)$ depicted on Figure~\ref{ex:windowreach}, where $\Obj = \DFWMP(2,0) \cap \Reach(U)$ with $U = \{v_1,v_3\}$. To compute the winning set $\WinG{1}{\Obj}{G}$, we need to modify $G$ in a new game structure $G'$ where we add a bit to each vertex indicating whether $U$ has been visited (bit equals $1$) or not (bit equals 0). This game structure is the one of previous Figure~\ref{ex:windowICWEnd}, where the set $U' = \{v_1,v_3,v_4\}$ of gray nodes are those with bit~$1$. We already know that every vertex of $G'$ is winning for objective $\GDReach{2}{U'}$, and we note that they are also all winning for objective $\DFWMP(2,0)$. Therefore, in $G'$, player~$1$ can ensure a finite \GoodDec{2} ending in a vertex of $U'$ from which he can ensure an infinite \GoodDec{2}. Thanks to the added bit and coming back to $G$, we get that all vertices of $G$ are winning for~$\Obj$.
\end{example}

\subsection*{Objective $\DFWMP(\lambda,0) \cap \GenReach(U_1,\ldots,U_i)$}
%-----------------------------------------------------------------------------------------

Thanks to the algorithm for the objective $\DFWMP \cap \Reach$ of Proposition~\ref{prop:WMPReach}, we here propose an algorithm 
%\red{(in time polynomial in the size of the game, $\lambda$ and $\lceil \log_2(W)\rceil$)} 
for the objective $\DFWMP \cap \GenReach$ when $i$ is fixed.

\begin{proposition} \label{prop:WMPGenReach}
Let $\G = \Gdev$ be a $1$-weighted game structure, and $\Obj = \DFWMP(\lambda,0) \cap \GenReach$ $(U_1,\ldots,U_i)$ such that $i$ is fixed.
Then $\WinG{1}{\Obj}{\G}$ can be computed in $O(\lambda \cdot 2^{2i} \cdot |V| \cdot (|V| + |E|) \cdot \lceil \log_2(\lambda \cdot W) \rceil)$ time and finite-memory strategies with memory in $O(\lambda^2 \cdot W \cdot 2^i \cdot |V|)$ are sufficient for both players.
\end{proposition}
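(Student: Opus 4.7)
\medskip

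\noindent\textbf{Proof plan for Proposition~\ref{prop:WMPGenReach}.}
The plan is to generalize the construction used in the proof of Proposition~\ref{prop:WMPReach}, where a single Boolean bit was added to each vertex to record whether $U$ had been visited. Here I would add $i$ Boolean bits, one per target set $U_1,\ldots,U_i$, so that the extra component of the state tracks exactly which of the sets $U_j$ have already been visited along the current history.

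First, from $\G = \Gdev$ I would build the game structure $\G' = (V'_1, V'_2, E', w')$ where $V' = V \times \{0,1\}^i$, the partition is inherited from $V$, and an edge $((v,\mathbf{b}),(v',\mathbf{b}'))$ is in $E'$ iff $(v,v') \in E$ and, for each $j \in \{1,\ldots,i\}$, $\mathbf{b}'_j = 1$ if $\mathbf{b}_j = 1$ or $v' \in U_j$, and $\mathbf{b}'_j = 0$ otherwise; weights are inherited via $w'((v,\mathbf{b}),(v',\mathbf{b}')) = w(v,v')$. Let $U' = V \times \{(1,\ldots,1)\}$. For each $v \in V$, let $\mathbf{b}_v \in \{0,1\}^i$ be defined by $(\mathbf{b}_v)_j = 1$ iff $v \in U_j$. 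There is a canonical bijection between plays in $\G$ from $v$ and plays in $\G'$ from $(v,\mathbf{b}_v)$; a play $\rho$ of $\G$ belongs to $\GenReach(U_1,\ldots,U_i)$ iff the corresponding play $\rho'$ of $\G'$ visits $U'$, and the weight sequences coincide so good decompositions are preserved.

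Next, following exactly the scheme of Proposition~\ref{prop:WMPReach}, I would compute in $\G'$ the set
\[
X' = U' \cap \WinG{1}{\DFWMP(\lambda,0)}{\G'},
\]
then
\[
Y' = \WinG{1}{\GDReach{\lambda}{X'}}{\G'},
\]
and define $Y = \{v \in V \mid (v,\mathbf{b}_v) \in Y'\}$. Correctness that $Y = \WinG{1}{\Obj}{\G}$ follows along the same lines as in Proposition~\ref{prop:WMPReach}: from a vertex in $Y$, player~$1$ first plays the winning strategy for $\GDReach{\lambda}{X'}$ in $\G'$ (which forces a finite \GoodDec{\lambda} ending in a vertex whose bit-vector is $(1,\ldots,1)$, i.e.\ all $U_j$ visited in $\G$) and then shifts to the winning strategy for $\DFWMP(\lambda,0)$; conversely, from a vertex outside $Y$, player~$2$ uses the winning strategy for $\overline{\GDReach{\lambda}{X'}}$, and if the play ever reaches $U'$ with all window sums good so far, he shifts to a winning strategy for $\overline{\DFWMP(\lambda,0)}$, which must exist by definition of $X'$ and $Y'$.

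For complexity, $|V'| = O(2^i \cdot |V|)$ and $|E'| = O(2^i \cdot |E|)$. Theorem~\ref{thm:WMP} computes $X'$ in time $O(\lambda \cdot |V'| \cdot (|V'|+|E'|) \cdot \lceil \log_2(\lambda \cdot W)\rceil) = O(\lambda \cdot 2^{2i} \cdot |V| \cdot (|V|+|E|) \cdot \lceil \log_2(\lambda \cdot W)\rceil)$, and Lemma~\ref{lem:GDReach} computes $Y'$ within the same bound. For memory, the strategies for player~$1$ are obtained by concatenating a $\GDReach{\lambda}{X'}$ strategy (memory in $O(\lambda^2 \cdot W \cdot |V'|) = O(\lambda^2 \cdot W \cdot 2^i \cdot |V|)$ by Lemma~\ref{lem:GDReach}) and a $\DFWMP(\lambda,0)$ strategy (memory in $O(\lambda^2 \cdot W)$ by Theorem~\ref{thm:WMP}); player~$2$'s strategy is analogous, giving the claimed $O(\lambda^2 \cdot W \cdot 2^i \cdot |V|)$ bound for both players. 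There is no genuine obstacle beyond the previous proposition; the mildly delicate point is just to verify that in $\G'$ the bit-vector component does not interfere with the $\DFWMP$ analysis (which is immediate since $w'$ only depends on the $V$-component), and to check that the exponential blow-up in $2^i$ enters the time bound only twice (once from the size of $\G'$ and once from the algorithm being quadratic in $|V'|$), matching the stated $2^{2i}$ factor.
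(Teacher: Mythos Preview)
Your proposal is correct and follows essentially the same approach as the paper: both reduce the $\DFWMP \cap \GenReach$ game to a $\DFWMP \cap \Reach$ game on the product structure $V \times \{0,1\}^i$ tracking which target sets have been visited, and then apply the analysis of Proposition~\ref{prop:WMPReach}. The only cosmetic difference is that the paper invokes Proposition~\ref{prop:WMPReach} as a black box on the product game (via the reduction of Lemma~\ref{lem:GenRtoR}), whereas you unfold its two-step computation of $X'$ and $Y'$ directly inside $\G'$; the resulting bounds and strategy constructions are identical.
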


The proof of this proposition uses a classical polynomial reduction (since $i$ is fixed) of generalized reachability games to reachability games that we recall for the sake of completeness.

\begin{lemma}\label{lem:GenRtoR}
Each unweighted game $(G,\Obj)$ with $\Obj = \GenReach(U_1,\ldots,U_i)$ for a family $U_1,\ldots, U_i$ of subsets of $V$ with $i$ fixed can be polynomially reduced to an unweighted game $(G',\Obj')$ with $2^i \cdot |V|$ vertices and $2^i \cdot |E|$ edges, and $\Obj' = \Reach(U')$ for some $U' \subseteq V'$. 

\noindent A memoryless (resp. finite-memory) strategy in $G'$ transfers to a finite-memory strategy in $G$. 
\end{lemma}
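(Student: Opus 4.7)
The plan is the standard ``latest subset construction'' that augments each vertex of $\G$ with a bit vector recording which of the target sets $U_1, \ldots, U_i$ have already been visited. Concretely, I would define $\G' = (V'_1, V'_2, E')$ by taking $V' = V \times \{0,1\}^i$, with $V'_p = V_p \times \{0,1\}^i$ for $p \in \{1,2\}$. Edges are built in the expected way: $((v,b),(v',b')) \in E'$ iff $(v,v') \in E$ and, for each $k \in \{1,\ldots,i\}$, $b'[k] = 1$ iff either $b[k] = 1$ or $v' \in U_k$. For a starting vertex $v_0$, I would work from the augmented initial vertex $(v_0, b_{v_0})$ where $b_{v_0}[k] = 1$ iff $v_0 \in U_k$. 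The target set is $U' = V \times \{(1,1,\ldots,1)\}$.

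Second, I would record the obvious size bound: since the bit vector has $2^i$ possible values and $i$ is fixed, we have $|V'| = 2^i \cdot |V|$ and $|E'| \le 2^i \cdot |E|$, yielding a polynomial blow-up.

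Third, I would establish a one-to-one correspondence between plays of $\G$ from $v_0$ and plays of $\G'$ from $(v_0, b_{v_0})$: given a play $\rho = \rho_0 \rho_1 \ldots$ in $\G$, lift it to $\rho' = (\rho_0, b_0)(\rho_1, b_1)\ldots$ where each $b_k$ is determined inductively by the update rule above. The key observation is then immediate from the definition of the bits: the history $\rho_0 \ldots \rho_l$ visits at least one vertex of each $U_k$ iff $b_l = (1,\ldots,1)$, i.e.\ iff $(\rho_l, b_l) \in U'$. Hence $\rho$ satisfies $\GenReach(U_1, \ldots, U_i)$ iff $\rho'$ satisfies $\Reach(U')$, so $v_0 \in \WinG{1}{\GenReach(U_1,\ldots,U_i)}{\G}$ iff $(v_0, b_{v_0}) \in \WinG{1}{\Reach(U')}{\G'}$.

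Finally, for the strategy transfer: a strategy $\sigma'$ in $\G'$ translates back to a strategy $\sigma$ in $\G$ by using the bit vector as the memory content; the memory update function is precisely the bit-update rule defining $E'$, and the next-move function is $\sigma'$ composed with ``attach the current bits to the current vertex''. A memoryless $\sigma'$ therefore becomes a finite-memory strategy in $\G$ with at most $2^i$ memory states, and a strategy in $\G'$ using memory of size $M$ becomes one in $\G$ using memory of size at most $2^i \cdot M$. No step here is difficult; the only thing to be mildly careful about is the bijection argument and the initialization of the bits at $v_0$ (which must reflect whether $v_0$ already lies in some $U_k$), as otherwise one could miss targets that are satisfied at the very first step.
\qed
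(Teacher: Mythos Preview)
Your proof is correct and follows essentially the same approach as the paper: the same subset construction $V' = V \times \{0,1\}^i$ with the bit-update edge relation, the target $U' = V \times \{(1,\ldots,1)\}$, the matching initialization at $(v_0,b_{v_0})$, and the same strategy-transfer argument using the bit vector as memory. If anything, you spell out the play correspondence and the memory bounds a bit more explicitly than the paper does.
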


\begin{proof}
From $G$, we construct a new game $G'$ in a way to keep in a vertex $(v,b_1,\ldots,b_i)$ of $G'$ the current vertex $v$ of $G$ and $i$ bits. For $k \in \{1,\ldots,i\}$, the $k^{th}$ bit indicates whether the set $U_k$ has been visited (bit equals $1$) or not (bit equals $0$). More formally, we define $V' = V \times \{0,1\}^i$ and we have that $((v,b_1,\ldots,b_i),(v',b'_1,\ldots,b'_i)) \in E' \text{ iff }$
\[
(v,v') \in E \text{ and for each } k \in \{1,\ldots,i\},\ 
b'_k =\begin{cases}
1 & \text{if } b_k = 1 \text{ or } v' \in U_k \\
0 &  \text{otherwise}
\end{cases}.\]

The new game $G'$ has $2^i \cdot |V|$ vertices and $2^i \cdot |E|$ edges. We also define $U' = V \times \{1,\ldots,1\}$ and $\Obj' = \Reach(U')$. Let $v_0$ be an initial vertex in $G$ and $b$ the $i$-tuple of bits such that the $k^{th}$ bit is equal to $1$ if $v_0 \in U_k$ and $0$ otherwise. One can check that player~$1$ has a winning strategy from $v_0$ in $(G,\Obj)$ if and only if he has a winning strategy from $(v_0,b)$ in $(G',\Obj')$. Note that the size of $G'$ is polynomial in the size of the original game since $i$ is fixed.

Finally, due to the way $G'$ is constructed from $G$, any strategy in $G'$ that is memoryless (resp. (finite-memory) transfers to a strategy in $G$ that is finite-memory.
\qed\end{proof}

\begin{proof}[of Proposition~\ref{prop:WMPGenReach}]
Let $(\G,\Obj)$ be a 1-weighted game with $\Obj = \DFWMP \cap \GenReach(U_1,\ldots,U_i)$. It is easy to adapt the polynomial reduction of Lemma~\ref{lem:GenRtoR} in a way to get a 1-weighted game $(\G',\Obj')$ with $\Obj' = \DFWMP \cap \Reach(U')$: one just needs to define the weight function $w'$ such that 
$$w'((v,b_1,\ldots,b_i),(v',b'_1,\ldots,b'_i)) = w(v,v')$$ 
(see proof of Lemma~\ref{lem:GenRtoR}). By Proposition~\ref{prop:WMPReach}, the set $\WinG{1}{\Obj'}{\G'}$ can be computed in $O(\lambda \cdot |V'| \cdot (|V'| + |E'|) \cdot \lceil \log_2(\lambda \cdot W) \rceil)$ time and finite-memory strategies with memory in $O(\lambda^2 \cdot W \cdot |V'|)$ are sufficient for both players in $\G'$. Coming back to $\G$, it follows that $\WinG{1}{\Obj}{\G}$ can be computed in $O(\lambda \cdot 2^{2i} \cdot |V| \cdot (|V| + |E|) \cdot \lceil \log_2(\lambda \cdot W) \rceil)$ time and finite-memory strategies with memory in $O(\lambda^2 \cdot W \cdot 2^i \cdot |V|)$ are sufficient for both players.
\qed\end{proof}

\begin{remark} \label{rem:severalsup}
The assumption about $i$ being fixed is necessary. Indeed, by Table~\ref{table:reg} (third row), we know that generalized reachability games are already ${\sf PSPACE}$-hard.
\end{remark}

\subsection{Objective $\DFWMP(\lambda,0) \cap \Buchi(U)$}

We propose in this section an algorithm 
%\red{(in time polynomial in the size of the game, $\lambda$ and $\lceil \log_2(W) \rceil$)} 
for computing the winning set of player~$1$ for the objective $\DFWMP \cap \Buchi$. This algorithm works as follows: it repeatedly removes the set of vertices that are losing for the objective $\DFWMP \cap \Reach$, as well as its $2$-attractor, until reaching a fixed point. This fixed point is the winning set for the objective $\DFWMP \cap \Buchi$.

\begin{proposition} \label{prop:WMPBuchi}
Let $\G = \Gdev$ be a $1$-weighted game structure, and assume that $\Obj$ is the objective $\DFWMP(\lambda,0) \cap \Buchi(U)$.
Then $\WinG{1}{\Obj}{\G}$ can be computed in $O(\lambda \cdot |V|^2 \cdot (|V| + |E|) \cdot \lceil \log_2(\lambda \cdot W) \rceil)$ time and finite-memory strategies with memory in $O(\lambda^2 \cdot W \cdot |V|)$ (resp. in $O(\lambda^2 \cdot W \cdot |V|^2)$) are sufficient for player~$1$ (resp. player~$2$).
%\red{$\lambda \cdot |V|$  are sufficient for both players}.
\end{proposition}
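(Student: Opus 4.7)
The plan is to adapt the classical fixed-point construction for B\"uchi games (iteratively remove ``bad'' vertices together with their $2$-attractor) by strengthening the notion of ``bad'' so that it also accounts for the $\DFWMP$ constraint. Formally, I would initialize $X_0 = V$ and, at each step $k$ with $X_k$ already known to be $2$-closed in $\G$, compute
\[
Y_k = \WinG{1}{\DFWMP(\lambda,0) \cap \Reach(U \cap X_k)}{\G[X_k]}
\]
via Proposition~\ref{prop:WMPReach}, and then set $X_{k+1} = X_k \ssetminus \Attr{2}{X_k \ssetminus Y_k}{\G[X_k]}$. A short induction (using Corollary~\ref{cor:reg} and Theorem~\ref{thm:attr}: the $2$-attractor of any set in $\G[X_k]$ is $1$-closed, which combined with $X_k$ being $2$-closed in $\G$ makes $X_{k+1}$ $2$-closed in $\G$) ensures that the subgame $\G[X_{k+1}]$ is well-defined. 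The iteration terminates in at most $|V|$ steps at a fixed point $X^*$, and this fixed point is my candidate for $\WinG{1}{\Obj}{\G}$.

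To prove $X^* \subseteq \WinG{1}{\Obj}{\G}$, I observe that at the fixed point every $v \in X^*$ lies in $Y_{k^*}$, so player~$1$ has a finite-memory strategy in $\G[X^*]$ that from $v$ forces a finite history ending in $U \cap X^*$ whose weights admit a \GoodDec{\lambda}. The B\"uchi strategy simply restarts this reach strategy at every visit to $U \cap X^*$; since $X^*$ is $2$-closed, the outcome remains in $X^*$ and visits $U$ infinitely often. Concatenating the successive finite good decompositions yields an infinite good decomposition, so by Lemma~\ref{lem:iclosed} the $\DFWMP(\lambda,0)$ objective is also satisfied. Because the same Moore machine (that of Proposition~\ref{prop:WMPReach}) is reset at each restart, the memory for player~$1$ is $O(\lambda^2 \cdot W \cdot |V|)$.

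For the converse, I would show by induction on $k$ that $V \ssetminus X_k \subseteq \WinG{2}{\overline{\Obj}}{\G}$. Take $v \in X_k \ssetminus X_{k+1}$, which lies in the $2$-attractor of $Z_k := X_k \ssetminus Y_k$ inside $\G[X_k]$. Player~$2$ plays the attractor strategy in $\G[X_k]$ to drive the play into $Z_k$, then switches to the winning strategy for $\overline{\DFWMP(\lambda,0) \cap \Reach(U \cap X_k)}$ in $\G[X_k]$; if at any moment player~$1$ exits $X_k$ (only possible from $V_1 \cap X_k$, since $X_k$ is $2$-closed), player~$2$ hands over to the inductive strategy on $V \ssetminus X_k$. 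If instead the play stays in $X_k$ forever, then either some window becomes bad (so $\overline{\DFWMP}$ holds for the whole play) or $U \cap X_k$ is not visited after $Z_k$ is entered, and confinement to $X_k$ then means $U$ is never visited again (so $\overline{\Buchi(U)}$ holds). Either way $\overline{\Obj}$ is satisfied. Combining up to $|V|$ such level-strategies, each of size $O(\lambda^2 \cdot W \cdot |V|)$, gives player~$2$ a strategy of memory $O(\lambda^2 \cdot W \cdot |V|^2)$.

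Finally, for complexity, the loop runs at most $|V|$ times and each iteration is dominated by the call to Proposition~\ref{prop:WMPReach}, which costs $O(\lambda \cdot |V| \cdot (|V|+|E|) \cdot \lceil \log_2(\lambda \cdot W) \rceil)$; attractor computation is negligible by Theorem~\ref{thm:attr}. The total time is therefore $O(\lambda \cdot |V|^2 \cdot (|V|+|E|) \cdot \lceil \log_2(\lambda \cdot W) \rceil)$. The main obstacle to make watertight is the soundness of player~$1$'s ``restart'' mechanism: I must argue precisely that gluing the finite \GoodDec{\lambda} of each reach-segment produces a valid infinite \GoodDec{\lambda} of the overall play, which is exactly the content of Lemma~\ref{lem:iclosed}.
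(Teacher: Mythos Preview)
Your proposal is correct and follows essentially the same fixed-point scheme as the paper (your sets $X_k$ and $Y_k$ coincide with the paper's $\overline{Y}_k$ and $X_k$; computing the $2$-attractor in the subgame $\G[X_k]$ rather than in $\G$ is equivalent because $X_k$ is $2$-closed). One small imprecision to tighten: the winning strategy for $\DFWMP(\lambda,0)\cap\Reach(U\cap X^*)$ does not in general produce a finite history whose \GoodDec{\lambda} \emph{ends} in $U$---the visit to $U$ may occur strictly before the current window closes---so the correct restart trigger is ``$U$ has been visited \emph{and} the current history has a \GoodDec{\lambda} ending at the current vertex,'' exactly as the paper phrases it; with that adjustment your concatenation argument via Lemma~\ref{lem:iclosed} goes through unchanged.
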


\begin{proof}
We propose an algorithm to compute the set of winning vertices $\WinG{1}{\Obj}{G}$ for $\Obj = \DFWMP(\lambda,0) \cap \Buchi(U)$. This algorithm roughly works as follows:
\begin{enumerate}
\item Compute the winning set $X$ for player~$1$ for the objective $\DFWMP(\lambda,0) \cap \Reach(U)$ in $\G$.
\item Compute the 2-attractor $Y$ of the set $V \ssetminus X$ in $\G$.
\item Repeat step 1. in the game $\G[V \ssetminus Y]$  and step 2. in the game $\G$ until $X$ is empty or $X$ is the set of all vertices in $G[V \ssetminus Y]$.
\item The final set $X$ is the set of winning vertices.
\end{enumerate}
This algorithm is detailed in Algorithm~\ref{algo:WMPinterBuchi} (see also Figure~\ref{fig:WMPBuchi}). It makes several calls to two algorithms: Algorithm~${\sf WMPReach}(G,U,\lambda)$ that computes the set of winning vertices of player~$1$ for the objective $\DFWMP(\lambda,0) \cap \Reach(U)$ and Algorithm~${\sf Attractor}(G,2,X)$ that computes the $2$-attractor $\Attr{2}{X}{G}$. To avoid heavy notation, we denote $V \ssetminus X$ by $\overline X$.

\begin{algorithm}
\caption{$\mathsf{WMPBuchi}$}
\label{algo:WMPinterBuchi}
\begin{algorithmic}[1]
\REQUIRE 1-weigthed game structure $G = (V_1,V_2,E,w)$, subset $U \subseteq V$, window size $\lambda \in \N \setminus \{0\}$
\ENSURE $\WinG{1}{\Obj}{G}$ for $\Obj = \DFWMP(\lambda,0) \cap \Buchi(U)$
\STATE $k \leftarrow 0$
\STATE $Y_0 \leftarrow \emptyset$
\STATE $X_0 \leftarrow {\sf WMPReach}(G,U,\lambda)$
\WHILE{$X_k \neq \overline{Y}_k$ and $X_k \neq \emptyset$}
\STATE $Y_{k+1} \leftarrow {\sf Attractor}(\G,2,\overline{X}_k)$
\STATE $X_{k+1} \leftarrow {\sf WMPReach}(G[\overline{Y}_{k+1}],U,\lambda)$
\STATE $k \leftarrow k+1$
\ENDWHILE
\RETURN $X_k$
\end{algorithmic}
\end{algorithm}

Let us prove that Algorithm $\mathsf{WMPBuchi}$ correctly computes the set $\WinG{1}{\Obj}{G}$. Let $k^*$ be the first index such that $X_{k^*} = \emptyset$ or $X_{k^*} = \overline{Y}_{k^*}$. We will show that $(i)$ if $v_0 \in X_{k^*}$ then $v_0 \in \WinG{1}{\Obj}{G}$, and $(ii)$ if $v_0 \not\in X_{k^*}$ % = \WinG{1}{\DFWMP(\lambda,0) \cap \Reach(U)}{G[\overline{Y}_{k^*}]}$ 
then $v_0 \in \WinG{2}{\overline{\Obj}}{G}$. We first notice that 
\begin{description}
\item [$(*)$] for each vertex $v$ in $\overline{Y}_{k^*}$, player~$2$ cannot force a play to leave $\overline{Y}_{k^*}$ from $v$ ($Y_{k^*}$ is $2$-closed by Theorem~\ref{thm:attr}). Nevertheless player~$1$ could decide to go to $\overline{Y}_{k^*}$.
\end{description}

$(i)$ Let $v_0 \in X_{k^*}$. This means that $X_{k^*} = \overline{Y}_{k^*} \neq \emptyset$.
%\begin{itemize}
%\item [(*)] for each vertex $v$ in $X_{k^*}$, we have that $v \not\in Y_{k^*}$, which means that player~$2$ cannot force the play to leave $X_{k^*}$ from $v$ ($X_{k^*}$ is $2$-closed by Theorem~\ref{thm:attr}). 
%\end{itemize}
Consider the following strategy $\sigma_1$ of player~$1$: Play a winning strategy from $v_0$ for the objective $\DFWMP(\lambda,0) \cap \Reach(U)$ in the game $G[\overline{Y}_{k^*}]$. As soon as a vertex of $U$ has been visited, and the current history has a \GoodDec{\lambda} ending in a vertex $v \in X_{k^*}$ ($v$ belongs to $X_{k^*}$ by $(*)$), play again a winning strategy from $v$ for the objective $\DFWMP(\lambda,0) \cap \Reach(U)$. Continue ad infinitum. 

We claim that $\sigma_1$ is winning for the objective $\DFWMP(\lambda,0) \cap \Buchi(U)$. Indeed, consider any strategy $\sigma_2$ of player $2$ and let $\rho = \OutDev$. By construction of $\sigma_1$, $\rho$ visits infinitely often $U$ and has a \GoodDec{\lambda} since player~$1$ always plays in $X_{k^*}$ and $\sigma_2$ cannot force to leave $X_{k^*}$ by $(*)$.

$(ii)$ Let $v_0 \not\in X_{k^*}$. Then according to $X_{k^*} = \emptyset$ or $X_{k^*} = \overline{Y}_{k^*}$, we have that $v_0 \in V$ or $v_0 \in Y_{k^*}$. %In case $X_{k^*} = \emptyset$, player~$2$ is winning from each vertex of $\overline{Y}_{k^*}$ for the objective $\overline{\DFWMP(\lambda,0)\cap \Reach(U)}$ in $\G[\overline{Y}_{k^*}]$. 

\noindent $(a)$ Let us begin with case $X_{k^*} = \overline{Y}_{k^*}$. Suppose that $v_0 \in Y_k$ for some $k \in \{1,\ldots,k^*\}$. We construct a strategy $\sigma_2$ from $v_0$ for player~$2$ as follows: 
\begin{enumerate}
\item Play a memoryless strategy from $v_0$ to reach some $v \in \overline{X}_{k-1}$. 
\item If $v \in \overline{X}_{k-1} \cap \overline{Y}_{k-1}$ and as long as the play stays in $G[\overline{Y}_{k-1}]$, play a winning strategy for the objective $\overline{\DFWMP(\lambda,0)\cap\Reach(U)}$ in $G[\overline{Y}_{k-1}]$.
\item Otherwise the plays visits some $v' \in Y_{k-1}$ from which one goes back step 1. with $k$ replaced $k-1$.
\end{enumerate}

Let us show that $\sigma_2$ is winning for player~$2$ for the objective $\overline \Obj$. Let $\sigma_1$ be any strategy of player~$1$ and $\rho = \OutDev$. By construction of $\sigma_2$, there exists $k' \leq k$,  $k \in \{1, \ldots , k^*\}$, such that $\rho$ eventually stays in $G[\overline{Y}_{k'-1}]$, that is, there exists a smallest $i\geq 0$ such that $\rho_i \rho_{i+1} \ldots \in (\overline{Y}_{k'-1})^{\omega}$. Moreover player~$2$ plays a winning strategy from $\rho_i$ for the objective $\overline{\DFWMP(\lambda,0)\cap \Reach(U)}$ which ensures that $\rho_i \rho_{i+1} \ldots$ has a \bad{\lambda} window or never visits $U$. It follows that $\rho$ is losing for the objective $\DFWMP(\lambda,0) \cap \Buchi(U)$ and that $v_0 \in \WinG{2}{\overline{\Obj}}{G}$.

$(b)$ We now consider the case $X_{k^*} = \emptyset$. If $v_0 \in Y_{k^*}$, then we proceed as above. Otherwise, let $v_0 \in  \overline{Y}_{k^*}$. Recall that player~$2$ is winning from each vertex of $\overline{Y}_{k^*}$ for the objective $\overline{\DFWMP(\lambda,0)\cap \Reach(U)}$ in $\G[\overline{Y}_{k^*}]$. Therefore we define a strategy $\sigma_2$ for player~$2$ as above, except that we start at step 2. with $k-1 = k^*$. This strategy is thus also winning and $v_0 \in \WinG{2}{\overline{\Obj}}{G}$.

\begin{figure}
\centering
\begin{tikzpicture}[line cap=round,line join=round,>=triangle 45,x=1.0cm,y=1.0cm,scale = 0.7]
\clip(2.5,-4.3) rectangle (16.5,5.);
\draw [rotate around={-178.5442262016258:(10.456896896521387,0.18436530345677968)}] (10.456896896521387,0.18436530345677968) ellipse (5.711608754913006cm and 3.099966598853598cm);
\draw [shift={(19.592544266324655,0.5819420279931938)},dash pattern=on 3pt off 3pt]  plot[domain=2.7972148833184094:3.591189347084698,variable=\t]({1.*5.203790955471708*cos(\t r)+0.*5.203790955471708*sin(\t r)},{0.*5.203790955471708*cos(\t r)+1.*5.203790955471708*sin(\t r)});
\draw [shift={(20.084490005990855,0.6829027534014489)},line width=1.6pt]  plot[domain=2.8218903569695484:3.581169810018563,variable=\t]({1.*6.884511235986853*cos(\t r)+0.*6.884511235986853*sin(\t r)},{0.*6.884511235986853*cos(\t r)+1.*6.884511235986853*sin(\t r)});
\draw [shift={(13.09966611268006,0.7294795709972621)},line width=1.6pt]  plot[domain=2.836178259607482:3.6959668815988214,variable=\t]({1.*6.30727299641972*cos(\t r)+0.*6.30727299641972*sin(\t r)},{0.*6.30727299641972*cos(\t r)+1.*6.30727299641972*sin(\t r)});
\draw [shift={(18.838943526042332,0.48836782687979663)},dash pattern=on 3pt off 3pt]  plot[domain=2.7479035503913716:3.6126480427421845,variable=\t]({1.*6.83985455214304*cos(\t r)+0.*6.83985455214304*sin(\t r)},{0.*6.83985455214304*cos(\t r)+1.*6.83985455214304*sin(\t r)});
\draw [shift={(17.394449298772106,0.6267655780384979)},line width=1.6pt]  plot[domain=2.731723785206424:3.68819510335062,variable=\t]({1.*6.619826538486716*cos(\t r)+0.*6.619826538486716*sin(\t r)},{0.*6.619826538486716*cos(\t r)+1.*6.619826538486716*sin(\t r)});
\draw (15.258012837160935,-1.800124464780219) node[anchor=north west] {$Y_1$};
\draw (13.266771766048581,-2.4254729003361657) node[anchor=north west] {$Y_2$};
\draw [->,line width=0.4pt] (15.,-2.5) -- (16.,-2.5);
\draw [->,line width=0.4pt] (15.,-2.5) -- (14.,-2.5);
\draw [->,line width=0.4pt] (14.,-3.) -- (16.,-3.);
\draw [->,line width=0.4pt] (14.,-3.) -- (11.806438790604469,-2.994106951499533);
\draw (7.441157392711614,0.3392254463322302) node[anchor=north west] {$\ldots$};
\draw [shift={(18.33641405771037,0.7765203895501177)},dash pattern=on 3pt off 3pt]  plot[domain=2.8537309731514444:3.5769717486756067,variable=\t]({1.*8.758064722296412*cos(\t r)+0.*8.758064722296412*sin(\t r)},{0.*8.758064722296412*cos(\t r)+1.*8.758064722296412*sin(\t r)});
\draw [shift={(20.849386520415887,1.1603594317010615)},line width=1.6pt]  plot[domain=2.9823257824930773:3.4758030336046786,variable=\t]({1.*12.280027647729701*cos(\t r)+0.*12.280027647729701*sin(\t r)},{0.*12.280027647729701*cos(\t r)+1.*12.280027647729701*sin(\t r)});
\draw [->,line width=0.4pt] (13.77617508659392,0.5389929511773414) -- (14.962726319005553,0.5129149021133494);
\draw [->,line width=0.4pt] (11.530045353844821,0.0469275322072881) -- (14.868035634035788,0.09908363033527182);
\draw [->,line width=0.4pt] (9.287333134341518,-0.38336027734857764) -- (15.02450392841974,-0.2920871056246061);
\draw [->,line width=0.4pt] (12.5,-3.5) -- (16.,-3.5);
\draw [->,line width=0.4pt] (12.5,-3.5) -- (9.335678428154736,-3.509179863840087);
\draw [->,line width=0.4pt] (12.,-4.) -- (7.8466237858832235,-4.00879688197066);
\draw [->,line width=0.4pt] (12.,-4.) -- (16.,-4.);
\draw (10.995769552713831,-2.98499518478096) node[anchor=north west] {$Y_3$};
\draw (8.263984281601015,-3.462234780336814) node[anchor=north west] {$Y_{k^*}$};
\draw [->,line width=0.4pt] (10.,3.5) -- (14.681179987931106,3.4984260986114544);
\draw [->,line width=0.4pt] (10.,3.5) -- (4.8158005546946345,3.506952787836378);
\draw [->,line width=0.4pt] (9.,4.) -- (12.5,4.);
\draw [->,line width=0.4pt] (9.,4.) -- (4.807273865469711,4.001500762881941);
\draw [->,line width=0.4pt] (7.5,4.5) -- (10.,4.5);
\draw [->,line width=0.4pt] (7.5,4.5) -- (4.790220487019864,4.496048737927504);
\draw [->,line width=0.4pt] (4.580805220245229,1.7195872550026) -- (5.433474142737577,0.4150038035893042);
\draw (2.7839572015975964,2.9393584152227454) node[anchor=north west] {Either $X_{k^*} = \emptyset$};
\draw (2.8168702771531726,2.5608580463336197) node[anchor=north west] {or $X_{k^*} = \overline{Y_{k^*}}$};
\draw (13.381967530493098,4.1) node[anchor=north west] {$X_0$};
\draw (11.259074157158441,4.568555655223764) node[anchor=north west] {$X_1$};
\draw (8.31335389493438,5.1) node[anchor=north west] {$X_2$};
\end{tikzpicture}
\caption{Illustration of Algorithm~\ref{algo:WMPinterBuchi}}
 \label{fig:WMPBuchi}
\end{figure}

It remains to discuss the complexity of the algorithm and the memory of the winning strategies. The while loop of Algorithm $\mathsf{WMPBuchi}$ is executed at most $|V|$ times, and each step in this loop is in $O(\lambda \cdot |V| \cdot (|V| + |E|) \cdot \lceil \log_2(\lambda \cdot W) \rceil)$ time by Theorem~\ref{thm:attr} and Proposition~\ref{prop:WMPReach}. Then, the algorithm runs in  $O(\lambda \cdot |V|^2 \cdot (|V| + |E|) \cdot \lceil \log_2(\lambda \cdot W) \rceil)$ time. As explained in the proof, a winning strategy of player~$1$ consists in repeatedly playing a winning strategy for the objective $\DFWMP(\lambda,0) \cap \Reach(U)$. This strategy is thus finite-memory and one checks that it needs a memory in $O(\lambda^2 \cdot W \cdot |V|)$ (see Proposition~\ref{prop:WMPReach}). Finally, a winning strategy of player~$2$ consists in repeatedly (for decreasing values of $k$) playing a memoryless strategy in $\G$ and then a winning strategy for $\overline{\DFWMP(\lambda,0)\cap\Reach(U)}$ in the subgame $G[\overline{Y}_{k}]$. The latter strategy requires a memory in $O(\lambda^2 \cdot W \cdot |V|)$ by Proposition~\ref{prop:WMPReach}. Therefore the overall required memory is linear in $O(\lambda^2 \cdot W \cdot |V|^2)$.
\qed\end{proof}

\subsection{Objective $\DFWMP(\lambda,0) \cap \GenBuchi(U_1,\ldots,U_i)$}
%--------------------------------------------------------------------------------------

Thanks to the algorithm for the objective $\DFWMP \cap \Buchi$ given in the previous section, we here propose an algorithm 
%\red{(in time polynomial in the size of the game, $\lambda$ and $\lceil \log_2(W) \rceil$)} 
for $\DFWMP \cap \GenBuchi$.

\begin{proposition} \label{prop:WMPGenBuchi}
Let $\G = \Gdev$ be a $1$-weighted game structure, and assume that $\Obj$ is the objective $\DFWMP(\lambda,0) \cap \GenBuchi(U_1,\ldots,U_i)$.
Then $\WinG{1}{\Obj}{\G}$ can be computed in $O(\lambda \cdot i^3 \cdot |V|^2 \cdot (|V| + |E|) \cdot \lceil \log_2(\lambda \cdot W) \rceil)$ time and finite-memory strategies with memory in $O(\lambda^2 \cdot W \cdot i \cdot |V|)$ (resp. in $O(\lambda^2 \cdot W \cdot i^2 \cdot |V|^2)$) are sufficient for player~$1$ (resp. player~$2$).
\end{proposition}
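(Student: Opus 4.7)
The plan is to reduce the intersection $\DFWMP(\lambda,0) \cap \GenBuchi(U_1,\ldots,U_i)$ to the single-B\"uchi intersection $\DFWMP(\lambda,0) \cap \Buchi(U')$ by a classical product with a cyclic counter (the same idea as in the footnote reduction Lemma~\ref{lem:GenBtoB}), and then to invoke Proposition~\ref{prop:WMPBuchi} on the new game.

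Concretely, from $\G = \Gdev$ I would build the $1$-weighted game structure $\G' = (V'_1, V'_2, E', w')$ defined by $V' = V \times \{1, \ldots, i\}$ with $(v,k) \in V'_p$ iff $v \in V_p$, and
\[
\bigl((v,k),(v',k')\bigr) \in E' \quad \text{iff} \quad (v,v') \in E \text{ and } k' = \begin{cases} (k \bmod i) + 1 & \text{if } v' \in U_k, \\ k & \text{otherwise.} \end{cases}
\]
The weight function is inherited edge-by-edge: $w'((v,k),(v',k')) = w(v,v')$. I would then set $U' = V \times \{1\}$ and consider the game $(\G', \DFWMP(\lambda,0) \cap \Buchi(U'))$, whose initial vertex corresponding to $v_0$ is $(v_0, 1)$. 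Note $|V'| = i \cdot |V|$ and $|E'| \leq i \cdot |E|$.

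The correctness of the reduction rests on the bijection between plays of $\G$ from $v_0$ and plays of $\G'$ from $(v_0,1)$: it preserves the sequence of weights position by position, so a play wins $\DFWMP(\lambda,0)$ in $\G$ iff its image wins $\DFWMP(\lambda,0)$ in $\G'$ (the \GoodDec{\lambda} is literally the same sequence of indices). The counter cycles once through $\{1,\ldots,i\}$ exactly when each of $U_1,\ldots,U_i$ has been visited, so visiting $V \times \{1\}$ infinitely often in $\G'$ is equivalent to visiting every $U_k$ infinitely often in $\G$. Winning strategies transfer transparently in both directions, with at most the factor $i$ of the counter added to the memory.

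Applying Proposition~\ref{prop:WMPBuchi} to $\G'$ then gives an algorithm in time
\[
O\bigl(\lambda \cdot |V'|^2 \cdot (|V'| + |E'|) \cdot \lceil \log_2(\lambda \cdot W) \rceil\bigr) = O\bigl(\lambda \cdot i^3 \cdot |V|^2 \cdot (|V| + |E|) \cdot \lceil \log_2(\lambda \cdot W) \rceil\bigr),
\]
and sufficient memory in $O(\lambda^2 \cdot W \cdot |V'|) = O(\lambda^2 \cdot W \cdot i \cdot |V|)$ for player~$1$ and in $O(\lambda^2 \cdot W \cdot |V'|^2) = O(\lambda^2 \cdot W \cdot i^2 \cdot |V|^2)$ for player~$2$, matching the stated bounds. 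The only subtlety to double-check is that the bijection preserves $\lambda$-windows so that Lemmas~\ref{lem:factorize}--\ref{lem:iclosed} apply unchanged to $\G'$; this is immediate because $w'$ copies $w$ on every edge and the counter component never influences the weight seen. No further work is needed.
\qed
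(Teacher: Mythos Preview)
Your approach matches the paper's exactly: lift the classical cyclic-counter reduction of Lemma~\ref{lem:GenBtoB} to weighted games by setting $w'((v,k),(v',k'))=w(v,v')$, then apply Proposition~\ref{prop:WMPBuchi} to $\G'$ and read off the time and memory bounds from $|V'|=i\cdot|V|$ and $|E'|\le i\cdot|E|$.

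There is one technical slip in the correctness claim. You assert that visiting $V\times\{1\}$ infinitely often in $\G'$ is equivalent to visiting every $U_k$ infinitely often in $\G$, but only one implication holds: a play that never meets $U_1$ keeps the counter at~$1$ forever and therefore satisfies $\Buchi(V\times\{1\})$ without satisfying $\GenBuchi(U_1,\ldots,U_i)$ (a one-player, single-vertex instance with a self-loop outside $U_1$ already witnesses this). The fix is routine and leaves all bounds unchanged: choose the B\"uchi target so that it can be reached only upon \emph{completing} a cycle rather than at its start, e.g.\ use $i{+}1$ counter values as in Lemma~\ref{lem:GenBtoB} and set $U'=V\times\{i\}$. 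The same wobble is present in the paper's own Lemma~\ref{lem:GenBtoB} (which takes $U'=V\times\{0\}$), so your argument is faithful to its source.
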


The proof of this proposition uses a classical polynomial reduction of generalized B\"uchi games to B\"uchi games that we recall for the sake of completeness.

\begin{lemma}\label{lem:GenBtoB}
Each unweighted game $(G,\Obj)$ with $\Obj = \GenBuchi(U_1,\ldots,U_i)$ for a family $U_1,\ldots, U_i$ of subsets of $V$ can be polynomially reduced to an unweighted game $(G',\Obj')$ with $i \cdot |V|$ vertices and $i \cdot |E|$ edges, and $\Obj' = \Buchi(U')$ for some $U' \subseteq V'$. 

\noindent A memoryless (resp. (polynomial, exponential) memory) strategy in $G'$ transfers to a linear memory (resp. (polynomial, exponential) memory) strategy in $G$. 
\end{lemma}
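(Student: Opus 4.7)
The plan is to use the standard counter-based product construction that cycles through the target sets $U_1, \ldots, U_i$ one at a time. First, I would define $G' = (V'_1, V'_2, E')$ by setting $V' = V \times \{1, \ldots, i\}$, with $(v,k) \in V'_p$ iff $v \in V_p$ (so ownership of a vertex depends only on its $V$-component, ensuring the size bound $i \cdot |V|$). The counter $k$ records the index of the next target set that we are trying to visit. For each edge $(v,v') \in E$ and each $k \in \{1, \ldots, i\}$, I would add an edge $((v,k),(v',k')) \in E'$ where $k' = (k \bmod i) + 1$ if $v' \in U_k$, and $k' = k$ otherwise; this gives exactly $i \cdot |E|$ edges. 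Finally, I would set $U' = U_1 \times \{1\}$ and $\Obj' = \Buchi(U')$.

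The heart of the argument is then a bijection between plays of $G$ starting at $v_0$ and plays of $G'$ starting at $(v_0, 1)$: each play $\rho$ of $G$ lifts to a unique play $\rho'$ obtained by augmenting every vertex with the current counter value. I would show by a straightforward induction on the visits that the counter in $\rho'$ cycles from $1$ back to $1$ infinitely often iff, for every $k \in \{1, \ldots, i\}$, vertices of $U_k$ are visited infinitely often in $\rho$. The ``$\Rightarrow$'' direction is immediate: a full cycle of the counter forces a visit to each $U_k$. The ``$\Leftarrow$'' direction uses that if all $U_k$ are visited infinitely often then, starting from any counter value, the counter must eventually be incremented (by visiting the currently-targeted $U_k$), and hence completes full cycles infinitely often. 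This yields that $\rho \in \GenBuchi(U_1, \ldots, U_i)$ iff $\rho'$ visits $U_1 \times \{1\}$ infinitely often, i.e., $\rho' \in \Buchi(U')$, and the correspondence lifts in the standard way to strategies on both sides.

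For the memory transfer, a memoryless strategy $\sigma'$ in $G'$ depends only on the pair $(v, k)$, so its translation to $G$ must remember the current counter value, giving a strategy of linear memory $i$ in $G$; a finite-memory strategy in $G'$ with memory $M$ translates likewise to one with memory $i \cdot M$ in $G$, preserving polynomial/exponential memory bounds up to the factor $i$. The main obstacle I anticipate is a clean handling of the counter-update convention in degenerate cases, in particular when $v_0 \in U_1$: one must decide whether the initial counter is $1$ or $2$, but this is resolved by the definition of the edge relation (the counter is only advanced upon arrival at a $U_k$-vertex along an edge), so the initialization $(v_0, 1)$ is correct regardless. Everything else is bookkeeping.
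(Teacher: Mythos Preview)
Your overall plan---the cyclic counter product---is the standard reduction and is the same idea the paper uses. However, there is a genuine bug in your specific instantiation: with the destination-based update rule you wrote ($k' = (k \bmod i)+1$ when $v' \in U_k$), the B\"uchi set $U' = U_1 \times \{1\}$ does \emph{not} characterise cycle completion. A vertex $(w,1)$ with $w \in U_1$ can be entered along an edge only from $(v,i)$ with $w \in U_i$ (the case ``stay at counter $1$'' forces $w \notin U_1$), so after the initial position, $U_1 \times \{1\}$ is hit only when a vertex of $U_1 \cap U_i$ is reached while the counter wraps around. Concretely, take $i=2$, $V=\{a,b\}$, $U_1=\{a\}$, $U_2=\{b\}$, and edges $a\to b\to a$: the lifted play from $(a,1)$ is $(a,1)(b,1)\big((a,2)(b,1)\big)^\omega$, so $(a,1)\in U_1\times\{1\}$ is visited only once, yet $(ab)^\omega$ satisfies $\GenBuchi(U_1,U_2)$. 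Your sentence ``the counter cycles from $1$ back to $1$ infinitely often iff \ldots'' is true, but that is not what membership in $U_1\times\{1\}$ detects under your update rule.

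The fix is minor. Either switch to a \emph{source}-based update ($k'=(k\bmod i)+1$ when $v\in U_k$), in which case $U'=U_1\times\{1\}$ becomes correct because the last vertex visited at counter~$1$ in each cycle lies in $U_1\times\{1\}$; or keep the destination-based rule but add one extra counter value serving as a dedicated ``cycle-just-completed'' state, which is what the paper does with its range $\{0,\ldots,i\}$. Once this is repaired, the rest of your argument (the play bijection, the equivalence via counter cycling, and the memory transfer with the factor~$i$) goes through exactly as you outlined.
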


\begin{proof}
Let us notice that if player~$1$ wins for the objective $\GenBuchi(U_1,\ldots,U_i)$, as he visits every $U_k$ infinitely often, he visits in particular each $U_k$ in the order $1,\ldots, i$ infinitely often. Therefore from $G$, we construct a new game $G'$ in a way to keep in a vertex $(v,k)$ of $G'$ the current vertex $v$ of $G$ and an integer $k$ indicating that we have visited $U_1, \ldots, U_k$. This integer is updated to $k+1$ if we visit $U_{k+1}$, and in the special case of $k = i$, it is updated to $0$. More formally, we define $V' = V \times \{0,\ldots,i\}$ and 
\[
((v,k),(v',k')) \in E' \text{ if } (v,v') \in E \text{ and } 
k' =\begin{cases}
0 & \text{if } k=i \\
k+1 & \text{if } k < i \text{ and } v' \in U_{k+1} \\
k &  \text{otherwise}
\end{cases}.
\]
We also define $U' = V \times \{0\}$ and $\Obj' = \Buchi(U')$. One can check that player~$1$ has a winning strategy from $v_0$ in $(G,\Obj)$ if and only if he has a winning strategy from $(v_0,0)$ in $(G',\Obj')$.

Finally, due to the way $G'$ is constructed from $G$, any strategy in $G'$ that is memoryless (resp. (polynomial, exponential) memory) transfers to a strategy in $G$ that is linear memory (resp. (polynomial, exponential) memory).
\qed\end{proof}

\begin{proof}[of Proposition~\ref{prop:WMPGenBuchi}]
Let $(\G,\Obj)$ be a 1-weighted game with $\Obj = \DFWMP \cap \GenBuchi(U_1,\ldots,U_i)$. It is easy to adapt the polynomial reduction of Lemma~\ref{lem:GenBtoB} in a way to get a 1-weighted game $(\G',\Obj')$ with $\Obj = \DFWMP \cap \Buchi(U')$ (one just needs to define the weight function $w'$ such that $w'((v,k),(v',k')) = w(v,v')$, see proof of Lemma~\ref{lem:GenBtoB}). By Proposition~\ref{prop:WMPBuchi}, the set $\WinG{1}{\Obj'}{\G'}$ can be computed in $O(\lambda \cdot |V'|^2 \cdot (|V'| + |E'|) \cdot \lceil \log_2(\lambda \cdot W) \rceil)$ time and finite-memory strategies with memory in $O(\lambda^2 \cdot W \cdot |V'|)$ (resp. in $O(\lambda^2 \cdot W \cdot |V'|^2)$) are sufficient for player~$1$ (resp. player~$2$) in $\G'$. Coming back to $\G$, it follows that $\WinG{1}{\Obj}{\G}$ can be computed in $O(\lambda \cdot i^3 \cdot |V|^2 \cdot (|V| + |E|) \cdot \lceil \log_2(\lambda \cdot W) \rceil)$ time and finite-memory strategies with memory in $O(\lambda^2 \cdot W \cdot i \cdot |V|)$ (resp. in $O(\lambda^2 \cdot W \cdot i^2 \cdot |V|^2)$) are sufficient for player~$1$ (resp. player~$2$).
\qed\end{proof}

\subsection{Objective $\DFWMP(\lambda,0) \cap \CoBuchi(U)$}
%--------------------------------------------------------------------------

In this section, we study the last class of $\DFWMP \cap \CoBuchi$ games, and we provide an algorithm 
%\red{(in time polynomial in the size of the game, $\lambda$ and $\lceil \log_2(W) \rceil$)} 
for computing the winning set of player~$1$ for such games. The study is elaborated; we begin with an example where we give the main ideas of the algorithm.

\begin{example} \label{ex:WMPCoBuchi}

Consider the game $(G,\Obj)$ depicted on Figure~\ref{fig:WMPCoBuchi} with $\Obj = \DFWMP(2,0) \cap \CoBuchi(U)$ and $U = \{v_0,v_2,v_4\}$. Player~$1$ has a winning strategy for the objective $\Obj$ if he can force the play to have a \GoodDec{2} and to eventually stay in $U$. We first notice that any vertex winning for $\DFWMP(2,0) \cap \Safe(U)$ is also winning for $\Obj$, that is $\{v_4\} \subseteq \WinG{1}{\Obj}{G}$. 
Then, we notice that any vertex from which player~$1$ can force the play to 
\begin{itemize}
\item either safely stay in $U$ and have a \GoodDec{2},
\item or have a prefix which has a \GoodDec{2} and ends with a vertex that we know winning for~$\Obj$,
\end{itemize}
is also winning for $\Obj$. Therefore, knowing that $v_4$ is winning for $\Obj$, we get that $v_2$ and $v_3$ are also winning. And then knowing that $v_2, v_3, v_4$ are winning, we finally get that $v_1$ and $v_2$ are winning. Hence $\WinG{1}{\DFWMP(2,0)\cap \CoBuchi(U)}{G} = V$.

\begin{figure}[h]
\centering
  \begin{tikzpicture}[scale=4]
    \everymath{\scriptstyle}
    \draw (0,0) node [rectangle, inner sep=5pt, draw,double] (A) {$v_0$};
    \draw (0.5,0) node [circle, draw] (B) {$v_1$};
    \draw (1,0) node [rectangle, inner sep=5pt, draw,double] (C) {$v_2$};
    \draw (1.5,0) node [circle, draw] (D) {$v_3$};
    \draw (2,0) node [circle, draw,double] (E) {$v_4$};

    \draw[->,>=latex] (A) to node[above,midway] {$-1$} (B);
    \draw[->,>=latex] (B) to node[above,midway] {$1$} (C);
    \draw[->,>=latex] (C) to node[above,midway] {$-1$} (D);
    \draw[->,>=latex] (D) to node[above,midway] {$1$} (E);
    
    \draw[->,>=latex] (A) .. controls +(45:0.4cm) and +(135:0.4cm) .. (A) node[above,midway] {$0$};
    \draw[->,>=latex] (E) .. controls +(45:0.4cm) and +(135:0.4cm) .. (E) node[above,midway] {$0$};
    \draw[->,>=latex] (C) .. controls +(45:0.4cm) and +(135:0.4cm) .. (C) node[above,midway] {$0$};
    
    \end{tikzpicture}
    \caption{A game with the objective $\Obj = \DFWMP(2,0) \cap \CoBuchi(U)$}
\label{fig:WMPCoBuchi}
\end{figure}
\end{example}

To formally give the algorithm and its correctness, we now need to define two specific objectives:

\begin{definition} \label{def:Objg}
Let $\Gdev$ be a $1$-weighted game structure, $U,X,Z \subseteq V$ be three sets of vertices and $\lambda \in \N \setminus \{0\}$ be a window size. We consider the next sets of plays:
\begin{eqnarray*} \label{eq:Objg}
\Objg(U,X,Z) &=& \{ \rho \in \Plays(\G) \mid \exists l \in \{1,\ldots, \lambda\} \mbox{ such that the \window{0} is} \\ 
&&\hspace{1mm} \mbox{\iclosed{\lambda}{l} and either } (\rho_l \in \WinG{1}{\GDReach{\lambda}{X}}{G}) \mbox{ or }  \\
&&\hspace{1mm} (\rho_l \in Z \mbox{ and } \rhofactor{0}{l} \in U^+) \}, \\
\Objf(U,X) &=& (\DFWMP(\lambda,0) \cap \Safe(U)) \cup \GDReach{\lambda}{X}.
\end{eqnarray*}
\end{definition}

\noindent 
Objective $\Objg(U,X,Z)$ asks to inductively-close a window in a vertex in $\GDReach{\lambda}{X}$, or in $Z$ while staying in $U$ (see Figure~\ref{fig:Objg}). Notice that this objective is a variant of objective $\ICWReach{\lambda}{U}$. The second objective $\Objf(U,X)$ asks for a play to be either in $\DFWMP(\lambda,0) \cap \Safe(U)$ or in $\GDReach{\lambda}{X}$.

\begin{figure}
\centering
\begin{tikzpicture}[scale=3]

	\draw (-0.9,0.5) node[] (UU) {Either};
	\draw (0,0.5) node[circle,draw] (ZZ) {$\rho_0$};
	
	\draw (0.3,0.5) node[circle,draw,inner sep = 1pt] (BB) {};
	\draw (0.6,0.5) node[circle,draw,inner sep = 1pt] (CC) {};
	\draw (0.9,0.5) node[circle,draw,inner sep = 1pt] (DD) {};
	\draw (1.2,0.5) node[circle,draw,inner sep = 1pt] (EE) {};
	
	\draw (0,0.32) node[] (AB) {$\in  U$};
	\draw (0.3,0.4) node[] (AC) {$\in U$};
	\draw (0.6,0.4) node[] (AD) {$\in U$};
	\draw (0.9,0.4) node[] (AE) {$\in U$};
	\draw (1.2,0.4) node[] (AF) {$\in U$};
	
	\draw (1.5,0.5) node[circle,draw] (YY) {$\rho_l$};
	\draw (1.8,0.4) node[] (XX) {$\in Z \cap U$};
	\draw (1.9,0.5) node (AA) {$\ldots$};
	\draw (ZZ) to[bend left=20] (YY);
	
	\draw (ZZ) -- (BB);
	\draw (BB) -- (CC);
	\draw (CC) -- (DD);
	\draw (DD) -- (EE);
	\draw (EE) -- (YY);
	
	\draw (YY) -- (AA);

	\draw (-0.9,0) node[] (U) {or};
	\draw (0,0) node[circle,draw] (Z) {$\rho_0$};
	\draw (1.5,0) node[circle,draw] (Y) {$\rho_l$};
	\draw (2,0.14) node[] (X) {$\in \WinG{1}{\GDReach{\lambda}{X}}{G}$};
	\draw (1.9,0) node (A) {$\ldots$};
	\draw (Z) to[bend left=20] (Y);
	\draw (Z) -- (Y);
	\draw (Y) -- (A);

\end{tikzpicture}

\caption{A play that belongs to $\Objg(U,X,Z)$}
\label{fig:Objg}
\end{figure}
 
The algorithm to solve $\DFWMP \cap \CoBuchi$ games roughly works as follows: compute the least fixed point (on $X$) of the winning set of player~$1$ for the objective $\Objf(U,X)$. We will show that this winning set is obtained by computing the greatest fixed point (on $Z$) of the winning set of player~$1$ for the objective $\Objg(U,X,Z)$.

We begin with a lemma showing that the winning set of player~$1$ for the objective $\Objg(U,X,Z)$ can be computed in in time polynomial in the size of the game, $\lambda$ and $\lceil \log_2(W) \rceil$. Notice that the proof of this lemma uses arguments similar to those used for Proposition~\ref{prop:WMPReach}.

\begin{lemma} \label{lem:Objg}
Let $\Obj$ be the objective $\Objg(U,X,Z)$, for some subsets $U,X,Z$ of $V$.
Then $\WinG{1}{\Obj}{\G}$ can be computed in $O(\lambda \cdot (|V| + |E|) \cdot \lceil \log_2(\lambda \cdot W) \rceil)$ time, and finite-memory strategies with memory in $O(\lambda)$ are sufficient for both players.
\end{lemma}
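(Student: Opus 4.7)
The plan is to reduce the computation of $\WinG{1}{\Objg(U,X,Z)}{\G}$ to an instance of the $\ICWReach$ objective on a mildly augmented game structure, and then invoke Algorithm~$\AlgoICWReach$ via Lemma~\ref{lem:ICWReach}. I assume that $W_1 := \WinG{1}{\GDReach{\lambda}{X}}{\G}$ is already available (this is natural in the intended use of this lemma, where the outer algorithm for $\DFWMP \cap \CoBuchi$ keeps $U$ and $X$ fixed while iterating on $Z$, so $W_1$ is pre-computed once and reused). With $W_1$ in hand, the disjunction in the definition of $\Objg(U,X,Z)$ becomes: close the \window{0} \iclosed{\lambda}{l} at some step~$l$, ending either at a vertex in $W_1$ (no path constraint), or at a vertex in $Z \cap U$ provided $\rhofactor{0}{l} \in U^+$.

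To fold the path constraint into a single $\ICWReach$ instance, I construct $\G^* = (V_1^*, V_2^*, E^*, w^*)$ with $V^* = V \times \{0,1\}$ and $V_p^* = V_p \times \{0,1\}$ for $p \in \{1,2\}$. For each $(v,v') \in E$ and each $b \in \{0,1\}$, I add an edge $((v,b),(v',b'))$ to $E^*$ with $b' = 1$ iff $b = 1$ and $v' \in U$, and weight $w^*((v,b),(v',b')) = w(v,v')$. The extra bit records whether the play has stayed inside $U$ from the beginning of the window. I then set
\[
U^* := (W_1 \times \{0,1\}) \cup ((Z \cap U) \times \{1\}),
\]
and for $v \in V$ let $b_v = 1$ iff $v \in U$, so that a play from $v_0$ in $\G$ corresponds to a play from $(v_0,b_{v_0})$ in $\G^*$.

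The correctness step amounts to a routine bijection between plays of $\G$ from $v_0$ and plays of $\G^*$ from $(v_0,b_{v_0})$ that preserves every edge weight. Consequently, the total-payoff on any prefix is unchanged, so the \window{0} is \iclosed{\lambda}{l} in $\G$ iff the analogous window is \iclosed{\lambda}{l} in $\G^*$. A straightforward induction on the update rule for the bit shows that the second coordinate at step $l$ equals $1$ iff $\rhofactor{0}{l} \in U^+$. Hence reaching $U^*$ at step $l$ in $\G^*$ matches exactly the two disjuncts of $\Objg(U,X,Z)$, giving $v_0 \in \WinG{1}{\Objg(U,X,Z)}{\G}$ iff $(v_0,b_{v_0}) \in \WinG{1}{\ICWReach{\lambda}{U^*}}{\G^*}$, and winning strategies transfer through the bijection.

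Applying Lemma~\ref{lem:ICWReach} to $\G^*$ then yields an algorithm running in $O(\lambda \cdot (|V^*| + |E^*|) \cdot \lceil \log_2(\lambda \cdot W) \rceil) = O(\lambda \cdot (|V| + |E|) \cdot \lceil \log_2(\lambda \cdot W) \rceil)$, since $|V^*| = 2|V|$ and $|E^*| \leq 2|E|$, and produces strategies for both players whose memory is linear in $\lambda$ in $\G^*$; these pull back to $\G$ with memory in $O(\lambda)$, the extra bit being absorbed into the memory. The main obstacle, which I would be careful about, is to verify that the inductive-closure property of windows is genuinely preserved under the augmentation and that the starting bit $b_{v_0}$ correctly initializes the $U^+$-invariant; both points are handled by the explicit correspondence above.
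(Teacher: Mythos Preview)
Your proposal is correct and follows essentially the same approach as the paper: both reduce to a single $\ICWReach$ instance on a game structure augmented with one bit tracking whether the prefix has stayed in $U$, with a target set combining $\WinG{1}{\GDReach{\lambda}{X}}{\G}$ (any bit) and $Z$ (bit indicating $U^+$). The only cosmetic difference is the bit convention (you use $1$ for ``still in $U$'' while the paper uses $0$), and you are explicit that $W_1 = \WinG{1}{\GDReach{\lambda}{X}}{\G}$ is pre-computed, which the paper leaves implicit but must assume for the stated time bound to hold.
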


\begin{proof}
From $\G$ and $U$, we construct a new game structure $\G' = (V'_1,V'_2,E',w')$ with a bit added to the vertices of $V$ that indicates whether $V \setminus U$ has been already visited (the bit equals $1$) or not (the bit equals $0$). We denote $V \setminus U$ by $\overline{U}$. More precisely, we have $V'_1 = V_1 \times \{0,1\}$, $V'_2 = V \times \{0,1\}$; given $(v,v') \in E$, an edge $((v,b),(v',b'))$ belongs to $E'$ such that $b'=1$ if ($b=1$ or $v' \in \overline{U}$), and $b' = 0$ otherwise, and $w'((v,b),(v',b')) = w(v,v')$.

For each $v \in V$, we denote by $(v,b_v)$ the vertex of $V'$ such that $b_v = 1$ if $v \in \overline{U}$, and $b_v = 0$ otherwise. 
\begin{description}
\item[$(*)$] Notice that to each play $\rho$ of $\G$ from vertex $v$ corresponds a unique play $\rho'$ of $\G'$ from vertex $(v,b_v)$. Given a play $\rho$ in $G$, if the \window{0} is \iclosed{\lambda}{l} for some $l \in \{1,\ldots, \lambda\}$, then it is also the case for $\rho'$ in $G'$. Moreover, by construction of $G'$, $\rho'_l = (v',0)$ iff $\rhofactor{0}{l} \in U^+$. Finally, a strategy on $\G$ can be easily translated to a strategy on $\G'$. For each of the above statements, the converse is also true. 
\end{description}

The winning set of player~$1$ for the objective $\Obj$ is computed as follows. Let $U' = \{ (v,b) \in V' \mid v \in \WinG{1}{\GDReach{\lambda}{X}}{G} \mbox{ or } (b=0 \mbox{ and } v \in Z) \}$. In $\G'$, we compute the set $X' = \ICWReach{\lambda}{U'}$ and we let $X = \{v \in V \mid (v,b_v) \in X'\}$. Let us show that $X = \WinG{1}{\Obj}{G}$. 
We have that $v \in X$ iff player~$1$ has a winning strategy for $\ICWReach{\lambda}{U'}$ in $G'$ from $(v,b_v)$, i.e. he can ensure that the \window{0} in the play $\rho'$ is \iclosed{\lambda}{l} with $l \in \{1,\ldots,\lambda\}$ and $\rho'_l = (v',b') \in U'$. By $(*)$ and definition of $U'$, it is equivalent to say that the \window{0} in the corresponding play $\rho$ in $G$ is \iclosed{\lambda}{l} and either $\rho_l = v' \in \WinG{1}{\GDReach{\lambda}{X}}{G}$ or ($\rhofactor{0}{l} \in U^+$ and $\rho_l \in Z$). It follows that $v \in X$ iff $v \in \WinG{1}{\Obj}{G}$. Finally, the announced results about the algorithmic complexity and the memory requirements of the strategies follow from Lemma~\ref{lem:ICWReach}.
\qed\end{proof}

We now give an algorithm in time polynomial in the size of the game, $\lambda$ and $\lceil \log_2(W) \rceil$) to compute the winning set of player~$1$ for the objective $\Objf(U,X)$ (see Algorithm~\ref{algo:Objf}). This algorithm uses, through a greatest fixpoint, the algorithm $\SolveObjg(G,\lambda,U,X,Z)$ which returns the winning set of player~$1$ for the objective $\Objg(U,X,Z)$ (see Lemma~\ref{lem:Objg}). %It also uses the algorithm $\AlgoGDReach(\G,X,\lambda)$ previously designed for the objective $\AlgoGDReach$ (see Algorithm~\ref{algo:GDReach}).

\begin{algorithm}
\caption{$\SolveObjf$}
\label{algo:Objf}
\begin{algorithmic}[1]
\REQUIRE 1-weighted game structure $G = (V_1,V_2,E,w)$, subsets $U,X \subseteq V$, window size $\lambda \in \N \setminus \{0\}$
\ENSURE $\WinG{1}{\Obj}{G}$ for $\Obj = \Objf(U,X)$
\STATE $k \leftarrow 0$
\STATE $Z_0 \leftarrow V$
%\STATE $Y \leftarrow \AlgoGDReach(\G,X,\lambda)$
\REPEAT
\STATE $T_k \leftarrow \SolveObjg(G,\lambda,U,X,Z_k)$
\STATE $Z_{k+1} \leftarrow X \cup T_k$
\STATE $k \leftarrow k+1$
\UNTIL {$Z_k = Z_{k-1}$}
\RETURN $Z_k$
\end{algorithmic}
\end{algorithm}

\begin{lemma} \label{lem:Objf}
Let $\Obj$ be the objective $\Objf(U,X)$, for some subsets $U,X$ of $V$.
Then $\WinG{1}{\Obj}{\G}$ can be computed in $O(\lambda \cdot |V| \cdot (|V| + |E|) \cdot \lceil \log_2(\lambda \cdot W) \rceil)$ time, and finite-memory strategies with memory in $O(\lambda^2 \cdot W \cdot |V|)$ are sufficient for both players.
\end{lemma}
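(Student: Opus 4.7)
The plan is to interpret Algorithm~$\SolveObjf$ as computing the greatest fixed point of the monotone operator $\Phi(Z) := X \cup \WinG{1}{\Objg(U,X,Z)}{G}$. I would first check that $\Phi$ is $\subseteq$-monotone: for $Z \subseteq Z'$, $\Objg(U,X,Z) \subseteq \Objg(U,X,Z')$, hence $\WinG{1}{\Objg(U,X,Z)}{G} \subseteq \WinG{1}{\Objg(U,X,Z')}{G}$. Iterating $\Phi$ from $Z_0 = V$ therefore produces a decreasing chain $V = Z_0 \supseteq Z_1 \supseteq \ldots$ that stabilizes in at most $|V|$ steps at a fixed point $Z^*$ satisfying $Z^* = X \cup T^*$ with $T^* = \WinG{1}{\Objg(U,X,Z^*)}{G}$. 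Since Lemma~\ref{lem:Objg} computes each $T_k$ in $O(\lambda \cdot (|V|+|E|) \cdot \lceil\log_2(\lambda\cdot W)\rceil)$ time, the overall running time is $O(\lambda \cdot |V| \cdot (|V|+|E|) \cdot \lceil\log_2(\lambda\cdot W)\rceil)$ as claimed.

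For the inclusion $Z^* \subseteq \WinG{1}{\Objf(U,X)}{G}$, I would describe an explicit finite-memory strategy $\sigma^*$ from any $v \in Z^*$, played in phases. At the start of a phase at vertex $u$: if $u \in X$, player~$1$ plays anything, since the trivial good decomposition of size $0$ witnesses $\rho \in \GDReach{\lambda}{X} \subseteq \Objf$. Otherwise $u \in T^*$, and player~$1$ plays the $\Objg(U,X,Z^*)$-winning strategy from Lemma~\ref{lem:Objg} (using $O(\lambda)$ memory) until the current window is inductively closed at some $u'$. If $u' \in \WinG{1}{\GDReach{\lambda}{X}}{G}$, switch to the winning strategy of Lemma~\ref{lem:GDReach} (using $O(\lambda^2 \cdot W \cdot |V|)$ memory), and the concatenation of the closed window with the subsequent good decomposition reaching $X$ places the play in $\GDReach{\lambda}{X}$; otherwise $u' \in Z^*$ and the closed window lies entirely in $U^+$, and a new phase starts from $u'$. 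Either some phase triggers the $\GDReach$ branch, or phases repeat forever closing windows inside $U$, producing an infinite good decomposition of a play in $U^{\omega}$, that is, a play in $\DFWMP(\lambda,0) \cap \Safe(U) \subseteq \Objf$. The overall memory is $O(\lambda^2 \cdot W \cdot |V|)$.

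For the reverse inclusion $W := \WinG{1}{\Objf(U,X)}{G} \subseteq Z^*$, I would prove by induction on $k$ that $W \subseteq Z_k$. The base case $W \subseteq V = Z_0$ is immediate. For the inductive step, fix $v \in W \setminus X$ with a winning strategy $\sigma_1$ for $\Objf$, and show that the same $\sigma_1$ is winning for $\Objg(U,X,Z_k)$ from $v$. Against any $\sigma_2$, the outcome $\rho = \Out(v,\sigma_1,\sigma_2)$ lies in $\Objf$, so it has some maximal good decomposition of either the full play (when $\rho \in \DFWMP \cap \Safe(U)$) or a prefix ending in $X$ (when $\rho \in \GDReach{\lambda}{X}$); by Lemma~\ref{lem:factorize}, this decomposition can be taken with first-closed windows, and in particular the \window{0} is first-closed at some $l \in [1,\lambda]$. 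In the $\DFWMP \cap \Safe(U)$ case, $\rhofactor{0}{l} \in U^+$ and I would argue that $\rho_l \in W \subseteq Z_k$ by contradiction: if $\rho_l \in \WinG{2}{\overline{\Objf}}{G}$ by determinacy, then composing the history $\rhofactor{0}{l}$ with a player-$2$-winning continuation $\sigma_2^*$ from $\rho_l$ yields an outcome $h \cdot \rho^*$; since the windows starting at positions $\geq l$ and the $\Safe(U)$ predicate from position $l$ depend only on the suffix, and any $\GDReach$-witness in $h\cdot\rho^*$ must be either in $h$ (hence $\rho_l \in X \subseteq Z_k$, finishing the case directly) or in $\rho^*$ (hence $\rho^* \in \GDReach{\lambda}{X} \subseteq \Objf$, contradicting $\sigma_2^*$), we derive $\rho^* \in \Objf$, contradicting the choice of $\sigma_2^*$. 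The $\GDReach$ case is treated symmetrically to show $\rho_l \in \WinG{1}{\GDReach{\lambda}{X}}{G}$.

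The main obstacle I expect is the second direction, precisely because $\Objf$ is not suffix-closed: a play winning via the $\GDReach$ disjunct may have suffixes that fall outside $\Objf$, so a naive ``shift $\sigma_1$'' argument does not work. The crucial observation to be exploited is that the prefix $\rhofactor{0}{l}$ is already a good-decomposition-friendly segment lying in $U^+$ (in the $\DFWMP \cap \Safe$ case), so any $\GDReach$ witness appearing in the composed play $h \cdot \rho^*$ can be located cleanly either inside $h$ or inside $\rho^*$, allowing the adversary argument to carry through to the suffix. Organizing this case split carefully — and handling the boundary windows straddling position $l$ via Lemma~\ref{lem:factorize} to ensure the decomposition points align with $l$ — is the delicate part of the proof.
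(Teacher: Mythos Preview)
Your argument for the inclusion $Z^* \subseteq \WinG{1}{\Objf(U,X)}{G}$ (player~$1$'s strategy in phases) is essentially the same as the paper's, and your complexity analysis is correct.

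For the reverse inclusion your route differs from the paper's. The paper proves ``$v_0 \notin Z_k \Rightarrow v_0 \in \WinG{2}{\overline{\Objf}}{G}$'' by induction on $k$, building an explicit player-$2$ strategy: from $v_0$ play a winning strategy for $\overline{\Objg(U,X,Z_{k-1})}$; if the first window first-closes at $l$ with $\rhofactor{0}{l}\in U^+$, the closing vertex lies outside $Z_{k-1}$ and one recurses; otherwise it lies in $\WinG{2}{\overline{\GDReach{\lambda}{X}}}{G}$ and one switches to the corresponding strategy. Your approach instead keeps player~$1$'s $\Objf$-winning strategy $\sigma_1$ and argues that it already wins $\Objg(U,X,Z_k)$. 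This is a legitimate and in some ways cleaner alternative, but it has two issues as written.

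First, your case split is organized on the wrong predicate. In your ``$\GDReach$ case'' the claim that the symmetric contradiction yields $\rho_l \in \WinG{1}{\GDReach{\lambda}{X}}{G}$ can fail: after composing $h=\rhofactor{0}{l-1}$ with a player-$2$ strategy winning $\overline{\GDReach{\lambda}{X}}$ from $\rho_l$, the resulting play $h\rho^*$ may land in $\DFWMP(\lambda,0)\cap\Safe(U)$ rather than in $\GDReach{\lambda}{X}$, and then $\rho^*\notin\GDReach{\lambda}{X}$ produces no contradiction. The fix is to split on whether $\rhofactor{0}{l}\in U^+$. If $\rhofactor{0}{l}\in U^+$, the shift argument (which you already sketched) shows the residual strategy $\tau_1$ wins $\Objf$ from $\rho_l$, hence $\rho_l\in W\subseteq Z_k$ and the second disjunct of $\Objg$ applies. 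If $\rhofactor{0}{l}\notin U^+$, then for \emph{every} composed play $h\rho^*$ the disjunct $\DFWMP\cap\Safe(U)$ is excluded (it already fails on $h$), so $h\rho^*\in\GDReach{\lambda}{X}$; since the window at $0$ first-closes at $l$ (a property of $h$ alone), the maximal decomposition passes through $l$ and $\rho^*\in\GDReach{\lambda}{X}$, giving $\rho_l\in\WinG{1}{\GDReach{\lambda}{X}}{G}$ and the first disjunct of $\Objg$.

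Second, because your reverse inclusion never constructs a player-$2$ strategy, it does not by itself deliver the $O(\lambda^2\cdot W\cdot |V|)$ memory bound for player~$2$ stated in the lemma. The paper's inductive construction of $\sigma_2$ is what provides that bound (via Lemmas~\ref{lem:GDReach} and~\ref{lem:Objg}); you would still need to carry out that construction, or an equivalent one, to complete the proof.
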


\begin{proof}
Let $Z^* = \cap_{k \geq 0} Z_k$ and $T^* = \cap_{k \geq 1} T_k$. We have $Z^* = X \cup T^*$ and $T^* = \Objg(U,X,Z^*)$. We will show that $(i)$ if $v_0 \in Z^*$ then $v_0 \in \WinG{1}{\Obj}{G}$ and $(ii)$ if $v_0 \not\in Z^*$ then $v_0 \in \WinG{2}{\overline{\Obj}}{G}$.

$(i)$ Let $v_0 \in Z^*$. If $v_0 \in X$ then obviously $v_0$ is winning for the objective $\GDReach{\lambda}{X}$, and thus winning for $\Obj$. Otherwise, $v_0 \in T^*$, and we consider the next strategy $\sigma_1$ of player~$1$.
\begin{enumerate}
\item Play a winning strategy for the objective $\Objg(U,X,Z^*)$.
\item As soon as this objective is achieved, say in vertex $v$, if $v \in \WinG{1}{\GDReach{\lambda}{X}}{G}$ then play a winning strategy for $\GDReach{\lambda}{X}$, otherwise (notice that $v \in Z^*$) go back to step 1.
\end{enumerate}
Let us show that $\sigma_1$ is a winning strategy of player~$1$ in $(G,\Obj)$. Let $\rho = \OutDev$ with $\sigma_2$ being any strategy of player~$2$. By construction of $\sigma_1$, either there exists $l$ such that $\rhofactor{0}{l}$ has a \GoodDec{\lambda} with $\rho_l \in X$, which shows that $\rho \in \GDReach{\lambda}{X} \subseteq \Obj$; or we infinitely go back to step 1, in which case $\rho$ has a \GoodDec{\lambda} and only visits vertices of $U$. In the latter case, it follows that $\rho \in \DFWMP(\lambda,0) \cap \Safe(U) \subseteq \Obj$. One checks that $\sigma_1$ is a finite-memory strategy with memory in $O(\lambda^2 \cdot W \cdot |V|)$ by Lemmas~\ref{lem:GDReach} and~\ref{lem:Objg}. 

$(ii)$ We are going to prove by induction on $k\geq1$ that if $v_0 \not\in Z_k$ then $v_0 \in \WinG{2}{\overline{\Obj}}{G}$. Let $k = 1$ and suppose that $v_0 \not\in Z_1$. Thus $v_0 \not\in X$ and $v_0 \in \WinG{2}{\overline{\Objg(U,X,Z_0)}}{G}$ with $Z_0 = V$. Notice that a winning strategy of player~$2$ for $\overline{\Objg(U,X,Z_0)}$ forces a play that begin with a window  \fclosed{\lambda}{l \in \{1,\ldots,\lambda\}}, to visit $V \setminus U$ in a position $l' \leq l$ and to visit $\WinG{2}{\overline{\GDReach{\lambda}{X}}}{G}$ at position $l$.
We define the next strategy $\sigma_2$ of player~$2$: play a winning strategy for $\overline{\Objg(U,X,V)}$, and if the \window{0} is \fclosed{\lambda}{l \in \{1,\ldots,\lambda\}}, with $\rho_l \in \WinG{2}{\overline{\GDReach{\lambda}{X}}}{G}$, switch to a winning strategy for $\overline{\GDReach{\lambda}{X}}$. 
%Note that the first part of the strategy ensures that $\rho_l \in \WinG{2}{\overline{\GDReach{\lambda}{X}}}{G}$ and that some vertex in $V \setminus U$ is visited.

Let us show that $\sigma_2$ is winning for the objective $\overline{\Obj}$. Consider $\rho = \OutDev$ with $\sigma_1$ a strategy of player~$1$. If the \window{0} is \bad{\lambda}, then $\rho \not\in \DFWMP(\lambda,0)$ and thus $\rho \in \overline{\Obj}$. Otherwise this window is \fclosed{\lambda}{l \in \{1,\ldots,\lambda\}}. Then by construction of $\sigma_2$, $\rho$ visits $V \setminus U$ and from $\rho_l$, the play belongs to $\overline{\GDReach{\lambda}{X}}$. As $\rho_0 \not\in X$, it follows that $\rho \not\in \Safe(U) \cap \GDReach{\lambda}{X}$, and thus $\rho \in \overline{\Obj}$.

Suppose now that $k \geq 2$, and take $v_0 \not\in Z_k$. Then, $v_0 \not\in X$ and $v_0 \in \WinG{2}{\overline{\Objg(U,X,Z_{k-1})}}{G}$. Notice that a winning strategy of player~$2$ for $\overline{\Objg(U,X,Z_{k-1})}$ forces a play $\rho$ that begin with a window \fclosed{\lambda}{l \in \{1,\ldots,\lambda\}}, to visit $\WinG{2}{\overline{\GDReach{\lambda}{X}}}{G}$ at position $l$ and also to visit $V \setminus Z_{k-1}$ at position $l$ if $\rhofactor{0}{l} \in U^+$. We adapt the definition of $\sigma_2$ given in case $k=1$ as follows: play a winning strategy for $\overline{\Objg(U,X,Z_{k-1})}$, and if the \window{0} is \fclosed{\lambda}{l \in \{1,\ldots,\lambda\}},
\begin{itemize}
\item if the current history $\rhofactor{0}{l}$ is in $U^+$, switch to a winning strategy for $\overline{\Obj}$ (which exists by induction hypothesis as $\rho_l \not\in Z_{k-1}$), 
\item otherwise switch to a winning strategy for $\overline{\GDReach{\lambda}{X}}$.
\end{itemize}

This strategy $\sigma_2$ is winning for $\overline{\Obj}$. Indeed, the arguments are the same as for case $k=1$ except that we have to consider the additional situation where $\rho$ has a \window{0} \fclosed{\lambda}{l \in \{1,\ldots,\lambda\}} such that $\rhofactor{0}{l}$ is in $U^+$. Let $\rho = \rhofactor{0}{l-1}\rho'$. By construction of $\sigma_2$, we have that $\rho' \in \overline{\Obj}$, and therefore $\rho \in \overline{\Obj}$ (since $\rho_0 \not\in X$). 

One checks that the proposed winning strategy $\sigma_2$ is a finite-memory strategy with memory in $O(\lambda^2 \cdot W \cdot |V|)$ by Lemmas~\ref{lem:GDReach} and~\ref{lem:Objg}.
%Let us show that $\sigma_2$ is a winning strategy for player~$2$ for $\overline{\Obj}$. Let $\sigma_1$ be any strategy of player~$1$ and let $\rho = \OutDev$. If the \window{0} does not close in $\lambda$ steps, then obviously $\rho \in \overline{\Obj}$. Otherwise assume that the \window{0} is \fclosed{\lambda}{\rho_l} with $l \in \{1,\ldots,\lambda\}$. We consider two cases. If $\rhofactor{0}{l} \in U^+$, let $\rho = \rhofactor{0}{l-1}\rho'$. By construction of $\sigma_2$, as $\rho_l \in \WinG{2}{\overline{\Obj}}{G}$ we have that $\rho' \not\in \Obj$. It follows that $\rho \not\in \Obj$. Now assume that $\rhofactor{0}{l}$  visits $V \setminus U$ and $\rho_l \in \WinG{2}{\overline{\GDReach{\lambda}{X}}}{G}$. As we visit $V \setminus U$, that $v_0 \not\in X$ and that  $\rho_l \in \WinG{2}{\overline{\GDReach{\lambda}{X}}}{G}$ (in particular not in $X$), it follows that $\rho \not\in \Obj$.

It remains to study the time complexity. Algorithm~\ref{algo:Objf} runs in $O(\lambda \cdot |V|\cdot (|V| + |E|) \cdot \lceil \log_2(\lambda \cdot W) \rceil)$ time since its loop is executed at most $|V|$ times, and each step in this loop is in $O(\lambda \cdot (|V| + |E|) \cdot \lceil \log_2(\lambda \cdot W) \rceil)$ time by Lemma~\ref{lem:Objg}. 
\qed\end{proof}

We are now able to give an algorithm in time polynomial in the size of the game, $\lambda$ and $\lceil \log_2(W) \rceil$ to solve $\DFWMP \cap \CoBuchi$ games. As expected, the algorithm (Algorithm~\ref{algo:WMPinterCoBuchi}) is the least fixed point (on $X$) of the algorithm $\SolveObjf(G,U,X,\lambda)$.

\begin{algorithm}
\caption{$\mathsf{WMPCoBuchi}$}
\label{algo:WMPinterCoBuchi}
\begin{algorithmic}[1]
\REQUIRE 1-weighted game structure $G = (V_1,V_2,E,w)$, subset $U \subseteq V$, window size $\lambda \in \N \setminus \{0\}$
\ENSURE $\WinG{1}{\Obj}{G}$ for $\Obj = \DFWMP(\lambda,0) \cap \CoBuchi(U)$
\STATE $k \leftarrow 0$
\STATE $X_0 \leftarrow \emptyset$
\REPEAT
\STATE $X_{k+1} \leftarrow X_k \cup \SolveObjf(G,U,X_k,\lambda)$
\STATE $k \leftarrow k+1$
\UNTIL {$X_k = X_{k-1}$}
\RETURN $X_k$
\end{algorithmic}
\end{algorithm}

\begin{proposition} \label{prop:WMPcoBuchi}
Let $\G = \Gdev$ be a $1$-weighted game structure, and suppose that $\Obj$ is the objective $\DFWMP(\lambda,0) \cap \CoBuchi(U)$ for some $U \subseteq V$ and a window size $\lambda \in \N \setminus \{0\}$.
Then $\WinG{1}{\Obj}{\G}$ can be computed in $O(\lambda \cdot |V|^2 \cdot (|V| + |E|) \cdot \lceil \log_2(\lambda \cdot W) \rceil)$ time and strategies with memory in $O(\lambda^2 \cdot W \cdot |V|^2)$ (resp. in $O(\lambda^2 \cdot W \cdot |V|)$) are sufficient for player~$1$ (resp. player~$2$).
\end{proposition}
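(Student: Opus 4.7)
The plan is to show that Algorithm~\ref{algo:WMPinterCoBuchi} computes $X^* := \bigcup_{k \geq 0} X_k$ equal to $\WinG{1}{\Obj}{G}$. The iterates form a monotone chain in the finite lattice $2^V$, so the loop stabilises within $|V|$ rounds, and $X^*$ satisfies the fixed-point identity $\WinG{1}{\Objf(U,X^*)}{G} \subseteq X^*$; by determinacy of $\omega$-regular games, every vertex outside $X^*$ therefore lies in $\WinG{2}{\overline{\Objf(U,X^*)}}{G}$. The time bound is then immediate: $O(|V|)$ calls to $\SolveObjf$, each costing $O(\lambda \cdot |V| \cdot (|V|+|E|) \cdot \lceil \log_2(\lambda W) \rceil)$ by Lemma~\ref{lem:Objf}.

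For $X^* \subseteq \WinG{1}{\Obj}{G}$, I argue by induction on the first $k$ with $v_0 \in X_k$. Player~$1$'s strategy keeps a current ``level'' $k' \leq k$ in memory: at level $k'$ it plays the winning strategy for $\Objf(U, X_{k'-1})$ provided by Lemma~\ref{lem:Objf}; by definition of $\Objf$ the play either commits to $\DFWMP(\lambda,0) \cap \Safe(U) \subseteq \Obj$ forever, or produces a good-decomposition prefix ending in some $v' \in X_{k'-1}$, at which point the level is decremented and the inductive strategy is invoked at $v'$. Since $k'$ strictly decreases at each switch and level~$1$ equals $\WinG{1}{\DFWMP(\lambda,0) \cap \Safe(U)}{G}$, the strategy performs at most $|V|$ switches before committing to an infinite safe tail; gluing the finite good-decomposition prefixes with this tail via Lemma~\ref{lem:iclosed} yields a play in $\DFWMP(\lambda,0) \cap \CoBuchi(U)$. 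The memory cost is $O(|V|)$ nested levels, each of $O(\lambda^2 \cdot W \cdot |V|)$ memory by Lemma~\ref{lem:Objf}, totalling $O(\lambda^2 \cdot W \cdot |V|^2)$.

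For the reverse inclusion, fix $v_0 \notin X^*$ and let $\tau$ be a player-$2$ strategy winning for $\overline{\Objf(U,X^*)}$ uniformly on $V \setminus X^*$, with memory $O(\lambda^2 \cdot W \cdot |V|)$. Player~$2$'s strategy $\sigma_2$ runs $\tau$ in phases: phase $i$ starts at position $b_i$ with base vertex $\rho_{b_i}$, plays $\tau$ restarted at $\rho_{b_i}$, and ends at $b_{i+1} \leq b_i + \lambda$ when the window starting at $b_i$ is first-closed. An induction on $i$ shows $\rho_{b_i} \notin X^*$: otherwise the single first-closed window, viewed as a prefix of any $\tau$-extension from $\rho_{b_i}$, would witness membership in $\GDReach{\lambda}{X^*}$, contradicting $\tau$'s defining property. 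The outcome $\rho$ under $\sigma_2$ is then losing for $\Obj$: if some phase fails to close, $\rho$ has a bad window and exits $\DFWMP(\lambda,0)$; otherwise $(b_i)$ is a good decomposition of $\rho$ and the hardest step is to rule out that $\rho$ eventually stays in $U$. I expect to do so by choosing a phase $i$ past the final exit from $U$ and constructing a $\tau$-consistent extension of the phase-$i$ suffix in which player~$1$ replays $\sigma_1$'s moves inside each successive window, producing an outcome in $\DFWMP(\lambda,0) \cap \Safe(U) \subseteq \Objf(U,X^*)$ that contradicts the winning property of $\tau$ at $\rho_{b_i}$. Player~$2$'s memory is that of $\tau$, namely $O(\lambda^2 \cdot W \cdot |V|)$, plus the window counter, which is absorbed.
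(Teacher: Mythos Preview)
Your argument for $X^* \subseteq \WinG{1}{\Obj}{G}$ is correct and matches the paper's inductive construction, including the memory analysis.

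The gap is in the reverse inclusion, specifically in your player-$2$ strategy and the ``hardest step''. You restart $\tau$ at \emph{every} first-closed window boundary $b_{i+1}$, regardless of whether the segment $\rho[b_i,b_{i+1}]$ has visited $V\setminus U$. Your inductive claim $\rho_{b_i}\notin X^*$ is fine, but the final contradiction is not: the suffix $\rho[b_i,\infty)$ is consistent with your $\sigma_2$, not with $\tau$, because $\tau$ is restarted at each $b_j$. The ``$\tau$-consistent extension in which player~$1$ replays $\sigma_1$'s moves'' is not well-defined --- after the first window, $\tau$ (played without restart) will in general diverge from $\sigma_2$, so player~$1$'s $\sigma_1$-moves in $\rho$ are taken at vertices that need not appear in the extension at all. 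You therefore have no way to produce a $\tau$-consistent play in $\DFWMP(\lambda,0)\cap\Safe(U)$ from $\rho_{b_i}$, and the contradiction does not close. Concretely, $\tau$ may need many more than $\lambda$ steps to force a visit to $V\setminus U$; by restarting it every $\leq\lambda$ steps you may prevent it from ever doing so.

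The paper's fix is to change the restart rule: player~$2$ plays $\tau$ and restarts it only once the current segment has \emph{both} visited $V\setminus U$ \emph{and} reached a point where the accumulated history has a \GoodDec{\lambda}. Because $\tau$ wins $\overline{\Objf(U,X^*)}=(\overline{\DFWMP}\cup\overline{\Safe(U)})\cap\overline{\GDReach{\lambda}{X^*}}$, any $\tau$-consistent segment that acquires a \GoodDec{\lambda} must have left $U$ along the way and must end outside $X^*$; so each restart is well-founded, and if infinitely many restarts occur the play visits $V\setminus U$ infinitely often. If instead some segment never closes, the play already fails $\DFWMP$. This gives $\rho\in\overline{\Obj}$ directly, with no need for the hypothetical extension you sketch, and the memory stays $O(\lambda^2\cdot W\cdot|V|)$.
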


\begin{proof}
Let $X^* = \cup_{k \geq 1}X_k$. We will show that $(i)$ if $v_0 \in X^{*}$ then $v_0 \in \WinG{1}{\Obj}{G}$ and $(ii)$ if $v_0 \not\in X^{*}$ then $v_0 \in \WinG{2}{\overline{\Obj}}{G}$.

We prove $(i)$ by induction on $k$. Let $k = 1$ and suppose that $v_0 \in X_1$, then $v_0 \in \WinG{1}{\Objf(U,\emptyset)}{G}$, i.e. $v_0$ is winning for player~$1$ for the objective $\DFWMP(\lambda,0) \cap \Safe(U)$ (see Definition~\ref{def:Objg}). Therefore, a winning strategy $\sigma_1$ for this objective is clearly winning  for $\Obj$. Suppose now that $k > 1$ and $v_0 \in X_k \setminus X_{k-1}$, then $v_0 \in \WinG{1}{\Objf(U,X_{k-1})}{G}$. Consider the following strategy $\sigma_1$ of player~$1$:
play a winning strategy from $v_0$ for $\Objf(U,X_{k-1})$ and as soon as the current history has a \GoodDec{\lambda} ending in a vertex in $X_{k-1}$, switch to a winning strategy for $\Obj$ (given by the induction hypothesis). This strategy $\sigma_1$ is winning for $\Obj$. Indeed it ensures that either the play $\rho \in \DFWMP(\lambda,0) \cap \Safe(U)$ or it decomposes as $hv\rho'$ such that $hv$ has a \GoodDec{\lambda} ending in $v \in X_{k-1}$ and $v\rho'$ is winning for $\DFWMP \cap \CoBuchi$. It follows that $\rho$ is itself winning for $\DFWMP \cap \CoBuchi$. One checks that $\sigma_1$ is a finite-memory strategy with memory in $O(\lambda^2 \cdot W \cdot |V|^2)$.

%Let $\sigma_2$ be any strategy of player~$2$ and let $\rho = \OutDev$. The first part of $\sigma_1$ ensures that either $\rho \in \DFWMP(\lambda,0) \cap \Safe(U)$, or $\rho_l \in X_{k-1}$ for some $l$ (that we can suppose minimal) such that $\rhofactor{0}{l}$ has a \GoodDec{\lambda}. In the first case, we trivially got that $\rho \in \Obj$. In the latter case, by construction, player~$1$ plays a winning strategy from $\rho_l$ for the objective $\Obj$. As $\rhofactor{0}{l}$ has a \GoodDec{\lambda}, it follows that $\rho \in \Obj$.

$(ii)$ Assume that $v_0 \not\in X^{*}$, then $v_0 \in \WinG{2}{\overline{\Objf(U,X^{*})}}{G}$. by Definition~\ref{def:Objg}, we have \begin{eqnarray} \label{eq:neg} 
\overline{\Objf(U,X^{*})} = (\overline{\DFWMP(\lambda,0)} \cup \overline{\Safe(U)}) \cap \overline{\GDReach{\lambda}{X^*}}
\end{eqnarray}
Consider the following strategy $\sigma_2$ of player~$2$. 
\begin{enumerate}
\item If the current vertex $v$ belongs to $V \setminus X^{*}$, play a winning strategy for $\overline{\Objf(U,X^{*})}$.
\item As soon as the current history starting from $v$ visits $V \setminus U$ and has a \GoodDec{\lambda}, go back to step 1.
\end{enumerate}
Notice that the \GoodDec{\lambda} of second step necessarily ends in a vertex of $V \setminus X^{*}$ by $(\ref{eq:neg})$.

Let $\sigma_1$ be any strategy of player~$1$ and let $\rho = \OutDev$. We will show that $\rho \not\in \Obj$. If $\rho \not\in \DFWMP(\lambda,0)$ then $\rho \not\in \Obj$. Otherwise, $\rho$ has a \GoodDec{\lambda} $(k_i)_{i \geq 0}$ that we can suppose maximal. By $(\ref{eq:neg})$ and definition of $\sigma_2$, $\rho$ has a first visit of $V \setminus U$ in some position $l_1$ such that for the smallest $k_{i_1} \geq l_1$ we switch from step 2. to step 1. Thus $\rho$ has a second visit of $V \setminus U$ in $l_2$ such that for the smallest $k_{i_2} \geq l_2$ we switch from step 2. to step 1., ad infinitum. Therefore, $\rho$ visits $V \setminus U$ infinitely often, that is, $\rho \not\in \CoBuchi(U)$, and thus $\rho \not\in \Obj$. One checks that $\sigma_2$ is a finite-memory strategy with memory in $O(\lambda^2 \cdot W \cdot |V|)$.

Algorithm~\ref{algo:WMPinterCoBuchi} runs in $O(\lambda \cdot |V|^2 \cdot (|V| + |E|) \cdot \lceil \log_2(\lambda \cdot W) \rceil)$ time since its loop is executed at most $|V|$ times, and each step in this loop is in $O(\lambda \cdot |V| \cdot (|V| + |E|) \cdot \lceil \log_2(\lambda \cdot W) \rceil)$ time by Lemma~\ref{lem:Objf}.
\qed\end{proof}

Let us come back to Example~\ref{ex:WMPCoBuchi} to illustrate the two nested fixed points.

\addtocounter{example}{-1}
\begin{example}[continued]
One can check that Algorithm $\mathsf{WMPCoBuchi}$ computes the following sets $X_k$ (based on the computation of sets $Z_{k,l}$ by Algorithm $\SolveObjf$):
\begin{center}
\begin{tabular}{ l l } 
  $X_0 = \emptyset$ &  \\ 
   & $Z_{0,0} = V$ \\ 
   & $Z_{0,1} = \{v_4\}$ \\ 
   & $Z_{0,2} = \{v_4\}$ \\
 \end{tabular} \hspace{0.5cm}
\begin{tabular}{ l l } 
  $X_1 = \{v_4\}$ &  \\ 
   & $Z_{1,0} = V$ \\ 
   & $Z_{1,1} = \{v_2,v_3,v_4\}$ \\ 
   & $Z_{1,2} = \{v_2,v_3,v_4\}$ \\
 \end{tabular} \hspace{0.5cm}
\begin{tabular}{ l l } 
  $X_2 = \{v_2,v_3,v_4\}$ &  \\ 
   & $Z_{2,0} = V$ \\ 
   & $Z_{2,1} = V$ \\ 
   \
 \end{tabular}
\end{center}

and finally $X_3 = X_4 = V$.

\end{example}

\subsection{Proof of Theorem~\ref{thm:2obj}}
%---------------------------------------------------------

We are now able to prove the main theorem of Section~\ref{sec:OneWindow}.

\begin{proof}[of Theorem~\ref{thm:2obj}]
Let $(G, \Obj)$ be a game with objective $\Obj = \Obj_1 \cap \Gamma$  for player~$1$ such that $\Obj_1 = \DFWMP$ and $\Gamma = \cap_{m=2}^{\n} \Obj_m$ such that $\forall m,\Obj_m = \Inf$ (resp. $\Obj_m = \LimInf$, $\Obj_m = \LimSup$, $\{\forall m \Obj_m = \Sup$ and $n$ is fixed $\}$). Let us first provide an algorithm in time polynomial in the size of the game, $\lambda$ and $\lceil \log_2(W) \rceil$) for \prob. We reduce this game to a game $(G', \Obj'_1 \cap \Gamma')$ (with  $\Gamma' = \cap_{m=2}^{\n} \Obj'_m$) such that $\Obj'_1 = \DFWMP$ and $\forall m, \Obj'_m \in \Reg$ thanks to the polynomial reduction of Proposition~\ref{prop:WISLtoWreg}.
Recall that the intersection of safety (resp. co-B\"uchi) objectives is a safety (resp. co-B\"uchi) objective. We thus have to study six cases of games: $\DFWMP \cap \Safe$ game (resp. $\DFWMP \cap \Reach$, $\DFWMP \cap \GenReach$ (with $n$ fixed), $\DFWMP \cap \Buchi$, $\DFWMP \cap \GenBuchi$, $\DFWMP \cap \CoBuchi$ game). The complexity and the memory requirements of each such game have been treated in Propositions~\ref{prop:WMPSafe}-\ref{prop:WMPcoBuchi} respectively (see also Table~\ref{table:poly}). 

By using Proposition~\ref{prop:WISLtoWreg} to come back to $(G, \Obj)$, it follows that the \prob\ for the game $(G, \Obj)$ can be solved in time polynomial in the size of the game, $\lambda$ and $\lceil \log_2(W) \rceil$ and that exponential memory strategies are sufficient for both players. More precisely, the algorithmic complexity and the memory requirements are those of Table~\ref{table:poly} where $|V|$ is replaced by $|V|+|E|$.

The \prob\ is ${\sf P}$-hard because is it is already ${\sf P}$-hard for games $(\G, \Obj_1) = (\G, \Obj_1 \cap \Obj_2)$ with $\Obj_1 = \DFWMP$ and $\Obj_2 = \Inf$ and the weights on the second dimension all equal to $0$ (see Theorem~\ref{thm:WMP}). Moreover, by Theorem~\ref{thm:WMP}, finite-memory strategies are necessary for both players.
\qed\end{proof}

\section{Intersection of objectives in $\ISL$} \label{sec:regular}

The aim of this section is to provide a refinement of Theorem~\ref{thm:general} for games $(\G, \Obj = \cap_{m=1}^{\n} \Obj_m)$ when no objective $\Obj_m$ is a $\DFWMP$ objective. In this case, we get the better complexity of ${\sf PSPACE}$-completeness  (instead of ${\sf EXPTIME}$-completeness) for the \prob ; nevertheless the two players still need exponential memory strategies to win (Theorem~\ref{thm:ISL}). We also study with precision the complexity and the memory requirements in terms of the objectives of $\ISL$ that appear in the intersection $\Obj = \cap_{m=1}^{\n} \Obj_m$ (Corollary~\ref{cor:ISL} below). Notice that the membership to ${\sf PSPACE}$ in Theorem~\ref{thm:ISL} could have been obtained from one result proved in \cite{AlurTM03} (in this paper, the objective is defined by an $\mathsf{LTL}$ formula, and it is proved that deciding the winner is in ${\sf PSPACE}$ when the objective is a Boolean combination of formulas of the form ``eventually $p$'' and ``infinitely often $p$''). We here propose a simple proof adapted to our context, that allows an easy study of the winning strategies as well as the identification of polynomial fragments. The proof roughly works as follows. We polynomially reduce the game $G$ to a game $G'$ by Proposition~\ref{prop:WISLtoWreg}, where the objective $\Obj'$ is the intersection of a generalized-reachability, a generalized-B\"uchi and a co-B\"uchi objective. To solve this new game, we first compute in $\sf PTIME$ the winning set $X_1$ for the generalized-B\"uchi $\cap$ co-B\"uchi objective. Then, we solve in $\sf PSPACE$ the generalized-reachability game in the subgame induced by $X_1$ (see Figure~\ref{fig:3steps}).

%\begin{theorem}
%Let $(\G, \Obj)$ be an $\n$-weighted game such that $\Obj$ is a Boolean combination of objectives in $\ISL$. Then the threshold problem is $\mathsf{PSPACE}$-complete. \fbox{memoires ?}
%\end{theorem}
%
%\begin{proof}
%First, the hardness follows from Theorem~\ref{thm:ISL}. Second, by Proposition~\ref{prop:WISLtoWreg}, one can see an $\Inf$ (resp. a $\Sup, \LimInf, \LimSup$) objective as a $\Safe$ (resp. $\Reach, \CoBuchi, \Buchi$) objective. Then, we have a $\mathsf{PSPACE}$ algorithm to solve the game $G$ since $\Obj \in \mathcal{B}(L_{\Box\Diamond}(\Pi) \cup L_{\Diamond,\wedge}(\Pi))$ (See \cite{AlurTM03}).
%\qed\end{proof}

\begin{theorem} \label{thm:ISL}
Let $(\G, \Obj)$ be an $\n$-weighted game such that $\Obj = \cap_{m=1}^{\n} \Obj_m$ with $\Obj_m \in \ISL$ for all $m$. Then the \prob\ is ${\sf PSPACE}$-complete (with an algorithm in $O(2^n \cdot (|V|+|E|))$ time) and exponential memory strategies are both necessary and  sufficient for both players. 
\end{theorem}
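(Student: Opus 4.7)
The plan is to prove Theorem~\ref{thm:ISL} by a layered reduction to three qualitative games, exploiting Proposition~\ref{prop:WISLtoWreg}. Apply that reduction first to obtain an equivalent game $(\G',\Obj')$ of polynomial size, where each $\Obj_m\in\ISL$ has been replaced by the corresponding qualitative objective in $\{\Safe,\Reach,\CoBuchi,\Buchi\}$. Grouping by type and using the closure properties stated in Section~\ref{sec:preliminaries} (intersections of safety or co-B\"uchi objectives stay safety or co-B\"uchi), rewrite $\Obj'$ as $\Safe(U)\cap\GenReach(U_1,\ldots,U_i)\cap\GenBuchi(V_1,\ldots,V_j)\cap\CoBuchi(W)$.

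Then I would peel off the three layers in order. Step 1: compute $Y=\WinG{1}{\Safe(U)}{\G'}$ in linear time; by Corollary~\ref{cor:reg}, $Y$ is $2$-closed, so the subgame $\G'[Y]$ is well defined, and any vertex outside $Y$ is losing for player~$1$. Step 2: in $\G'[Y]$, compute $X_1=\WinG{1}{\GenBuchi(V_1,\ldots,V_j)\cap\CoBuchi(W)}{\G'[Y]}$ in polynomial time by the last item of Theorem~\ref{thm:reg}; again Corollary~\ref{cor:reg} gives that $X_1$ is $2$-closed inside $\G'[Y]$. Step 3: solve $\GenReach(U_1\cap X_1,\ldots,U_i\cap X_1)$ in $\G'[X_1]$, which by Theorem~\ref{thm:reg} is in $\sf PSPACE$ and runs in $O(2^i\cdot(|V|+|E|))$ time. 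The claim to prove is that $\WinG{1}{\Obj'}{\G'}$ coincides with the winning set returned by Step 3.

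The main obstacle is establishing this identity. For the easy direction, from $v$ winning at Step 3, player~$1$ plays a finite-memory generalized-reachability strategy in $\G'[X_1]$ until a bounded position after which every target $U_k$ has been visited, and then switches to a winning strategy for $\GenBuchi\cap\CoBuchi$ in $\G'[X_1]$; since $X_1\subseteq Y\subseteq U$ and the play never leaves $X_1$, all four conjuncts of $\Obj'$ are satisfied. For the converse, I would design a three-level switching strategy for player~$2$: while the play is in $X_1$, use a winning strategy for $\overline{\GenReach}$ in $\G'[X_1]$; if player~$1$ voluntarily steps into $Y\setminus X_1$, switch to a winning strategy for $\overline{\GenBuchi\cap\CoBuchi}$ in $\G'[Y]$; if he further steps outside $Y$, switch to a winning strategy for $\overline{\Safe(U)}$ in $\G'$. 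The $2$-closedness of $Y$ and of $X_1$ in $\G'[Y]$ guarantees that such transitions happen only by player~$1$'s choice, so the strategies compose correctly and player~$2$ wins whichever conjunct player~$1$ fails to defend.

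The complexity bounds and memory requirements then follow. The announced time bound $O(2^n\cdot(|V|+|E|))$ is dominated by Step 3; $\sf PSPACE$ membership is obtained by running a $\sf PSPACE$ algorithm for generalized reachability on $\G'[X_1]$ (Steps~$1$ and~$2$ being polynomial). On the memory side, exponential memory is sufficient because only the reachability layer is costly: player~$1$ combines an exponential-memory reachability strategy with a memoryless $\GenBuchi\cap\CoBuchi$ strategy (after possibly a linear-memory bookkeeping for the generalized part), while player~$2$ combines an exponential-memory $\overline{\GenReach}$ strategy with memoryless strategies on the two higher layers. For $\sf PSPACE$-hardness and the necessity of exponential memory, I would specialize to $\Obj_m=\Sup$ on every dimension, which Proposition~\ref{prop:WISLtoWreg} turns into a generalized reachability game, already $\sf PSPACE$-hard and requiring exponential memory for both players by Theorem~\ref{thm:reg}.
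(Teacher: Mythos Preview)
Your proposal is correct and follows essentially the same approach as the paper: reduce to qualitative objectives via Proposition~\ref{prop:WISLtoWreg}, solve the prefix-independent part ($\GenBuchi\cap\CoBuchi$) first in polynomial time, then solve the $\GenReach$ part in the resulting $2$-closed subgame in $\sf PSPACE$. The only structural difference is that the paper collapses safety into co-B\"uchi up front (by making the forbidden vertices absorbing, as noted in Section~\ref{sec:preliminaries}), yielding a two-layer decomposition, whereas you keep $\Safe$ as a separate outermost layer and obtain three layers; both are fine and the correctness argument is the same $2$-closedness chaining you describe.

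Two small points. First, for $\sf PSPACE$-hardness and necessity of exponential memory you should invoke Proposition~\ref{prop:WRegtoWISL} (the reduction \emph{from} generalized reachability \emph{to} intersections of $\Sup$), not Proposition~\ref{prop:WISLtoWreg}; the paper does exactly this. Second, player~$1$'s winning strategy for $\GenBuchi\cap\CoBuchi$ is polynomial-memory rather than memoryless (Theorem~\ref{thm:reg}), but this does not affect the overall exponential bound.
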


\begin{proof}
Let us first prove that the \prob\ is in ${\sf PSPACE}$ and that both players have exponential memory winning strategies. By Proposition~\ref{prop:WISLtoWreg}, the game $(\G, \Obj)$ is polynomially equivalent to a game $(\G', \Obj')$ such that $\Obj' = \cap_{m=1}^{\n} \Obj'_m$ with $\Obj'_m \in \Reg$ for all $m$. We thus have to  prove that one can decide in ${\sf PSPACE}$ whether player~$1$ has a winning strategy for $\Obj'$ in $\G'$ from a given vertex $v_0$, and that both players have exponential memory winning strategies in $G'$. This proof needs several steps that are illustrated in Figure~\ref{fig:3steps}; to avoid heavy notation, we abusively rename $(\G', \Obj')$ as $(\G, \Obj)$.

\begin{figure}
\centering
\begin{tikzpicture}[line cap=round,line join=round,>=triangle 45,x=1.0cm,y=1.0cm,scale=2.3]
\clip(1.,-1.2) rectangle (6.,1.3);
\draw [line width=1.6pt] (4.034343502303791,-1.0014394386997696)-- (4.534343502303791,0.9985605613002306);
\draw [line width=1.6pt] (4.534343502303791,0.9985605613002306)-- (1.3076679251120324,0.9841750653629551);
\draw [line width=1.6pt] (1.3076679251120324,0.9841750653629551)-- (1.3076679251120324,-1.0158249346370454);
\draw [line width=1.6pt] (1.3076679251120324,-1.0158249346370454)-- (4.034343502303791,-1.0014394386997696);
\draw [dash pattern=on 1pt off 1pt] (4.534343502303791,0.9985605613002306)-- (5.683519089379931,1.);
\draw [dash pattern=on 1pt off 1pt] (5.683519089379931,1.)-- (5.683519089379931,-1.);
\draw [dash pattern=on 1pt off 1pt] (5.683519089379931,-1.)-- (4.034343502303791,-1.0014394386997696);
\draw [dash pattern=on 1pt off 1pt] (4.034343502303791,-1.0014394386997696)-- (4.534343502303791,0.9985605613002306);
\draw (2.7011940265172916,0.9946550337572669)-- (2.5,-1.);
\draw [dash pattern=on 1pt off 1pt] (4.399803040719828,-0.42609509849238486) circle (0.13cm);
\draw(2.7863657625393405,-0.6111833139272619) circle (0.13cm);
\draw(2.30799749545189,0.7063109109759723) circle (0.13cm);
\draw(4.127186779181149,0.7430092538754098) circle (0.13cm);
\draw [->] (2.4340224864550573,0.7382123473488329) -- (2.8979821450266607,0.8655872851357469);
\draw [->] (2.311167284471002,0.5763495611923758) -- (2.30799749545189,0.28165865742533835);
\draw [->] (2.7931381409686695,-0.4813598380440074) -- (2.7863657625393405,-0.1733036595539279);
\draw [->] (2.9163441288688876,-0.6135548717068768) -- (3.2423813012785176,-0.6532347589280368);
\draw [->] (4.255675099875729,0.7232418199223974) -- (4.7353307472289705,0.7220387722185884);
\draw [->] (3.9991328184509074,0.7205998107476177) -- (3.6707109468619565,0.7024556491927401);
\draw [->,dash pattern=on 1pt off 1pt] (4.506928490865746,-0.3524463515170664) -- (4.850668396341488,-0.20066242068155452);
\draw [->,dash pattern=on 1pt off 1pt] (4.5118544886903384,-0.49200771494562523) -- (4.824455294270462,-0.7458949437589116);
\draw (1.6072672143433187,0.29117552106652417) node[anchor=north west] {$Y_1$};
\draw (2.9523110844786107,0.20486254544286897) node[anchor=north west] {$Y_2$};
\draw (4.966280515697229,-0.5072195034522862) node[anchor=north west] {$X_2$};
\draw (2.247421783552094,1.2628949683935758) node[anchor=north west] {$\mathbf{X_1}$};
\draw (2.2,-0.7)-- (2.4,-0.7);
\draw (2.4,-0.7)-- (2.4,-0.5);
\draw (2.4,-0.5)-- (2.2,-0.5);
\draw (2.2,-0.5)-- (2.2,-0.7);
\draw (2.8,0.3)-- (3.,0.3);
\draw (3.,0.3)-- (3.,0.5);
\draw (3.,0.5)-- (2.8,0.5);
\draw (2.8,0.5)-- (2.8,0.3);
\draw (3.7,-0.5)-- (3.9,-0.5);
\draw (3.9,-0.5)-- (3.9,-0.3);
\draw (3.9,-0.3)-- (3.7,-0.3);
\draw (3.7,-0.3)-- (3.7,-0.5);
\draw [dash pattern=on 1pt off 1pt] (4.5,0.2)-- (4.7,0.2);
\draw [dash pattern=on 1pt off 1pt] (4.7,0.2)-- (4.7,0.4);
\draw [line width=0.4pt,dash pattern=on 1pt off 1pt] (4.7,0.4)-- (4.5,0.4);
\draw [dash pattern=on 1pt off 1pt] (4.5,0.4)-- (4.5,0.2);
\draw [->] (4.5,0.3) -- (4.2,0.4);
\draw [->] (4.7,0.3) -- (4.927540460073136,0.2990304259450714);
\draw [->] (3.8090593036946485,-0.3) -- (3.9266257446539066,-0.10426218614825172);
\draw [->] (3.7,-0.4043659959653071) -- (3.434703004850695,-0.4012721422558529);
\draw [->] (2.305446400899934,-0.5) -- (2.2992586934810255,-0.20017165114133123);
\draw [->] (2.2,-0.6) -- (1.8,-0.6);
\draw [->] (3.,0.3990667405960931) -- (3.2544064991614032,0.3990667405960931);
\draw [->] (2.8,0.3990667405960931) -- (2.4703969705798694,0.16889880559967932);
\end{tikzpicture}
\caption{Illustration of the proof} 
\label{fig:3steps}

\end{figure}

First recall that we can replace safety objectives by co-B\"uchi objectives and that the intersection of co-B\"uchi objectives is a co-B\"uchi objective. Moreover we can assume that each kind of objective in $\{\Reach,\Buchi,\CoBuchi\}$ appears among the $\Obj_m$'s (for instance the absence of a reachability objective can be replaced by the presence of the objective $\Reach(V)$). Therefore one can assume $\Obj$ is the intersection of one generalized reachability objective, one generalized B\"uchi objective and one co-B\"uchi objective:
\begin{eqnarray} \label{eq:ObjReg}
\Obj &=& \GenReach(U_1, \ldots, U_{j-1}) \cap \GenBuchi(U_{j}, \ldots, U_{i-1}) \cap \CoBuchi(U_i)
\end{eqnarray}
with $1 < j < i$.
%Without lost of generality, we can suppose that each vertex of $U_j$ is absorbing, that is, with a unique outgoing edge being a self loop. Indeed player~$1$ has a winning strategy in this new game iff he has a winning strategy in the former game.

%In a first step, we compute the winning set $X_1 = \WinG{1}{\Safe(U_j)}{G}$ of player~$1$, and the winning set $X_2 = V \ssetminus X_1$ of player~$2$ (see Figure~\ref{fig:3steps}). By Theorem~\ref{thm:reg}, memoryless strategies suffice for both players, and $X_1, X_2$ can be computed in linear time. Moreover, $X_1$ is 2-closed and $X_2$ is 1-closed by Corollary~\ref{cor:reg} (recall that each vertex of $V \ssetminus U_j$ is absorbing). We can thus consider the game structure $\G' = \G[X_1]$ induced by $X_1$ and the subsets $U'_k = U_k \cap X_1$ restricted to $X_1$, for all $k$. 

In a first step, we compute the winning set $X_1 = \WinG{1}{\GenBuchi(U_{j}, \ldots, U_{i-1}) \cap \CoBuchi(U_i)}{G}$ of player~$1$. By Theorem~\ref{thm:reg}, $X_1$ can be computed in polynomial time, memoryless strategies suffice for player~$2$ whereas player~$1$ needs polynomial memory strategies. Moreover, by Corollary~\ref{cor:reg}, $X_1$ is 2-closed and $X_2 = V \ssetminus X_1$ is 1-closed (see Figure~\ref{fig:3steps}). We can thus consider the subgame structure $\G' = \G[X_1]$.

In a second step, we compute the winning set $Y_1 = \WinG{1}{\GenReach(U_1, \ldots, U_{j-1})}{G'}$ of player~$1$. By Theorem~\ref{thm:reg}, deciding whether a vertex is in $Y_1$ is in $\sf PSPACE$ and exponential memory strategies suffice for both players. By Corollary~\ref{cor:reg}, 
$Y_2 = X_1 \ssetminus Y_1$ is 1-closed in $G'$, and $Y_2 \cup X_2$ is 1-closed in $G$.

Let us show that 
\begin{eqnarray}
Y_1 &\subseteq& \WinG{1}{\Obj}{G}. \label{eq:incl1} \\
X_2 \cup Y_2 &\subseteq& \WinG{2}{\overline{\Obj}}{G},  \label{eq:incl2}
\end{eqnarray}
and that both players have exponential memory strategies.
As $X_2$, $Y_2$, and $Y_1$ form a partition of $V$, the two previous inclusions are equalities, thus showing that the \prob\ is in $\sf PSPACE$ (more precisely, if we recall the reduction of Proposition~\ref{prop:WISLtoWreg} used at the beginning of the proof, with an algorithm in $O(2^n \cdot (|V|+|E|))$ time). 

We begin with inclusion (\ref{eq:incl1}). Let us consider the next exponential memory strategy $\sigma_1$ for player~$1$. Given $v \in Y_1$, player~$1$ first uses an exponential memory winning strategy in the game $G'$  for the objective $\GenReach(U_1, \ldots, U_{j-1})$. Then as soon as all the sets $U_1, \ldots, U_{j-1}$ have been visited, he switches to a polynomial memory strategy that is winning for the objective $\GenBuchi(U_{j}, \ldots, U_{i-1}) \cap \CoBuchi(U_i)$ in the game $G$. Let $\sigma_2$ be any strategy of player~$2$ in $\G$ and let $\rho = \Out(v,\sigma_1,\sigma_2)$. As $X_1$ is 2-closed in $G$, $\rho$ cannot leave $X_1$ (hence $\sigma_2$ can be seen as any strategy in  $G'$). Therefore, $\rho$ is winning for the objective $\GenReach(U_1, \ldots, U_{j-1})$ in $G$. When each $U_1, \ldots, U_{j-1}$ has been visited by $\rho$, as it cannot leave $X_1$ and by definition of $\sigma_1$, $\rho$ is also winning in $G$ for the objective $\GenBuchi(U_{j}, \ldots, U_{i-1}) \cap \CoBuchi(U_i)$. This completes the proof of inclusion (\ref{eq:incl1}) with exponential memory winning strategies for player~$1$.

Let us turn to inclusion (\ref{eq:incl2}). $(i)$ Given $v \in X_2$, player~$2$ has a memoryless winning strategy for the objective $\overline{\GenBuchi(U_{j}, \ldots, U_{i-1}) \cap \CoBuchi(U_i)}$ in the game $G$. This strategy is thus winning for the objective $\overline{\Obj}$. $(ii)$ Given $v \in Y_2$, player~$2$ has an exponential winning strategy for the objective $\overline{\GenReach(U_1, \ldots, U_{j-1})}$ in $G'$. As $Y_2 \cup X_2$ is 1-closed in $G$, against any strategy of player~$1$ in $G$, 
\begin{itemize}
\item either the play stays in $Y_2$ and it is then winning for player~$2$ for the objective $\overline{\GenReach(U_1, \ldots, U_{j-1})}$ in $\G$ and thus also for the objective $\overline{\Obj}$, 
\item or the play goes to $X_2$ in which case player~$2$ uses a strategy like in $(i)$, and the play is again winning for player~$2$. 
\end{itemize}
This complete the proof of inclusion (\ref{eq:incl2}) with exponential memory winning strategies for player~$2$.

It remains to prove that the \prob\ is ${\sf PSPACE}$-hard and that exponential memory strategies are necessary for both players. This follows from Proposition~\ref{prop:WRegtoWISL} and the fact that deciding the winner in generalized reachability games is ${\sf PSPACE}$-complete and exponential memory strategies are necessary for both players (Theorem~\ref{thm:reg}).
\qed\end{proof}

In the next corollary, we present several refinements (on the number of occurrences of each kind of measure) of Theorem~\ref{thm:ISL}. When there is at most one $\Sup$, we get a polynomial fragment and in certain cases, players can play memoryless.

\begin{corollary} \label{cor:ISL}
Let $(\G, \Obj)$ be an $\n$-weighted game such that $\Obj = \cap_{m=1}^{\n} \Obj_m$ with $\Obj_m \in \ISL$ for all $m$.  
\begin{itemize}
\item If there is at most one $m$ such that $\Obj_m = \Sup$, then the \prob\ is $\sf P$-complete (with an algorithm in  $O(n^2 \cdot (|V|+|E|) \cdot |E|)$ time), player~$1$ can play with polynomial memory strategies and player $2$ can play memoryless.
\item If there is no $m$ such that $\Obj_m = \Sup$ and there is exactly at most one $m$ such that $\Obj_m =  \LimSup$, then the \prob\ is $\sf P$-complete (with an algorithm in $O((|V|+|E|) \cdot |E|)$ time) and player $1$ can play memoryless.
\item If there is exactly one $m$ such that $\Obj_m = \Sup$ and there is no $m$ such that $\Obj_m \in \{\LimInf,\LimSup\}$, then the \prob\ is $\sf P$-complete (with an algorithm in $O(|V|+|E|)$ time) and player $1$ can play memoryless.  
\end{itemize}
\end{corollary}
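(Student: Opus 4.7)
The plan is to feed each of the three subcases through the polynomial reduction of Proposition~\ref{prop:WISLtoWreg}, which turns $(\G,\Obj)$ into an unweighted game $(\G',\Obj')$ of size $|V|+|E|$ vertices and $2|E|$ edges whose components live in $\{\Safe,\Reach,\CoBuchi,\Buchi\}$ according to the dictionary $\Inf \mapsto \Safe$, $\Sup \mapsto \Reach$, $\LimInf \mapsto \CoBuchi$, $\LimSup \mapsto \Buchi$. Since intersections of identical types collapse ($\bigcap \Safe = \Safe$, $\bigcap \CoBuchi = \CoBuchi$, $\bigcap \Buchi = \GenBuchi$, $\bigcap \Reach = \GenReach$), and since a $\Safe$ conjunct can be absorbed into any $\CoBuchi$ conjunct by making the unsafe vertices absorbing, the hypotheses of the three items translate into syntactically much simpler reduced objectives. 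The memory transfer guaranteed by Proposition~\ref{prop:WISLtoWreg} (memoryless goes to memoryless, polynomial to polynomial) lets me read off the memory bounds announced in the corollary directly from the strategies I exhibit in $(\G',\Obj')$.

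For the first item, the reduced objective is of the form $\Reach(U)\cap\GenBuchi\cap\CoBuchi$ (if no $\Sup$ is present, set $\Reach(U)=\Reach(V')$). I would apply the two-phase pattern used in the proof of Theorem~\ref{thm:ISL}: first compute the winning set $X_1$ for $\GenBuchi\cap\CoBuchi$ by Theorem~\ref{thm:reg} in time $O(n^2\cdot |V'|\cdot |E'|)$, obtaining a polynomial-memory strategy for player~$1$ and a memoryless one for player~$2$; then solve the plain reachability game on $\G'[X_1]$ in linear time with memoryless strategies for both. Concatenating reach-then-recurrent yields a polynomial-memory winning strategy for player~$1$, while player~$2$ assembles a memoryless strategy by playing, on the static partition $V'\setminus X_1$, $X_1\setminus Y_1$, $Y_1$, his memoryless strategy against $\GenBuchi\cap\CoBuchi$ on $V'\setminus X_1$ or the memoryless safety strategy preventing $U$ on $X_1\setminus Y_1$ (which is $1$-closed by Corollary~\ref{cor:reg}). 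The second item reduces to $\Buchi\cap\CoBuchi$ (allowing $\Buchi(V')$ when no $\LimSup$ occurs), so by Theorem~\ref{thm:reg} it is solved in time $O(|V'|\cdot |E'|)=O((|V|+|E|)\cdot |E|)$ with memoryless strategies for both players. The third item reduces to $\Safe\cap\Reach$, solved by first computing $Z_1=\Win{1}{\Safe}$ and then $\Attr{1}{U\cap Z_1}{\G'[Z_1]}$, each in $O(|V'|+|E'|)=O(|V|+|E|)$ time with memoryless strategies.

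$\sf P$-hardness for each item follows from the hardness of the corresponding unidimensional qualitative problem together with Proposition~\ref{prop:WRegtoWISL}: a reachability (resp. B\"uchi) game is encoded as a single $\Sup$ (resp. $\LimSup$) game, and each of the three fragments already contains one of these $\sf P$-hard unidimensional classes. The main technical care will be the bookkeeping for the first item: I must check that combining a memoryless reach strategy with a polynomial-memory strategy for $\GenBuchi\cap\CoBuchi$ (activated only once $U$ has been visited) stays within polynomial memory, and that player~$2$'s piecewise-defined memoryless strategy is indeed well-defined on each fragment of the partition. Both are straightforward because the partition is computed statically before play begins and each piece is itself closed under the relevant player's moves, but verifying that no hidden polynomial factor creeps in through the reduction of Proposition~\ref{prop:WISLtoWreg} is the step I would scrutinize most carefully.
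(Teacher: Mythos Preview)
Your proposal is correct and follows essentially the same route as the paper: reduce via Proposition~\ref{prop:WISLtoWreg}, collapse $\Safe$ into $\CoBuchi$ (or keep it separate in the third item), then apply the two-phase ``solve the prefix-independent part first, then the reachability/safety part on the induced subgame'' pattern of Theorem~\ref{thm:ISL}, reading off the memory bounds from Theorem~\ref{thm:reg}. The only cosmetic difference is that you spell out player~$2$'s piecewise memoryless strategy in the first item more explicitly than the paper does, and in the third item you write $\Attr{1}{U\cap Z_1}{\G'[Z_1]}$ where the paper writes $\WinG{1}{\Reach(U_1)}{\G'}$; these are the same set.
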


The proof of this corollary uses refinements of the proof of Theorem~\ref{thm:ISL}, except for the last item that needs a separate proof. 

\begin{proof}
We begin with the proof of the first two items. As in the proof of Theorem~\ref{thm:ISL}, we replace safety objectives by co-B\"uchi objectives and the intersection of co-B\"uchi objectives by one co-B\"uchi objective. As there is at most one $m$ such that $\Obj_m = \Sup$, we get an intersection like in (\ref{eq:ObjReg}) where $\GenReach(U_1, \ldots, U_{j-1})$ is replaced by $\Reach(U_1)$ or it disappears. With the previous proof of Theorem~\ref{thm:ISL} and by Theorem~\ref{thm:reg}, we can now compute $\WinG{1}{\Obj}{G} = Y_1$ in polynomial time with an algorithm in $O(n^2 \cdot (|V|+|E|) \cdot |E|)$ time. $\sf{P}$-hardness follows from Theorem~\ref{thm:reg} and reduction of Proposition~\ref{prop:WRegtoWISL}.

Let us study the strategies of both players.
\begin{itemize}
\item Suppose that there is at most one $m$ such that $\Obj_m = \Sup$. In the previous proof of inclusion (\ref{eq:incl2}), when $v \in Y_2$, player~$2$ can now play memoryless by Theorem~\ref{thm:reg}. 
\item Suppose that there is no $m$ such that $\Obj_m = \Sup$ and there is exactly at most one $m$ such that $\Obj_m = \LimSup$. Then (\ref{eq:ObjReg}) is replaced by $\Obj =  \Buchi(U_{1}) \cap \CoBuchi(U_2)$. With the proof of Theorem~\ref{thm:ISL} and by Theorem~\ref{thm:reg}, we see that we have an algorithm in $O((|V|+|E|) \cdot |E|)$ time and that player~$1$ can now play memoryless.
\end{itemize}

We now turn to the third item of the corollary: suppose that there is exactly one $m$ such that $\Obj_m = \Sup$ and there is no $m$ such that $\Obj_m \in \{\LimInf,\LimSup\}$. In this case, we cannot replace safety objectives by co-B\"uchi objectives as in the first part of the proof. Nevertheless, we can replace the intersection of safety objectives by one safety objective, and the objective $\Obj$ is now equal to $\Reach(U_1) \cap \Safe(U_2)$. In a first step, we compute the winning set $X_1 = \WinG{1}{\Safe(U_2)}{G}$ of player~$1$ for the objective $\Safe(U_2)$, and let $X_2 = V \setminus X_1$. As $X_1$ is $2$-closed by Corollary~\ref{cor:reg}, we can consider the subgame structure $G' = G[X_1]$. In a second step, we compute the winning set $Y_1 = \WinG{1}{\Reach(U_1)}{G'}$ of player~$1$, and let $Y_2 = X_1 \setminus Y_1$. Let us now explain that with arguments similar to the ones used to prove Theorem~\ref{thm:ISL}, we have that $Y_1 = \WinG{1}{\Obj}{G}$ and that memoryless strategies are sufficient for player~$1$. Indeed, since $X_1$ is $2$-closed and as $X_1 \subseteq U_2$, any memoryless winning strategy of player~$1$ for the objective $\Reach(U_1)$ in $G'$ is also winning for $\Obj$ in $G$. This shows that $Y_1 \subseteq \WinG{1}{\Obj}{G}$. It remains to show that $X_2 \cup Y_2 \subseteq \WinG{2}{\overline{\Obj}}{G}$. If $v \in X_2$, player~$2$ plays a memoryless winning strategy for the objective $\overline{\Safe(U_2)}$ in $\G$, he is thus winning for $\overline{\Obj}$ from $v$. If $v \in Y_2$, player~$2$ plays a memoryless winning strategy for the objective $\overline{\Reach(U_1)}$ in $\G'$, and if the play goes to $X_2$, he plays a memoryless winning strategy for the objective $\overline{\Safe(U_2)}$ in $\G$. We have again that player~$2$ is winning for $\overline{\Obj}$ from $v$. The proposed algorithm is in $O(|V|+|E|)$ time.

Finally, $\sf P$-hardness of the \prob\ follows from Proposition~\ref{prop:WRegtoWISL} and Theorem~\ref{thm:reg}.
\qed\end{proof}

\begin{table}
\begin{center}
%\begin{tabular}{|l|l|l|l|l|l|l|}
\begin{tabular}{|c|c|c|c|c|c|c|c|}
\hline
$\Inf$ & $\Sup$  & $\LimInf$ & $\LimSup$ & Complexity class &    Algorithmic complexity    & Player~$1$ memory & Player~$2$ memory \\
\hline
any    & any        & any          & any            & $\sf PSPACE$-complete & $O(2^n \cdot (|V|+|E|))$  & exponential memory      & exponential memory \\
any    & $\leq 1$ & any          & any            & $\sf P$-complete           &  $O(n^2 \cdot (|V|+|E|) \cdot |E|)$ & polynomial memory      & memoryless \\
any    & 0            & any          & $\leq 1$      & $\sf P$-complete           & $O( (|V|+|E|) \cdot |E|)$ & memoryless        & memoryless \\
any    & $1$       & 0              & 0                & $\sf P$-complete           & $O(|V|+|E|)$ & memoryless        & memoryless \\
\hline
\end{tabular}
\end{center}
\caption{Overview of properties for the intersection of objectives in $\ISL$}
\label{table:ISL}
\end{table}

Table~\ref{table:ISL} summarizes all the possibilities provided by Theorem~\ref{thm:ISL} (first row of the table) and Corollary~\ref{cor:ISL} (next rows of the table). In Example~\ref{ex:limitcases}, we show that the results of Corollary~\ref{cor:ISL} are optimal with respect to the required memory (no memory, finite memory) for the winning strategies.
%. any other possible mixing of the objectives $\Obj_m \in \ISL$, finite-memory strategies are required for both players as illustrated in Example~\ref{ex:limitcases}. 

\begin{example} \label{ex:limitcases}
First, we come back to the game structure $\G$ depicted on Figure~\ref{Ex:Firstexample}, where we only consider the second and the third dimensions.
Assume $\Obj = \Sup(0) \cap \LimSup(0)$. Then, $v_0$ is winning for player~$1$ but memory is required to remember if player~$1$ has visited the edge $(v_1,v_1)$. The same argument holds for $\Obj = \Sup(0) \cap \LimInf(0)$ and $\Obj = \Sup(0) \cap \Sup(0)$. This example with objective $\Obj = \Sup(0) \cap \LimSup(0)$ indicates that player~$1$ cannot win memoryless in a game as in the second row of Table~\ref{table:ISL}. This example with objective $\Obj = \Sup(0) \cap \LimSup(0)$ (resp. $\Obj = \Sup(0) \cap \LimInf(0)$, $\Obj = \Sup(0) \cap \Sup(0)$) also shows that player~$1$ needs memory to win if $[1,0,0]$\footnote{This triple refers to the second, third and fourth columns of Table~\ref{table:ISL}.} in the last row of Table~\ref{table:ISL} is replaced by $[1,0,1]$ (resp. $[1,1,0]$, $[2,0,0]$).

Now, assume that additionally $v_2 \in V_1$ in the previous game structure $\G$ (that is, $\G$ is a one-player game) and let $\Obj = \LimSup(0) \cap \LimSup(0)$. Again, $v_0$ is winning but player~$1$ needs memory since he has to alternate between $v_1$ (and take the self loop) and $v_2$. This shows that in the third row of Table~\ref{table:ISL}, if $[\leq 1]$ is replaced by $[2]$ then player~$1$ needs memory to win.

Finally, consider the game depicted on Figure~\ref{ex:Player2memory}. Let $\Obj = \Sup(0) \cap \Sup(0)$. Vertex $v_0$ is losing for player~$1$ (i.e. winning for player~$2$), but player~$2$ needs memory since he has to know which edge player~$1$ took from $v_0$ to counter him by taking the edge with the same vector of weights from $v_3$. This shows that in the second row of Table~\ref{table:ISL}, if $[\leq 1]$ is replaced by $[2]$ then player~$2$ needs memory to win.

\begin{figure}[h]
\centering
  \begin{tikzpicture}[scale=4]
    \everymath{\scriptstyle}
    \draw (0,0) node [circle, draw] (A) {$v_0$};
    \draw (0.5,0.25) node [circle, draw] (B) {$v_1$};
    \draw (0.5,-0.25) node [circle, draw] (C) {$v_2$};
    \draw (1,0) node [rectangle, inner sep = 5pt, draw] (D) {$v_3$};
    \draw (1.5,0.25) node [circle, draw] (E) {$v_4$};
    \draw (1.5,-0.25) node [circle, draw] (F) {$v_5$};
    
    \draw[->,>=latex] (A) to node[above,midway,sloped] {$(-1,0)$} (B);
    \draw[->,>=latex] (A) to node[below,midway,sloped] {$(0,-1)$} (C);
     \draw[->,>=latex] (B) to node[above,midway,sloped] {$(-1,-1)$} (D);
    \draw[->,>=latex] (C) to node[below,midway,sloped] {$(-1,-1)$} (D);
    \draw[->,>=latex] (D) to node[above,midway,sloped] {$(-1,0)$} (E);
    \draw[->,>=latex] (D) to node[below,midway,sloped] {$(0,-1)$} (F);
    
    \draw[->,>=latex] (E) .. controls +(45:0.4cm) and +(315:0.4cm) .. (E) node[right,midway] {$(-1,-1)$};
    \draw[->,>=latex] (F) .. controls +(45:0.4cm) and +(315:0.4cm) .. (F) node[right,midway] {$(-1,-1)$};
	\path (0,0.2) edge [->,>=latex] (A);    
    
    \end{tikzpicture}
\caption{Example where player $2$ needs memory}
\label{ex:Player2memory}
\end{figure}
\end{example}

\section{Beyond the intersection of objectives} \label{sec:beyond}
%----------------------------------------------------------

Up to now, we have studied the \prob\ for $\n$-weighted games $(\G,\Obj)$ such that $\Obj$ is the intersection of $n$ objectives among $\WISL$. In this section, we go further by considering any Boolean combination of objectives instead of an intersection. The next theorem is a generalization of Theorem~\ref{thm:general}. It shows that, given a Boolean combination in a DNF or CNF form of objectives in $\WISL$, the complexity and the memory requirements do not blow up: the \prob\ is still $\mathsf{EXPTIME}$-complete and exponential memory strategies are sufficient for both players. Notice that one can assume w.l.o.g. that the dimension of the game is equal to the number of objectives that appear in the Boolean combination (by making copies of components of the weight function).

\begin{theorem}\label{thm:boolWISL}
Let $(\G, \Obj)$ be an $nd$-weighted game such that $\Obj = \cup_{k = 1}^d \cap_{m = 1}^n \Obj_{k,m}$ with $\Obj_{k,m} \in \WISL$ for all $k,m$. Then, the \prob\ is $\mathsf{EXPTIME}$-complete (with an algorithm in $O(n^{d(d+2)} \cdot |V|^{d+1} \cdot |E| \cdot (\lambda^2 \cdot W)^{nd(d+2)})$ time), and exponential memory strategies are both sufficient and necessary for both players. The same result holds when $\Obj = \cap_{k = 1}^d \cup_{m = 1}^n \Obj_{k,m}$. 
\end{theorem}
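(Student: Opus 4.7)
My plan is to generalize Theorem~\ref{thm:general} by casting the Boolean combination as a Rabin (DNF case) or, dually, a Streett (CNF case) game of exponential size. Take the DNF case first. I apply Proposition~\ref{prop:WISLtoBC} to obtain an unweighted game $(G',\Obj')$ of size $O(|V|\cdot (\lambda^2 W)^{nd})$, with $\Obj' = \bigcup_{k=1}^{d} \bigcap_{m=1}^{n} \Omega_{k,m}$ and each $\Omega_{k,m} \in \{\Buchi, \CoBuchi\}$. Since the intersection of co-B\"uchi objectives is a co-B\"uchi objective, each conjunct rewrites as $\GenBuchi(B_{k,1}, \ldots, B_{k,b_k}) \cap \CoBuchi(C_k)$ for suitable vertex sets $B_{k,j}, C_k \subseteq V'$. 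I then introduce $d$ independent ``progress counters'' (one per disjunct, in the style of Lemma~\ref{lem:GenBtoB}) to convert each $\GenBuchi(B_{k,1}, \ldots, B_{k,b_k})$ into a single $\Buchi(L_k)$. The result is a game $(G'', \Obj'')$ of size $O(|V| \cdot n^{d} \cdot (\lambda^2 W)^{nd})$ whose objective $\Obj'' = \bigcup_{k=1}^{d} \bigl(\Buchi(L_k) \cap \CoBuchi(U_k)\bigr)$ is a Rabin condition with $d$ pairs.

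I then invoke classical Rabin-game algorithms: deciding the winner takes time $O(|E''| \cdot |V''|^{d+1} \cdot d!)$, with memoryless winning strategies for player~$1$ and memory of size $d!$ for player~$2$. Substituting the sizes of $V''$ and $E''$ yields the claimed bound $O(n^{d(d+2)} \cdot |V|^{d+1} \cdot |E| \cdot (\lambda^2 W)^{nd(d+2)})$. Translating winning strategies back through the two reductions, player~$1$ needs only to remember the state of $G''$, i.e., memory of size $O(n^{d} \cdot (\lambda^2 W)^{nd})$, while player~$2$ needs an additional multiplicative factor of $d!$; both are exponential in the input.

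The CNF case is handled by the dual argument. After Proposition~\ref{prop:WISLtoBC} I again reach an exponential game, this time with a CNF objective over $\{\Buchi,\CoBuchi\}$ atoms; the complement (from player~$2$'s point of view) is a DNF over $\{\Buchi,\CoBuchi\}$ atoms, because $\overline{\Buchi(S)} = \CoBuchi(V\setminus S)$ and conversely. The previous DNF analysis, applied to player~$2$, computes his winning set, and by determinacy of these Borel games this also determines player~$1$'s winning set; the memory roles of the two players are simply swapped (player~$1$ now requires $d!$ memory on top of the game states, player~$2$ is memoryless in $G''$), and both bounds remain exponential. The ${\sf EXPTIME}$-hardness and the necessity of exponential memory are inherited directly from Theorem~\ref{thm:general}, recovered as the special case $d=1$ of DNF (equivalently $n=1$ of CNF).

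The principal technical obstacle I expect is the bookkeeping of the $d$ simultaneous counter constructions: the counters for different conjuncts must advance and reset independently, the co-B\"uchi parts must lift cleanly through the product (it helps that intersections of co-B\"uchi are co-B\"uchi, so each conjunct contributes a single $\CoBuchi(U_k)$), and the final objective must be genuinely a Rabin condition in the form $\bigcup_k (\Buchi(L_k) \cap \CoBuchi(U_k))$---so that player~$1$'s winning strategies are memoryless in $G''$, which is what guarantees that their translations to $G$ use only exponential memory. Once this is set up correctly, the quoted exponents drop out of a routine substitution into the complexity of the Rabin solver, and the CNF case follows symbolically by taking complements at the level of B\"uchi/co-B\"uchi atoms and applying the DNF result.
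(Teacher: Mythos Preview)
Your proposal is correct and follows essentially the same approach as the paper: reduce via Proposition~\ref{prop:WISLtoBC} to a DNF of generalized B\"uchi $\cap$ co-B\"uchi conditions, apply $d$ independent counters in the style of Lemma~\ref{lem:GenBtoB} to obtain a Rabin condition with $d$ pairs, and invoke the Piterman--Pnueli bound for Rabin games; the CNF case is handled dually from player~$2$'s viewpoint. The only cosmetic difference is that the paper packages the CNF reduction through Corollary~\ref{cor:WISLtoBCc} (complementing before the reduction rather than after), which amounts to the same thing since the atom-level construction commutes with complementation.
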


To prove this theorem, we need to establish the next property, which is a corollary of Proposition~\ref{prop:WISLtoBC}.

\begin{corollary} \label{cor:WISLtoBCc}
Each $\n$-weighted game $(\G,\Obj)$ with $\Obj = \cap_{m=1}^{\n} \overline{\Obj}_m$ such that for all $m$, $\Obj_m \in \WISL$ can be exponentially reduced to a game $(\G',\Obj')$ with $O(|V| \cdot (\lambda^2 \cdot W)^n)$ vertices and $O(|E| \cdot (\lambda^2 \cdot W)^n)$ edges, and $\Obj' = \cap_{m=1}^{\n} \overline{\Obj}_m'$ such that for all $m$, $\Obj'_m \in \{\Buchi,\CoBuchi\}$.
 A memoryless (resp.  finite-memory) strategy in $G'$ transfers to a finite-memory strategy in $G$.
\end{corollary}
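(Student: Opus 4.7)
The plan is to invoke Proposition~\ref{prop:WISLtoBC} essentially verbatim and leverage the fact that the reduction given there is objective-agnostic at the level of the state-space construction. More precisely, for each objective $\Obj_m \in \WISLmath$, the construction augments $V$ with finitely many bits (for $\Obj_m \in \ISL$) or with a pair \emph{(sum,~step-counter)} plus a sink $\beta$ (for $\Obj_m = \DFWMP$); the augmented game structure $\G'$ does not depend on $\Obj_m$ itself but only on the type of measure being tracked. Proposition~\ref{prop:WISLtoBC} establishes a canonical bijection between plays $\rho$ of $\G$ starting at $v_0$ and plays $\rho'$ of $\G'$ starting at the corresponding augmented vertex $v_0'$, together with the property $\rho \in \Obj_m \iff \rho' \in \Obj'_m$, where $\Obj'_m \in \{\Buchi,\CoBuchi\}$ is explicitly determined by the kind of $\Obj_m$.

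From this, the corollary is immediate. First, I would apply the construction of Proposition~\ref{prop:WISLtoBC} to $(\G,\Obj_1,\ldots,\Obj_n)$, producing $(\G',\Obj'_1,\ldots,\Obj'_n)$ with the size bounds $O(|V| \cdot (\lambda^2 \cdot W)^n)$ vertices and $O(|E| \cdot (\lambda^2 \cdot W)^n)$ edges. Complementing the bijection characterisation componentwise gives $\rho \in \overline{\Obj}_m \iff \rho' \in \overline{\Obj}'_m$; intersecting over $m \in \{1,\ldots,n\}$ yields $\rho \in \cap_{m=1}^n \overline{\Obj}_m \iff \rho' \in \cap_{m=1}^n \overline{\Obj}'_m$. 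Hence player~$1$ wins $(\G, \cap_m \overline{\Obj}_m)$ from $v_0$ iff he wins $(\G', \cap_m \overline{\Obj}'_m)$ from $v_0'$, proving the reduction. The strategy-transfer claim is inherited unchanged from Proposition~\ref{prop:WISLtoBC}: the underlying game structure $\G'$ is the same as in that proposition, so any memoryless or finite-memory strategy on $\G'$ lifts to a finite-memory strategy on $\G$, with the memory tracking exactly the augmented component of $V'$.

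There is essentially no obstacle here; the only thing worth noting explicitly in the write-up is that $\overline{\Buchi(U')} = \CoBuchi(V' \setminus U')$ and $\overline{\CoBuchi(U')} = \Buchi(V' \setminus U')$, so each complemented objective $\overline{\Obj}'_m$ is itself a B\"uchi or co-B\"uchi objective, ensuring that the target class $\cap_{m=1}^n \overline{\Obj}'_m$ with $\overline{\Obj}'_m \in \{\Buchi,\CoBuchi\}$ indeed lies in the stated fragment.
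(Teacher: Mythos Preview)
Your proposal is correct and follows essentially the same approach as the paper: both apply the construction of Proposition~\ref{prop:WISLtoBC} unchanged, complement each $\Obj'_m$, and observe that $\overline{\Buchi}$ and $\overline{\CoBuchi}$ are again co-B\"uchi and B\"uchi respectively. Your write-up is in fact more detailed than the paper's, which simply states that the result ``immediately follows'' by ``taking the opposite of objectives $\Obj'_m$''.
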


The proof immediately follows from the one of Proposition~\ref{prop:WISLtoBC}: one just needs to take the opposite of objectives $\Obj'_m$ of Proposition~\ref{prop:WISLtoBC}. Note that as the opposite of a B\"uchi (resp. co-B\"uchi) objective is a co-B\"uchi (resp. B\"uchi) objective, the game $(G',\Obj')$ of Corollary~\ref{cor:WISLtoBCc} is a generalized-B\"uchi $\cap$ co-B\"uchi game, as in Proposition~\ref{prop:WISLtoBC}.

\begin{proof}[of Theorem~\ref{thm:boolWISL}]
First, by Theorem~\ref{thm:general}, we know that for DNF/CNF Boolean combinations of objectives, the \prob\ is $\mathsf{EXPTIME}$-hard and that both players require exponential memory strategies.
Let us show that the \prob\ is in $\mathsf{EXPTIME}$ and that exponential memory strategies are sufficient for both players.

$(i)$ Suppose first that the objective is in DNF form, i.e. $\Obj = \cup_{m = 1}^d \cup_{k = 1}^n \Obj_{k,m}$. Using the construction of Proposition~\ref{prop:WISLtoBC}, we can reduce the game $(G,\Obj)$ to an unweighted game $(G',\Obj')$ with $|V'| = O(|V| \cdot (\lambda^2 \cdot W)^{nd})$ vertices and $|E'| = O(|E| \cdot (\lambda^2 \cdot W)^{nd})$ edges, such that $\Obj' = \cup_{k=1}^d \cap_{m=1}^n \Obj'_{k,m}$ with $\Obj'_{k,m} \in \{\Buchi,\CoBuchi\}$. Then, $\Obj'$ is the disjunction of $d$ generalized B\"uchi $\cap$ co-B\"uchi objectives. Now, it is easy to adapt the polynomial reduction of generalized B\"uchi games to B\"uchi games to obtain from $(G',\Obj')$ a game $(G'',\Obj'')$ with $(n^d \cdot |V'|)$ vertices and $(n^d \cdot |E'|)$ edges, such that $\Obj''$ is the disjunction of $d$ B\"uchi $\cap$ co-B\"uchi objectives (one just needs to have $d$ counters instead of one and to define properly the new sets for the co-B\"uchi objectives in $G''$).

Therefore, we finally get a game $(G'',\Obj'')$ with $O(n^d \cdot |V| \cdot (\lambda^2 \cdot W)^{nd})$ vertices and $O(n^d \cdot |E| \cdot (\lambda^2 \cdot W)^{nd})$ edges such that $\Obj''$ is a Rabin objective with $d$ pairs. Let us recall that a Rabin game with $d$ pairs can be solved in $O(x^{d+1} \cdot y \cdot dd!)$ time where $x$ is the number of vertices and $y$ is the number of edges, that memoryless strategies (resp. strategies with memory in $O(d!)$) are sufficient for player~$1$ (resp. player~$2$) \cite{PitermanP06}. %\cite{ChatterjeeHP07}. 
Thus, $G''$ can be solved in $O(n^{d(d+2)} \cdot |V|^{d+1} \cdot |E| \cdot (\lambda^2 \cdot W)^{nd(d+2)})$ time. Finally, one can deduce from %Lemma~\ref{lem:GenBtoB} and 
Proposition~\ref{prop:WISLtoBC} that exponential memory are sufficient for both players.

$(ii)$ Now, suppose that the objective is in CNF form, i.e. $\Obj = \cap_{m = 1}^d \cup_{k = 1}^n \Obj_{k,m}$. Let us take the point of view of player~$2$. We have $\overline{\Obj} = \cup_{m = 1}^d \cap_{k = 1}^n \overline{\Obj}_{k,m}$ with $\Obj_{k,m} \in \WISL$. As done in $(i)$, now using the construction of Corollary~\ref{cor:WISLtoBCc}, we can reduce the game $(G,\Obj)$ to an unweighted game $(G',\Obj')$ such that $\overline{\Obj'}$ is a disjunction of $d$ generalized B\"uchi $\cap$ co-B\"uchi objectives. Using the same arguments as in $(i)$, it suffices to solve (with the point of view of player~$2$) a Rabin objective with $d$ pairs. The result follows as the game is determined.
\qed\end{proof}

\begin{remark}
Whereas the \prob\ for Boolean combinations of $\underline{\mathsf{MP}}$ and $\overline{\mathsf{MP}}$ is undecidable~\cite{Velner15}, it is here decidable for Boolean combinations of objectives in $\WISL$. Furthermore, one can show that this problem is in $\sf EXPSPACE$ using a result of~\cite{AlurTM03} as follows. With constructions of Proposition~\ref{prop:WISLtoBC} and Corollary~\ref{cor:WISLtoBCc}, we build an exponential game $(G',\Obj')$ where $\Obj'$ is a Boolean combination of B\"uchi and co-B\"uchi objectives. Thanks to  \cite{AlurTM03}, games with an objective defined as a Boolean combination of formulas of the form ``infinitely often $p$'' can be solved in $\sf PSPACE$. Since the game $G'$ is exponential, we obtain the desired result.
\end{remark}

\section{Parameterized complexity} \label{sec:para}
%---------------------------------------------------------------------------

We here provide a complete complexity analysis of our results when we fix the number $n$ (resp. $nd$) of dimensions when we consider intersections (resp. DNF/CNF Boolean combinations) of objectives in $\WISL$.

\begin{theorem}
Let $(G, \Obj)$ be a multi-weighted game such that
\begin{itemize}
\item[$(a)$] $\Obj = \cap_{m=1}^n \Obj_m$, with $\Obj_m \in \WISL$,
\item[$(b)$] $\Obj = \cap_{m=1}^n \Obj_m$, with $\Obj_m \in \ISL$,
\item[$(c)$] $\Obj = \cup_{k=1}^d \cap_{m=1}^n \Obj_{k,m}$, with $\Obj_{k,m} \in \WISL$.
\item[$(d)$] $\Obj = \cap_{k=1}^d \cup_{m=1}^n \Obj_{k,m}$, with $\Obj_{k,m} \in \WISL$.
\end{itemize}
If $n$ is fixed (as well as $nd$ in $(c)$ and $(d)$), then the \prob\ is
\begin{itemize}
\item[$(a,c,d)$] $\sf{EXPTIME}$-complete (with a pseudo-polynomial algorithm in $O(|V| \cdot |E| \cdot (\lambda^2 \cdot W)^{2n})$ time for $(a)$ and in $O(|V|^{d+1} \cdot |E| \cdot (\lambda^2 \cdot W)^{nd(d+2)})$ time for $(c,d)$.
\item[$(b)$] $\sf{P}$-complete (with a polynomial algorithm in $O(|V|+|E|)$ time).

\end{itemize}
\end{theorem}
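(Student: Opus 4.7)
The plan is to read off the four complexity bounds directly from the main theorems of the paper: Theorem~\ref{thm:general} for case $(a)$, Theorem~\ref{thm:ISL} for case $(b)$, and Theorem~\ref{thm:boolWISL} for cases $(c)$ and $(d)$. Once the parameter $n$ (respectively $nd$) is treated as a constant, every dependency on $n$ or $d$ in the stated running times collapses, and what remains is a bound polynomial in $|V|$, $|E|$, $\lambda$ and $W$.

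For case $(a)$, the algorithm of Theorem~\ref{thm:general} runs in $O(|V| \cdot |E| \cdot (\lambda^2 \cdot W)^{2n})$; fixing $n$ turns the exponent $2n$ into a constant, yielding a pseudo-polynomial algorithm of the announced form. For cases $(c)$ and $(d)$, I would start from the bound $O(n^{d(d+2)} \cdot |V|^{d+1} \cdot |E| \cdot (\lambda^2 \cdot W)^{nd(d+2)})$ provided by Theorem~\ref{thm:boolWISL}; since $nd$ being fixed forces both $n \leq nd$ and $d \leq nd$ to be bounded by a constant, the factor $n^{d(d+2)}$ and every exponent become constant, giving the stated pseudo-polynomial bound. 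In all three cases, because the weight $W$ is encoded in binary, pseudo-polynomial time is exponential in the input size, so the algorithm places the problem in $\sf EXPTIME$. For case $(b)$, I would invoke Theorem~\ref{thm:ISL}: its algorithm runs in $O(2^n \cdot (|V| + |E|))$, and fixing $n$ absorbs the factor $2^n$ into the constant, leaving the linear bound $O(|V| + |E|)$ and membership in $\sf P$.

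For the matching lower bounds, I would reuse hardness results already established earlier in the paper. The $\sf EXPTIME$-hardness for $(a)$, $(c)$, $(d)$ is inherited from Theorem~\ref{thm:WMP}, which states that multidimensional $\DFWMP$ games are $\sf EXPTIME$-hard already at dimension $n=2$; since that instance is a conjunction (and hence a trivial DNF and CNF) of two $\DFWMP$ objectives in $\WISL$, it is simultaneously an instance of $(a)$, $(c)$, and $(d)$ with fixed $n$ (and fixed $nd$). The $\sf P$-hardness of $(b)$ with fixed $n$ is inherited from the $\sf P$-hardness of reachability games (Theorem~\ref{thm:reg}): by Proposition~\ref{prop:WRegtoWISL}, a reachability game is polynomially equivalent to a one-dimensional $\Sup$ game, which falls into $(b)$ with $n=1$.

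The single conceptual point requiring explicit justification, and what I would flag as the only mild obstacle, is that the notation ``$nd$ fixed'' in $(c)$ and $(d)$ really does bound $n$ and $d$ individually, since otherwise the factor $n^{d(d+2)}$ in the running time of Theorem~\ref{thm:boolWISL} would not be constant. This is immediate from $1 \leq n$ and $1 \leq d$ together with $n \cdot d$ being a constant, so there is no genuine difficulty; the remainder of the proof is a direct substitution into the bounds of the earlier theorems.
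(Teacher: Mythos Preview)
Your proposal is correct and follows essentially the same approach as the paper's own proof: you invoke Theorems~\ref{thm:general}, \ref{thm:ISL}, and~\ref{thm:boolWISL} for the upper bounds, observe that fixing $n$ (and $nd$) turns the exponential factors into constants, and derive the hardness results from Theorem~\ref{thm:WMP} (for $(a,c,d)$) and from Theorem~\ref{thm:reg} via Proposition~\ref{prop:WRegtoWISL} (for $(b)$). Your additional remarks---that pseudo-polynomial time yields $\sf EXPTIME$ membership because $W$ (and $\lambda$) are binary-encoded, and that fixing $nd$ bounds $n$ and $d$ individually---make explicit points the paper leaves tacit, but the underlying argument is identical.
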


\begin{proof}
For items $(a)$, $(c)$ and $(d)$, the $\sf{EXPTIME}$-hardness follows from Theorem~\ref{thm:WMP}. The algorithms given in Theorems~\ref{thm:general} and \ref{thm:boolWISL} are now pseudo-polynomial since $n$ (as well as $nd$ in items $(c, d)$) is a fixed-parameter. Indeed, the algorithm for $(a)$ is in $O(|V| \cdot |E| \cdot (\lambda^2 \cdot W)^{2n})$ time and the algorithm for $(c, d)$ is in $O(|V|^{d+1} \cdot |E| \cdot (\lambda^2 \cdot W)^{nd(d+2)})$ time.

Thanks to Theorem~\ref{thm:ISL}, we deduce for item $(b)$ a polynomial algorithm in $O(2^n \cdot (|V|+|E|))$ time. $\sf{P}$-hardness follows from Theorem~\ref{thm:reg} and reduction of Proposition~\ref{prop:WRegtoWISL}.
\qed\end{proof}
\section{Conclusion} \label{sec:conclu}

In this paper, we have studied games with Boolean combinations of heterogeneous measures in $\WISL$. The \prob\ is ${\sf EXPTIME}$-complete for DNF/CNF Boolean combinations  and exponential memory strategies are necessary and sufficient for both players. We have focused on games with intersections of such measures and given a detailed study of the complexity and the memory requirements: $\sf{EXPTIME}$-completess and memory requirements still hold when the measures are taken among $\WISL$, and we get $\sf{PSPACE}$-completeness when $\DFWMP$ is not considered. 
In case of intersections of one $\DFWMP$ objective with any number of objectives of one kind among $\ISL$ (this number must be fixed in case of objectives $\Sup$), we get $\sf P$-completeness (for polynomial windows) and pseudo-polynomial strategies for both players. In case of no occurrence of $\DFWMP$ measure, we have proposed several refinements for which we got $\sf{P}$-completeness and lower memory requirements.

Let us conclude with some future work. $(i)$ Knowing that Boolean combination of mean-payoff objectives is undecidable, we have initiated in this paper the study of games with heterogeneous combination of $\omega$-regular measures: $\Inf$, $\Sup$, $\LimInf$, $\LimSup$, and $\DFWMP$. We aim at extending the study to measures that are not omega-regular, such as mean-payoff, discounted-sum, energy. $(ii)$ The \prob\ for arbitrary (instead of DNF/CNF) Boolean combinations of objectives in $\WISL$ is in ${\sf EXPSPACE}$ and ${\sf EXPTIME}$-hard (see Section~\ref{sec:general}). The exact complexity should be settled. $(iii)$ The study of Section~\ref{sec:OneWindow} could be nicely extended with the intersection of one WMP and several other objectives of different kinds. $(iv)$ A window version of parity objectives have been introduced in~\cite{ChatterjeeHH09} and studied for unidimensional games. It could be interesting to investigate multidimensional games mixing this $\omega$-regular objective with the objectives of this paper.

\medskip
\textbf{Acknowledgments. } We would like to thank Mickael Randour for his availability and his valued help.

\bibliographystyle{abbrv}
\bibliography{biblio}

\begin{thebibliography}{10}

\bibitem{AlurTM03}
R.~Alur, S.~{La Torre}, and P.~Madhusudan.
\newblock Playing games with boxes and diamonds.
\newblock In R.~M. Amadio and D.~Lugiez, editors, {\em {CONCUR} 2003 -
  Concurrency Theory, 14th International Conference, Marseille, France,
  September 3-5, 2003, Proceedings}, volume 2761 of {\em Lecture Notes in
  Computer Science}, pages 127--141. Springer, 2003.

\bibitem{Beeri80}
C.~Beeri.
\newblock On the membership problem for functional and multivalued dependencies
  in relational databases.
\newblock {\em {ACM} Trans. Database Syst.}, 5(3):241--259, 1980.

\bibitem{BouyerFLMS08}
P.~Bouyer, U.~Fahrenberg, K.~G. Larsen, N.~Markey, and J.~Srba.
\newblock Infinite runs in weighted timed automata with energy constraints.
\newblock In F.~Cassez and C.~Jard, editors, {\em Formal Modeling and Analysis
  of Timed Systems, 6th International Conference, {FORMATS} 2008, Saint Malo,
  France, September 15-17, 2008. Proceedings}, volume 5215 of {\em Lecture
  Notes in Computer Science}, pages 33--47. Springer, 2008.

\bibitem{BrimCDGR11}
L.~Brim, J.~Chaloupka, L.~Doyen, R.~Gentilini, and J.~Raskin.
\newblock Faster algorithms for mean-payoff games.
\newblock {\em Formal Methods in System Design}, 38(2):97--118, 2011.

\bibitem{ChatterjeeDHR10}
K.~Chatterjee, L.~Doyen, T.~A. Henzinger, and J.~Raskin.
\newblock Generalized mean-payoff and energy games.
\newblock In K.~Lodaya and M.~Mahajan, editors, {\em {IARCS} Annual Conference
  on Foundations of Software Technology and Theoretical Computer Science,
  {FSTTCS} 2010, December 15-18, 2010, Chennai, India}, volume~8 of {\em
  LIPIcs}, pages 505--516. Schloss Dagstuhl - Leibniz-Zentrum fuer Informatik,
  2010.

\bibitem{Chatterjee0RR15}
K.~Chatterjee, L.~Doyen, M.~Randour, and J.~Raskin.
\newblock Looking at mean-payoff and total-payoff through windows.
\newblock {\em Inf. Comput.}, 242:25--52, 2015.

\bibitem{ChatterjeeH14}
K.~Chatterjee and M.~Henzinger.
\newblock Efficient and dynamic algorithms for alternating b{\"{u}}chi games
  and maximal end-component decomposition.
\newblock {\em J. {ACM}}, 61(3):15:1--15:40, 2014.

\bibitem{ChatterjeeHH09}
K.~Chatterjee, T.~A. Henzinger, and F.~Horn.
\newblock Finitary winning in omega-regular games.
\newblock {\em {ACM} Trans. Comput. Log.}, 11(1), 2009.

\bibitem{ChatterjeeRR14}
K.~Chatterjee, M.~Randour, and J.~Raskin.
\newblock Strategy synthesis for multi-dimensional quantitative objectives.
\newblock {\em Acta Inf.}, 51(3-4):129--163, 2014.

\bibitem{AlfaroF07}
L.~de~Alfaro and M.~Faella.
\newblock An accelerated algorithm for 3-color parity games with an application
  to timed games.
\newblock In W.~Damm and H.~Hermanns, editors, {\em Computer Aided
  Verification, 19th International Conference, {CAV} 2007, Berlin, Germany,
  July 3-7, 2007, Proceedings}, volume 4590 of {\em Lecture Notes in Computer
  Science}, pages 108--120. Springer, 2007.

\bibitem{DegorreDGRT10}
A.~Degorre, L.~Doyen, R.~Gentilini, J.~Raskin, and S.~Toru{\'{n}}czyk.
\newblock Energy and mean-payoff games with imperfect information.
\newblock In A.~Dawar and H.~Veith, editors, {\em Computer Science Logic, 24th
  International Workshop, {CSL} 2010, 19th Annual Conference of the EACSL,
  Brno, Czech Republic, August 23-27, 2010. Proceedings}, volume 6247 of {\em
  Lecture Notes in Computer Science}, pages 260--274. Springer, 2010.

\bibitem{DziembowskiJW97}
S.~Dziembowski, M.~Jurdzinski, and I.~Walukiewicz.
\newblock How much memory is needed to win infinite games?
\newblock In {\em Proceedings, 12th Annual {IEEE} Symposium on Logic in
  Computer Science, Warsaw, Poland, June 29 - July 2, 1997}, pages 99--110.
  {IEEE} Computer Society, 1997.

\bibitem{EmersonJ91}
E.~A. Emerson and C.~S. Jutla.
\newblock Tree automata, mu-calculus and determinacy (extended abstract).
\newblock In {\em 32nd Annual Symposium on Foundations of Computer Science, San
  Juan, Puerto Rico, 1-4 October 1991}, pages 368--377. {IEEE} Computer
  Society, 1991.

\bibitem{FijalkowH13}
N.~Fijalkow and F.~Horn.
\newblock Les jeux d'accessibilit{\'{e}} g{\'{e}}n{\'{e}}ralis{\'{e}}e.
\newblock {\em Technique et Science Informatiques}, 32(9-10):931--949, 2013.

\bibitem{2001automata}
E.~Gr{\"{a}}del, W.~Thomas, and T.~Wilke, editors.
\newblock {\em Automata, Logics, and Infinite Games: {A} Guide to Current
  Research [outcome of a Dagstuhl seminar, February 2001]}, volume 2500 of {\em
  Lecture Notes in Computer Science}. Springer, 2002.

\bibitem{HunterPR15}
P.~Hunter, G.~A. P{\'{e}}rez, and J.~Raskin.
\newblock Looking at mean-payoff through foggy windows.
\newblock In B.~Finkbeiner, G.~Pu, and L.~Zhang, editors, {\em Automated
  Technology for Verification and Analysis - 13th International Symposium,
  {ATVA} 2015, Shanghai, China, October 12-15, 2015, Proceedings}, volume 9364
  of {\em Lecture Notes in Computer Science}, pages 429--445. Springer, 2015.

\bibitem{Immerman81}
N.~Immerman.
\newblock Number of quantifiers is better than number of tape cells.
\newblock {\em J. Comput. Syst. Sci.}, 22(3):384--406, 1981.

\bibitem{JurdzinskiLS15}
M.~Jurdzinski, R.~Lazic, and S.~Schmitz.
\newblock Fixed-dimensional energy games are in pseudo-polynomial time.
\newblock In M.~M. Halld{\'{o}}rsson, K.~Iwama, N.~Kobayashi, and B.~Speckmann,
  editors, {\em Automata, Languages, and Programming - 42nd International
  Colloquium, {ICALP} 2015, Kyoto, Japan, July 6-10, 2015, Proceedings, Part
  {II}}, volume 9135 of {\em Lecture Notes in Computer Science}, pages
  260--272. Springer, 2015.

\bibitem{Martin75}
D.~A. Martin.
\newblock Borel determinacy.
\newblock {\em Annals of Mathematics}, 102(2):363--371, 1975.

\bibitem{PitermanP06}
N.~Piterman and A.~Pnueli.
\newblock Faster solutions of rabin and streett games.
\newblock In {\em 21th {IEEE} Symposium on Logic in Computer Science {(LICS}
  2006), 12-15 August 2006, Seattle, WA, USA, Proceedings}, pages 275--284.
  {IEEE} Computer Society, 2006.

\bibitem{PnueliR89}
A.~Pnueli and R.~Rosner.
\newblock On the synthesis of a reactive module.
\newblock In {\em Conference Record of the Sixteenth Annual {ACM} Symposium on
  Principles of Programming Languages, Austin, Texas, USA, January 11-13,
  1989}, pages 179--190. {ACM} Press, 1989.

\bibitem{Velner15}
Y.~Velner.
\newblock Robust multidimensional mean-payoff games are undecidable.
\newblock In A.~M. Pitts, editor, {\em Foundations of Software Science and
  Computation Structures - 18th International Conference, FoSSaCS 2015, Held as
  Part of the European Joint Conferences on Theory and Practice of Software,
  {ETAPS} 2015, London, UK, April 11-18, 2015. Proceedings}, volume 9034 of
  {\em Lecture Notes in Computer Science}, pages 312--327. Springer, 2015.

\bibitem{VelnerC0HRR15}
Y.~Velner, K.~Chatterjee, L.~Doyen, T.~A. Henzinger, A.~M. Rabinovich, and
  J.~Raskin.
\newblock The complexity of multi-mean-payoff and multi-energy games.
\newblock {\em Inf. Comput.}, 241:177--196, 2015.

\bibitem{VelnerR11}
Y.~Velner and A.~Rabinovich.
\newblock Church synthesis problem for noisy input.
\newblock In M.~Hofmann, editor, {\em Foundations of Software Science and
  Computational Structures - 14th International Conference, {FOSSACS} 2011,
  Held as Part of the Joint European Conferences on Theory and Practice of
  Software, {ETAPS} 2011, Saarbr{\"{u}}cken, Germany, March 26-April 3, 2011.
  Proceedings}, volume 6604 of {\em Lecture Notes in Computer Science}, pages
  275--289. Springer, 2011.

\bibitem{Zielonka98}
W.~Zielonka.
\newblock Infinite games on finitely coloured graphs with applications to
  automata on infinite trees.
\newblock {\em Theor. Comput. Sci.}, 200(1-2):135--183, 1998.

\bibitem{ZwickP96}
U.~Zwick and M.~Paterson.
\newblock The complexity of mean payoff games on graphs.
\newblock {\em Theor. Comput. Sci.}, 158(1{\&}2):343--359, 1996.

\end{thebibliography}

\end{document}